\newtheorem{lemma}{Lemma}
\newtheorem{definition}{Definition}
\newtheorem{corollary}{Corollary}
\newtheorem{proposition}{Proposition}
\renewcommand{\L}{\mathbb{L}}
\renewcommand{\emph}{\textbf}
\newcommand{\Prop}{\mathsf{Prop}}
\newcommand{\Nom}{\mathsf{Nom}}
\newcommand{\Cnom}{\mathsf{Cnom}}
\newcommand{\noma}{\mathbf{a}}
\newcommand{\cnomx}{\mathbf{x}}
\renewcommand{\L}{\mathcal{L}}
\newcommand{\Lmu}{\mathcal{L}_C}
\newcommand{\Ag}{\mathsf{Ag}}
\newcommand{\nomb}{\mathbf{b}}
\newcommand{\nomx}{\mathbf{x}}
\newcommand{\val}[1]{[\![{#1}]\!]}
\newcommand{\descr}[1]{(\![{#1}]\!)}
\renewcommand{\phi}{\varphi}
\title{Toward an Epistemic-Logical Theory of Categorization}
\author{Willem Conradie
\institute{Department of Pure and Applied\\ Mathematics,\\ University of Johannesburg\\
Johannesburg, South Africa}
%\institute{School of Computer Science and Engineering\\
%University of New South Wales\thanks{A fine university.}\\
%Sydney, Australia}
\email{wconradie@uj.ac.za}
\and
Sabine Frittella %\qquad\qquad Yet S. Else
\institute{INSA Centre Val de Loire,\\
Univ. Orl\' eans, LIFO EA 4022\\
Bourges, France}
%\email{\quad is@gmail.com \quad\qquad somebody@else.org}
\email{sabine.frittella@insa-cvl.fr}
\and
Alessandra Palmigiano
\institute{Faculty of Technology, Policy and\\ Management,\\ Delft University of Technology\\
Delft, the Netherlands}
\institute{Department of Pure and Applied\\ Mathematics,\\ University of Johannesburg\\
Johannesburg, South Africa}
\email{A.Palmigiano@tudelft.nl}
\and
Michele Piazzai \qquad\qquad Apostolos Tzimoulis
\institute{Faculty of Technology, Policy and Management,\\ Delft University of Technology\\
Delft, the Netherlands}
\email{\quad m.piazzai@tudelft.nl \quad\qquad a.tzimoulis-1@tudelft.nl}
\and
Nachoem M.\ Wijnberg
\institute{Amsterdam Business School,\\ University of Amsterdam\\
Amsterdam, the Netherlands}
\institute{Faculty of Economic and Financial\\ Sciences and Faculty of Management,\\ University of Johannesburg\\
Johannesburg, South Africa}
\email{n.m.wijnberg@uva.nl}
}
\begin{document}
\maketitle

\begin{abstract}
Categorization systems are widely studied in psychology, sociology, and organization theory as information-structuring devices which are critical to decision-making processes. %for their power to structure information in the mind of rational agents.
In the present paper, we introduce a sound and complete  epistemic  logic of categories and agents' categorical perception. The Kripke-style semantics of this logic is given in terms of data structures based on two domains: one domain representing objects (e.g.~market products) and one domain representing the features of the objects which are relevant to the agents' decision-making.
 We use this framework to discuss and propose logic-based formalizations of some core concepts from psychological, sociological, and organizational research in categorization theory. %such as prototypical objects, prototypical features, and category-spanning, using lattice-based epistemic modal logic with a `common knowledge'-type operation.
\end{abstract}

\section{Introduction}

Categories (understood as types of collective identities for broad classes of objects or of agents) are the most basic cognitive tools, and are key to the use of language, the construction of knowledge and identity, and the formation of agents' evaluations and decisions. The literature on categorization is expanding rapidly, motivated by--and  in connection with--the theories and methodologies of a wide range of fields in the social sciences and AI. For instance, in linguistics, categories are central to the mechanisms of grammar generation \cite{Cr91};  in AI, classification techniques are core to pattern recognition, data mining, text mining and knowledge discovery in databases;  in sociology, categories are used to explain the construction of social identity \cite{Je00};  in management science, categories are used to predict how products and producers will be perceived and evaluated by consumers and investors \cite{hsu2011typecasting,wijnberg-gap,wijnberg2011classification,PD16,CPT17}.

In \cite{CFPPTW16}, we proposed the  framework of a positive (i.e.~negation-free and implication-free) normal multi-modal logic as an epistemic logic of categories and agents' categorical perception, and discussed its algebraic and Kripke-style semantics. In the present paper, we introduce  a simpler and more general framework than \cite{CFPPTW16}, in which the (rather technical) restrictions on the Kripke-style models of \cite{CFPPTW16} are dropped. We use this logical framework to formalize core notions developed and applied in the fields mentioned above, with a focus on those relevant to management science, as a step towards building systematic connections between modern categorization theory and epistemic logic. %For instance, we propose to capture the notion  of {\em prototypical} member of a category in terms of common knowledge.

\paragraph{Structure of the paper.} In Section \ref{sec:foundations formal}, we briefly review the main views on the foundations of categorization theory together with the formal approaches inspired by some of these views. In Section \ref{sec:logics}, we discuss the basic framework of the {\em epistemic logic of categories}. We introduce its refined Kripke-style semantics and axiomatization, together with two language enrichments involving a common knowledge-type construction and hybrid-style nominal (and co-nominal) variables, respectively. In Section \ref{sec:CKFP}, we discuss a number of core categorization-theoretic notions from business science and our proposed formalizations of them. In Section \ref{sec:CCL}, we discuss further directions. In the appendix (Section \ref{sec:Appendix}), 
we discuss the
soundness and completeness of the logics introduced in Section \ref{sec:logics}.
%More technical background is relegated to  Appendix \ref{Appendix}, while the proofs of some technical lemmas can be found in Appendix \ref{Appendix:TechLemmas}.

\section{Categorization: foundations and formal approaches}
\label{sec:foundations formal}
In the present section we review the main views, insights, and approaches to the foundations of categorization theory and to the formal models capturing these. Our account is necessarily incomplete. We refer the reader to \cite{CL05} for an exhaustive overview.
\subsection{Extant foundational approaches} The literature on the foundations of categorization theory displays a variety of definitions, theories, models, and methods, each of which capturing some key facets of categorization. The {\em classical} theory of categorization \cite{SM81} goes back to Aristotle, and is based on the insight that all members of a category share some fundamental features which define their membership. Accordingly, categorization is viewed as a deductive process of reasoning with necessary and sufficient conditions, resulting in categories with sharp boundaries, which are represented equally well by any of their members. The classical view has inspired influential approaches in machine learning such as {\em conceptual clustering} \cite{Fi87}.
However, this view runs into difficulties when trying to accommodate a new object or entity which would intuitively be part of a given category but does not share all the defining features of the category. Other difficulties, e.g.~providing an exhaustive list of defining features, unclear cases, and the existence of members of given categories which are judged to be better representatives of the whole class than others, motivated the introduction of {\em prototype} theory \cite{La99,Ro05}.
This theory regards categorization as the inductive process of finding the best match between the features of an object and those of the closest prototype(s).
Prototype theory addresses the above mentioned problems of the classical theory by relaxing the requirement that membership be decided through the satisfaction of an exhaustive list of features. It allows for unclear cases and embraces the empirically verified intuition that people regard membership in most categories as a matter of degrees, with certain members being more central (or prototypical) than others. (For instance, robins are regarded as prototypical birds, while  penguins are not.)
To account for how an ex-ante prototype is generated in the mind of agents, the {\em exemplar} theory \cite{SM02} was proposed, according to which individuals make category judgments by comparing new stimuli with instances already stored in memory (the ``exemplars''). 
However, the existence of instances or prototypes of a given category presupposes that this category has already been defined. Hence, both the prototype and the exemplar view run into a circularity problem. Moreover, it has been argued that {\em similarity-based} theories of ca\-te\-gorization (such as the prototype and the exemplar view) fail to address the problem of explaining `why we have the categories we have', or, in other words, why certain categories seem to be more cogent and coherent than others. 
Even more fundamentally, similarity might be imposed rather than discovered (do things belong in the same category because they are similar, or are they similar because they belong in the same category?), i.e.~similarity might be the effect of conceptual coherence rather than its cause. Pivoting on the notion of coherence for category-formation, the {\em theory-based} view on categorization \cite{MM85} posits that categories arise in connection with theories (broadly understood so as to include also informal explanations). For instance, ice, water and steam can be grouped together in the same category on the basis of the theory of phases in physical chemistry. The coherence of categories proceeds from the coherence of the theories on which they are based. This view of categorization allows one to group together entities which would be scored as dissimilar using different methods; for instance, it allows to group together a gold watch, the school report of one's grandfather, and the %naked
%W: What is `naked ownership'? Nudists allowed?
ownership of a piece of land in the category of ``things one wants one's children to inherit'', which is based on one's theory of what family is. However, the theory-based view does not account for the intuition that categories themselves are the building blocks of theory-formation, which again results in a circularity problem. Summing up, the extant views on categorization (the classical \cite{SM81}, prototype \cite{La99,Ro05}, exemplar \cite{SM02}, and theory-based \cite{MM85}) are difficult to reconcile and merge into a satisfactory overarching theory accommodating all the insights into categorization that researchers in the different fields have been separately developing.  The present paper is one of the first steps of a research program aimed at clarifying notions developed independently, and at developing a common ground which can hopefully facilitate the build-up of such a theory. %The fundamental insight of this project is that the circularity problems incurred by the prototype, exemplar and theory-based approaches are due to the fact that they - as the classical approach - do not take the dynamics of categories as an integral part of what categories are and do. A dynamic view on categories and category-formation reflects the insights that researchers from various fields of social science and AI have been recently developing. As discussed in part B1, our approach places these dynamics at the core of the foundations of categorization theory.

\subsection{Extant formal approaches}
\label{ssec:formal}
\paragraph{Conceptual spaces.} The formal approach to the representation of categories and concepts which is perhaps the most widely adopted in social science and management science is  the one  introduced by G\"{a}rdenfors, which is based on  {\em conceptual spaces} \cite{Ga04}. These are multi-dimensional geometric structures, the components of which (the quality dimensions) are intended to represent basic features -- e.g.~colour, pitch, temperature, weight, time, price -- by which objects (represented as points in the product space of these dimensions) can be meaningfully compared. Each dimension is endowed with its appropriate geometric (e.g.~metric, topological) structure. Concept-formation in conceptual spaces is modelled according to a {\em similarity-based} view of concepts. Specifically, if each dimension of a conceptual space has a metric, these metrics translate in a notion of distance between the objects represented in the space, which models their similarity, so that the closer their distance, the more similar they are. {\em Concepts} (i.e.~formal categories) are represented as {\em convex} sets of the conceptual space\footnote{A subset is {\em convex} if it includes the segments between any two of its points. In the Euclidian plane, squares are convex while stars are not.}. The geometric center of any such concept is a natural interpretation of the prototype of that concept. %Conversely, any finite set of points (understood as the set of prototypes) gives rise to a tiling of the space, in which each point is assigned to its closest prototype(s). If the conceptual space is endowed with a Euclidian metrics, the tiles so generated are convex sets.

\paragraph{Formal Concept Analysis.} A very different approach, Formal Concept Analysis (FCA) \cite{Wille}, is a method of data analysis based on Birkhoff's representation theory of complete lattices \cite{DaveyPriestley2002}. In FCA, databases are represented as {\em formal contexts}, i.e.~structures $\left( A, X, I \right)$ such that $A$ and $X$ are sets, and $I\subseteq A\times X$ is a binary relation. Intuitively, $A$ is understood as a collection of {\em objects}, $X$ as a collection of {\em features}, and for any object $a$ and feature $x$, the tuple $\left(a, x\right)$ belongs to $I$ exactly when object $a$ has feature $x$.
Every formal context $\left(A, X, I \right)$ can be associated with the collection of its {\em formal concepts}, i.e.~the tuples $\left(B, Y \right)$ such that $B\subseteq A$, $Y\subseteq X$, and $B\times Y$ is a maximal rectangle included in $I$. The set $B$ is the {\em extent} of the formal concept $\left( B, Y \right)$, and $Y$ is its {\em intent}.
Because of maximality, the extent of a formal concept uniquely identifies and is identified by its intent.  Formal concepts can be partially ordered; namely, $\left(B, Y\right)$ is a {\em subconcept} of $\left(C, Z\right)$ exactly when $B\subseteq C$, or equivalently, when $Z\subseteq Y$. Ordered in this way, the concepts of a formal context form a {\em complete lattice} (i.e., the least upper bound and the greatest lower bound of every collection of formal concepts exist), and by Birkhoff's theorem, every complete lattice is isomorphic to some concept lattice.
The link established by FCA between complete lattices and the formalization of concepts (or categories) captures an aspect of categories which is very much highlighted in the categorization theory literature. Namely,  categories never occur in isolation; rather, they arise in the context of {\em categorization systems} (e.g.\ taxonomies), which are typically organized in hierarchies of super- (i.e.\ less specified) and sub- (i.e.\ more specified) categories.
 While most approaches identify concepts with their extent, in FCA, intent and extent of a concept are treated on a par,
 i.e., the intent of a concept is just as essential as its extent.  While FCA has tried to connect itself with various cognitive and philosophical theories of concept-formation, it is most akin to the {\em classical} view.

\paragraph{Formal concepts as modal models.} In \cite{CFPPTW16}, we first established a connection between FCA and modal logic, based on the idea  that (enriched) formal contexts can be taken as models of an epistemic modal logic of categories/concepts. Formulas of this logic are constructed out of a set of atomic variables using the standard positive propositional connectives $\wedge, \vee, \top, \bot$, and modal operators $\Box_i$ associated with each agent $i\in \mathsf{Ag}$. The formulas so generated do not denote states of affairs (to which a truth-value can be assigned), but {\em categories} or {\em concepts}.
In this modal  language, as usual, it is easy to distinguish  the `objective' or factual information (stored in the database), encoded in the formulas of the modal-free fragment of the language, and  the agents' subjective interpretation of the `objective' information, encoded in formulas in which modal operators occur.
In this language, we can talk about e.g.~the category that according to Alice is the category that according to  Bob is the category of Western movies.
This makes it possible to define {\em fixed points} of these regressions, similarly to the way in which common knowledge is defined in classical epistemic logic \cite{fagin2003reasoning}.
Intuitively, these fixed points represent the stabilization of a process of social interaction; for instance, the consensus reached by a group of agents regarding a given category.

Models for this  logic are formal contexts $\left(A, X, I\right)$ enriched with an extra relation $R_i\subseteq A\times X$ for each agent (intuitively, for every object $a\in A$ and every feature $x\in X$, we read $aR_ix$ as `object $a$ has feature $x$ according to agent $i$'. Hence, while the relation $I$ represents reality as is recorded in the database represented by the formal context $\left(A, X, I\right)$, each relation $R_i$ represents as usual the subjective view of the corresponding agent $i$ about objects and their features, and is used to interpret $\Box_i$-formulas. %Specifically, given an interpretation map assigning each atomic variable to a formal concept of such an enriched context, if formulas ? and ? are mapped to concepts (B, Y) and (C, Z) respectively, then ??? and ??? are assigned to the concepts uniquely identified by B?C and by Y?Z, respectively, and ??, intuitively denoting the category that the agent understands as ?, is assigned to the concept uniquely identified by the set of objects to which the agent attributes, via the corresponding relation R, all the features in Y; in symbols: {a? A | ?x(x? Y? aRx)}.

This logic arises and has been studied in the context of  unified correspondence theory \cite{CoGhPa14}, and allows one to  relate, via Sahlqvist-type results, sentences in the first-order language of enriched formal contexts (expressing low-level, concrete conditions about objects and features) with inequalities $\phi\leq \psi$, where $\phi$ and $\psi$ are formulas in the modal language above, expressing high-level, abstract relations about categories and how they are perceived and understood by different agents.
In the next section, we expand on the relevant definitions and background facts about this logic.

\section{Epistemic logic of categories}
\label{sec:logics}
\paragraph{Basic logic and intended meaning.} Let $\Prop$ be a (countable or finite) set of atomic propositions and $\Ag$ be a finite set (of agents). The basic language $\L$ of the epistemic logic of categories is
\[ \varphi := \bot \mid \top \mid p \mid  \varphi \wedge \varphi \mid \varphi \vee \varphi \mid \Box_i \varphi, \]
where $p\in \Prop$. As mentioned above, formulas in this language are terms denoting categories (or concepts). Atomic propositions provide a vocabulary of {\em category labels}, such as music genres (e.g.~jazz, rock, rap), movie genres (e.g.~western, drama, horror), supermarket products (e.g.~milk, dairy products, fresh herbs). Compound formulas $\phi\wedge \phi$ and $\phi\vee\psi$ respectively denote the greatest common subcategory and the smallest  common supercategory of $\phi$ and $\psi$. For a given agent $i\in \Ag$, the formula $\Box_i\phi$ denotes the category  $\phi$, according to $i$. At this stage we are deliberately vague as to the precise meaning of `according to'. Depending on the properties of $\Box_i$, the formula $\Box_i\phi$ might denote the category {\em known}, or {\em perceived}, or {\em believed} to be $\phi$ by agent $i$.
		The {\em basic}, or {\em minimal normal} $\mathcal{L}$-{\em logic} is a set $\mathbf{L}$ of sequents $\phi\vdash\psi$ (which intuitively read ``$\phi$ is a subcategory of $\psi$'') with $\phi,\psi\in\mathcal{L}$, containing the following axioms:
		\begin{itemize}
			\item Sequents for propositional connectives:
			\begin{align*}
				&p\vdash p, && \bot\vdash p, && p\vdash \top, & &  &\\
				&p\vdash p\vee q, && q\vdash p\vee q, && p\wedge q\vdash p, && p\wedge q\vdash q, &
			\end{align*}
			\item Sequents for modal operators:
			\begin{align*}
				&\top\vdash \Box_i \top &&
                \Box_i p\wedge \Box_i  q \vdash \Box_i \left( p\wedge q\right)
			\end{align*}
		\end{itemize}
		and closed under the following inference rules:
		\begin{displaymath}
			\frac{\phi\vdash \chi\quad \chi\vdash \psi}{\phi\vdash \psi}
			\quad\quad
			\frac{\phi\vdash \psi}{\phi\left(\chi/p\right)\vdash\psi\left(\chi/p\right)}
			\quad\quad
			\frac{\chi\vdash\phi\quad \chi\vdash\psi}{\chi\vdash \phi\wedge\psi}
			\quad\quad
			\frac{\phi\vdash\chi\quad \psi\vdash\chi}{\phi\vee\psi\vdash\chi}
			\quad\quad
%		\end{displaymath}
%		\begin{displaymath}
			\frac{\phi\vdash\psi}{\Box_i \phi\vdash \Box_i \psi}
		\end{displaymath}
Thus, the modal fragment of $\mathbf{L}$ incorporates the viewpoints of individual agents into the syllogistic reasoning supported by the propositional fragment of $\mathbf{L}$. By an {\em $\mathcal{L}$-logic}, we understand any  extension of $\mathbf{L}$  with $\mathcal{L}$-axioms $\phi\vdash\psi$.
	
\paragraph{Interpretation in enriched formal contexts.} Let us discuss the structures which play the role of Kripke frames.\footnote{Details can be found in Section \ref{sec:Appendix}.}  An {\em enriched formal context} is a tuple
$$\mathbb{F} = \left(\mathbb{P}, \{R_i\mid i\in \Ag\}\right)$$
such that $\mathbb{P} = \left(A, X, I\right)$ is a formal context, and $R_i\subseteq A\times X$ for every $i\in \Ag$, satisfying certain additional properties which guarantee that their associated modal operators are well defined  (cf.~Definition \ref{def:compositional efc}). As mentioned above, formal contexts represent databases of market products (the elements of the set $A$), relevant features (the elements of the set $X$), and an incidence relation $I\subseteq A\times X$ (so that $aIx$ reads: ``market product $a$ has feature $x$''). In addition, enriched formal contexts contain information about the epistemic attitudes of individual agents, so that $aR_ix$ reads: ``market product $a$ has feature $x$ according to agent $i$'', for any $i\in \Ag$. A {\em valuation} on $\mathbb{F}$ is a map $V:\Prop\to \mathcal{P}\left(A\right)\times \mathcal{P}\left(B\right)$, with the restriction that $V\left(p\right)$ is a {\em formal concept} of $\mathbb{P} = \left(A, X, I\right)$, i.e., every $p\in \Prop$ is mapped to $V\left(p\right) = \left(B, Y\right)$ such that $B\subseteq A$, $Y\subseteq X$, and $B\times Y$ is  maximal rectangle contained in $I$. For example, if $p$ is the category-label denoting western movies, and $\mathbb{P}$ is a given database of  movies (stored in $A$) and movie-features (stored in $X$), then  $V$ interprets the category-label $p$  in the model $\mathbb{M} = \left(\mathbb{F}, V\right)$ as the formal concept (i.e.~semantic category) $V\left(p\right) = \left(B, Y\right)$,  specified by  the set of movies $B$  (i.e.~the set of western movies of the database) and  by the set of movie-features $Y$ (i.e.~the set of features which all western movies have). The elements of $B$ are the {\em members} of category $p$ in  $\mathbb{M}$; the elements of $Y$ {\em describe}  category $p$ in $\mathbb{M}$. The set $B$ (resp.~$Y$) is the {\em extension} (resp.~the {\em description}) of $p$ in $\mathbb{M}$, and sometimes we will denote it $\val{p}_{\mathbb{M}}$ (resp.~$\descr{p}_{\mathbb{M}}$) or $\val{p}$ (resp.~$\descr{p}$) when it does not cause confusion. Alternatively, we write:
\begin{center}
\begin{tabular}{llll}
$\mathbb{M}, a \Vdash p$ & iff & $a\in \val{p}_{\mathbb{M}}$  & \\
$\mathbb{M}, x \succ p$ & iff & $x\in \descr{p}_{\mathbb{M}}$  &\\
\end{tabular}
\end{center}
and we read $\mathbb{M}, a \Vdash p$ as ``$a$ is a member of $p$'', and $\mathbb{M}, x \succ p$ as ``$x$ describes $p$''.
The interpretation of atomic propositions can be extended to propositional $\mathcal{L}$-formulas as follows:
\begin{center}
\begin{tabular}{llll}
$\mathbb{M}, a \Vdash\top$ &  & always \\
$\mathbb{M}, x \succ \top$ & iff &   $a I x$ for all $a\in A$\\
$\mathbb{M}, x \succ  \bot$ &  & always \\
$\mathbb{M}, a \Vdash \bot $ & iff & $a I x$ for all $x\in X$\\
$\mathbb{M}, a \Vdash \phi\wedge \psi$ & iff & $\mathbb{M}, a \Vdash \phi$ and $\mathbb{M}, a \Vdash  \psi$ & \\
$\mathbb{M}, x \succ \phi\wedge \psi$ & iff & for all $a\in A$, if $\mathbb{M}, a \Vdash \phi\wedge \psi$, then $a I x$\\
$\mathbb{M}, x \succ \phi\vee \psi$ & iff &  $\mathbb{M}, x \succ \phi$ and $\mathbb{M}, x \succ  \psi$ &\\
$\mathbb{M}, a \Vdash \phi\vee \psi$ & iff & for all $x\in X$, if $\mathbb{M}, x \succ \phi\vee \psi$, then $a I x$  & \\
\end{tabular}
\end{center}
Hence, in each model, $\top$ is interpreted as the category generated by the set $A$ of all objects, i.e.~the widest category and hence the one with the laxest (possibly empty) description;  $\bot$ is interpreted  as the category generated by the set $X$ of all features, i.e.~the smallest (possibly empty) category and hence the one with the most restrictive description; $\phi\wedge\psi$ is interpreted  as the semantic category generated by the intersection of the extensions of $\phi$ and $\psi$ (hence, the description of $\phi\wedge\psi$ certainly includes $\descr{\phi}\cup\descr{\psi}$ but is possibly larger). Likewise,  $\phi\vee\psi$ is interpreted  as the semantic category generated by the intersection of the intensions of $\phi$ and $\psi$ (hence,  objects in $\val{\phi}\cup\val{\psi}$ are certainly members of $\phi\vee\psi$  but there might be others).
As to the interpretation of modal formulas:
\begin{center}
\begin{tabular}{llll}
$\mathbb{M}, a \Vdash \Box_i\phi$ & iff & for all $x\in X$, if $\mathbb{M}, x \succ \phi$, then $a R_i x$& \\
$\mathbb{M}, x \succ \Box_i\phi$ & iff & for all $a\in A$, if $\mathbb{M}, a \Vdash \Box\phi$, then $a I x$.\\
%
%$\mathbb{M}, a \Vdash \Diamondblack\phi$ & iff & for all $x\in X$, if $\mathbb{M}, x \succ \Diamondblack\phi$, then $a \perp x$   &\\
%$\mathbb{M}, x \succ \Diamondblack\phi$ & iff &  for all $a\in A$, if $\mathbb{M}, a \Vdash \phi$, then $a R x$. \\
\end{tabular}
\end{center}
Thus, in each model,  $\Box_i\phi$ is interpreted as the category whose members are those objects to which {\em agent $i$ attributes}  every feature in the description of $\phi$. Finally, as to the interpretation of sequents:
\begin{center}
\begin{tabular}{llll}
$\mathbb{M}\models \phi\vdash \psi$ & iff & for all $a \in A$, $\mbox{if } \mathbb{M}, a \Vdash \phi, \mbox{ then } \mathbb{M}, a \Vdash \psi$.& 
\end{tabular}
\end{center}
\paragraph{Adding `common knowledge'.} In \cite{CFPPTW16}, we observed that the environment described above is naturally suited to capture not only the factual information and the epistemic attitudes of individual agents, but also the outcome of social interaction. To this effect, we introduce an expansion $\Lmu$ of $\mathcal{L}$  with a common knowledge-type  operator $C$. Given $\Prop$ and $\Ag$ as above,
the  language $\Lmu$ of the epistemic logic of categories with `common knowledge' is:
\[ \varphi := \bot \mid \top \mid p \mid  \varphi \wedge \varphi \mid \varphi \vee \varphi \mid \Box_i \varphi\mid C\left(\varphi\right). \]
 $C$-formulas are interpreted in models as follows:
\begin{center}
\begin{tabular}{llll}
$\mathbb{M}, a \Vdash C\left(\varphi\right)$ & iff & for all $x\in X$, if $\mathbb{M}, x \succ \varphi$, then $a R_C x$& \\
$\mathbb{M}, x \succ C\left(\varphi\right)$ & iff & for all $a\in A$, if $\mathbb{M}, a \Vdash C\left(\varphi\right)$, then $a I x$,\\
%
%$\mathbb{M}, a \Vdash \Diamondblack\phi$ & iff & for all $x\in X$, if $\mathbb{M}, x \succ \Diamondblack\phi$, then $a \perp x$   &\\
%$\mathbb{M}, x \succ \Diamondblack\phi$ & iff &  for all $a\in A$, if $\mathbb{M}, a \Vdash \phi$, then $a R x$. \\
\end{tabular}
\end{center}
where $R_C\subseteq A\times X$ is defined as $R_C = \bigcap_{s\in S} R_s$, and $R_s\subseteq A\times X$ is the relation associated with the modal operator $\Box_s: = \Box_{i_1}\cdots \Box_{i_n}$ for any element $s= i_1\cdots i_n$ in the set $S$ of finite sequences  of elements of $\Ag$ (cf.~Section \ref{ssec:semantics common knowledge}).

The basic logic of categories with `common knowledge'   is a set $\mathbf{L}_C$ of sequents $\phi\vdash\psi$, with $\phi,\psi\in\Lmu$, which contains the  axioms and is closed under the rules of $\mathbf{L}$, and in addition contains the following axioms:
			\begin{align*}
				& \top\vdash C \left(\top\right) &&
                 C\left( p\right)\wedge C\left(q\right) \vdash C\left( p\wedge q\right) && C\left(p\right)\vdash \bigwedge\{\Box_ip\land\Box_iC\left(p\right)\mid i\in \Ag\}
			\end{align*}
		and is closed under the following inference rules:
		\begin{displaymath}
			\frac{\phi\vdash\psi}{C\left(\phi\right)\vdash C\left(\psi\right)}\quad\quad \frac{\chi\vdash \bigwedge_{i\in \Ag}\Box_i\varphi\quad \{\chi\vdash \Box_i\chi\mid i\in \Ag\}}{\chi\vdash C\left(\varphi\right)}
		\end{displaymath}

\paragraph{Hybrid expansions of the basic language.} In several settings, it is  useful to be able to talk about given objects (market-products) or given features. To this purpose, the languages $\mathcal{L}$ or $\Lmu$ can be further enriched with dedicated sets of variables in the style of hybrid logic.
Let $\Prop$ be a (countable or finite) set of atomic propositions and $\Ag$ be a finite set (of agents). Given $\Prop$ and $\Ag$ as above, and (countable or finite) sets $\Nom$ and $\Cnom$ (of {\em nominals} and {\em conominals} respectively),
the  language $\mathcal{L}_H$ of the hybrid logic of categories is:
\[ \varphi := \bot \mid \top \mid p \mid \noma \mid \nomx \mid  \varphi \wedge \varphi \mid \varphi \vee \varphi \mid \Box_i \varphi, \]
where $i\in \Ag$, $p\in\Prop$, $\noma\in\Nom$ and $\cnomx\in\Cnom$.
A {\em hybrid valuation} on an enriched formal concept $\mathbb{F}$ maps atomic propositions to formal concepts, nominal variables to the formal concepts generated by  single elements of the object domain $A$, and  conominal variables to formal concepts generated by  single elements of the feature domain $X$.
If $V\left(\noma\right)$ is the semantic category generated by $a\in A$, and $V\left(\nomx\right)$ is the semantic category generated by $x\in X$, then nominal and co-nominal variables are interpreted  as follows:
\begin{center}
\begin{tabular}{llll}
$\mathbb{M}, y \succ \noma$ & iff &  $a I y$,\\
$\mathbb{M}, b \Vdash \noma$ & iff &   for all $y\in X$, if $aI y$ then $b I y$& \\
$\mathbb{M}, b \Vdash \nomx$ & iff & $b I x$   \\
$\mathbb{M}, y \succ \nomx$ & iff & for all $b\in A$, if $bI x$ then $b I y$.& \\

%
%$\mathbb{M}, a \Vdash \Diamondblack\phi$ & iff & for all $x\in X$, if $\mathbb{M}, x \succ \Diamondblack\phi$, then $a \perp x$   &\\
%$\mathbb{M}, x \succ \Diamondblack\phi$ & iff &  for all $a\in A$, if $\mathbb{M}, a \Vdash \phi$, then $a R x$. \\
\end{tabular}
\end{center}

\section{Core concepts and proposed formalizations}
\label{sec:CKFP}
In the present section,  we use  the languages $\mathcal{L}$, $\mathcal{L}_H$ and $\Lmu$ discussed in the previous section to capture  some core notions and properties about categories, appearing and used in the literature in management science, which we discuss in the next subsection. %In the following subsection, we first discuss these notions. In Section \ref{ssec:formal:cat} we propose ways to capture these notions using the languages $\mathcal{L}$, $\mathcal{L}_H$ and $\Lmu$.

\subsection{Core concepts}
\label{ssec:cat:th}
A core issue in management science is how to predict the success of a new market-product, or of a given firm over its competitors. Success clearly depends  on whether the agents in the relevant audiences  decide to buy the product or become clients of the firm, and a key factor in this decision is how each agent resolves a {\em categorization} problem. The  ease with which products or firms are categorized  affects in itself the decision-making, because the more difficult it is to categorize a product or a firm, the higher the cognitive burden and the perceived risk of the decision. %which in itself lowers the value.
 This is why %besides    the notion of {\em typicality} and {\em catego} attention has been devoted to issues formulated in terms of how typical - that is, easy to recognize as such - a member of a category is, and to the notion of
research has focused on the performances of {\em category-spanning} products or firms (i.e.~products or firms which are members of more than one category).
  While being a member of more than one category can increase visibility and awareness, because audiences interested in any of these categories may pay attention to something which is also in that category, it usually lowers the success.
However, the actual effects of spanning categories will depend on the properties of the categories that are spanned.  The core concepts of categorization theory denote characteristics of categories or of the relation between categories that can be understood to decrease or increase the effects of spanning categories with these particular characteristics.

\paragraph{Typicality.} The issue of whether an object $a$ is a typical member of a given category $\phi$,  or to which extent $a$ is typical of $\phi$, is core to  the {\em similarity-based} views of category-formation \cite{La99,Ro05,SM02}. As mentioned in Section \ref{ssec:formal}, in conceptual spaces, the {\em prototype} of a formal concept is defined as the geometric center of that concept, so that the closer (i.e.~more similar) any other object is to the prototype, the stronger its typicality. While this formalization is visually very appealing, it does not shed much light on the role of the agents in establishing the typicality of an object relative to a category. %TYPICAL FEATURE

\paragraph{Distance.} The distance between two categories can be defined in different ways. One approach \cite{KH15} is to express it as a negative exponential function of the categories' similarity, where the categories' similarity is calculated using a Jaccard index, i.e., cardinality of the intersection over cardinality of the union. Another approach \cite{PH14}  is to take the Hausdorff distance between the sets in feature space that correspond to the categories. The Hausdorff distance is the maximum of the two minimal point-to-set distances.

\paragraph{Contrast.}
Contrast is defined as the extent to which a category stands out from other categories in the same domain. It is a function of the mean typicality of objects in the category. In a high-contrast category, objects tend to be either very typical members of the category or not members at all \cite{Ha10}. Objects in a high-contrast category tend to be more recognizable to agents and more positively valued \cite{NHR10}. Category spanning leads to greater penalties if the spanned categories have higher contrast \cite{KH10}.
%\redfootnote{further research: we want to talk about categories and value.}
% c is high contrast if c \eq C(c)
% contrast is an epistemic notion because it is about typicality and typicality is about the perception.

\paragraph{Leniency.} By definition of contrast, members of a low-contrast category $\varphi$ have on average low typicality in that category. This situation is compatible with each of the following alternatives: (a) there are many categories which (according to agents) have members in common with $\varphi$,  (b) there are not many categories which (according to agents) have members in common with $\varphi$. The notion of leniency clarifies this issue.   The leniency of $\varphi$ is defined as the extent to which the members of $\varphi$ are (recognized as) only members of $\varphi$ (and of the other logically unavoidable categories), and not of  other categories \cite{Po12}. %This is different from contrast because a category can have low contrast when (a) its members tend to be members of a single other category, in which case leniency is low, or (b) its members can be members of many other categories, in which case leniency is high.
% c is strict if
% for all a \in c, for all category d, c and d are incomparable implies that a not in c
% leniency can be given in an objective way, but we can make also the epistemic version of it with C or \box

\subsection{Formalizations}
\label{ssec:formal:cat}
The following proposals are not equivalent to the definitions discussed in the previous subsection, but try to capture their purely qualitative content.
\paragraph{Typicality.} The interpretation of $C$-formulas on models indicates that, for every category $\varphi$, the members of $C(\varphi)$ are those objects which are members of $\varphi$ according to every agent, and moreover, according to every agent, are attributed membership in $\varphi$ by every (other) agent, and so on. This provides justification for our proposal to regard the members of $C(\varphi)$ as the {\em (proto)typical members} of $\varphi$. The main feature of this proposal is that it is explicitly based on the agents' viewpoints. This feature is compatible with empirical methodologies adopted to establish graded membership (cf.~\cite{H2006}). Notice that there is a hierarchy of reasons why a given object fails to be a typical member of $\varphi$, the most severe being that some agents do not recognize its membership in $\varphi$, followed by some agents not recognizing that any other agent would recognize it as a member of $\phi$, and so on. This observation provides a purely qualitative route to encode the {\em gradedness} of (the recognition of) category-membership. That is, two non-typical objects\footnote{represented in the language $\mathcal{L}_H$ as  nominal variables.}  $\noma$ and $\nomb$  can be compared in terms of the minimum number of `epistemic iterations' needed for their typicality test to fail, so that $\nomb$ is {\em more atypical} than $\noma$ if fewer rounds are needed for $\nomb$ than for $\noma$. This definition can be readily adapted so as to say that  $\nomb$ is a {\em more atypical} member of $\psi$ than $\noma$ is of $\varphi$. %That is, a (proto)typical member of a given category $\phi$ is an object $a$ whose membership in $\phi$ is common knowledge among agents.   This intuition fits with the definition of membership in the category $C(\phi)$. Understanding $C(\phi)$ as the category of the (proto)typical members of $\phi$ highlights the epistemic nature of the notion of typicality.

\paragraph{Distance.} For four categories $\varphi, \psi, \chi, \xi$, we can say that $\varphi$ is
{\em closer} to $\psi$  than  $\chi$ is to $\xi$ by means of the sequent $\varphi\vee \psi\vdash \xi\vee \chi$,
the sequent $\xi\wedge \chi\vdash \varphi\wedge \psi$,   or by requiring the two sequents
to hold simultaneously. The first sequent says that $\varphi$ and $\psi$  have
more features in common than $\xi $ and $\chi$ have; the second sequent
says that $\varphi$ and $\psi$ have more common members than $\xi $ and $\chi$ have. Notice that neither the first sequent implies or is implied by the
second. This is why it might be useful to consider the information encoded in both sequents. When  instantiated to  $\varphi = \xi$, these conditions can be used to express that $\varphi$ is closer to $\psi$  than to $\chi$.

\paragraph{Contrast.} If  $\varphi\vdash C(\varphi)$ holds for a category $\varphi$,  every member of $\varphi$ is a typical member of $\varphi$, in the sense discussed above, and hence $\varphi$ has {\em maximal contrast}. Using the formalizations of typicality and distance discussed above, we say that $\varphi$ has {\em equal or higher contrast} than $\psi$ if $\varphi$ is closer to $C(\varphi)$ than $\psi$ is to $C(\psi)$.\footnote{That is, by either requiring that $\varphi\vee C(\varphi)\vdash \psi\vee C(\psi)$, or by requiring that $\psi\vee C(\psi) \vdash \varphi\vee C(\varphi)$, or by requiring both sequents to hold.} %Alternatively, we can say that $\varphi$ has {\em equal or higher contrast} than $\psi$ if for every member $\nomb$ of $\psi$ there exists a member $\noma$ of $\varphi$ such that $\nomb$ is more atypical of $\psi$ than $\noma$ is of $\varphi$. Clearly, every category with maximal contrast  has equal or higher contrast than any other category.
\paragraph{Leniency.} A category $\varphi$ has {\em no leniency} if its members do not simultaneously belong to other categories. This property can be captured by the following condition: for any $\psi$ and $\chi$, if $\psi\vdash \phi$ and $\psi\vdash \chi$, then either  $\phi\vdash \chi$ or $\chi\vdash \phi$. To understand this condition, let us instantiate $\psi$ as the nominal category $\noma$ (the category generated by one object).
Then $\noma\vdash\varphi$ expresses that the generator of $\noma$  is a member of $\varphi$.   The no-leniency of $\varphi$ would require the generator $a$ of $\noma$ to not belong to other categories. However, the nature of the present formalization constrains $a$ to be a member of every $\chi$ such that $\varphi\vdash \chi$, so $a$ must belong to these categories at least. Also, all the categories $\chi$ such that $\noma\vdash \chi\vdash \phi$ cannot be excluded either, since the possibility that `in-between' categories exist does not depend purely on $a$ and $\varphi$ alone, but depends on the context of other objects and features. Hence, we can understand no-leniency as the requirement that no other categories have $a$ as a member than those of this minimal set of categories which cannot be excluded.

For two categories $\varphi$ and $\psi$, we say that $\varphi$ has {\em greater or equal leniency} than $\psi$ if, for every nominal $\noma$, if $\noma\vdash\psi$  and $\noma\vdash \chi$ for some $\chi$ such that $\chi\nvdash \psi$ and $\psi\nvdash \chi$, then   $\noma\vdash \varphi$ and moreover, $\noma\vdash \xi$ for some category $\xi$ such that $\xi\nvdash \varphi$ and $\varphi\nvdash \xi$.  %for every member $\nomb$ of $\psi$, if $\nomb$ is a member of some $\chi$ such that $\chi\nleq \psi$ and $\psi\nleq \chi$, then there exists a member $\noma$ of $\varphi$ such that %$\noma$ is a member of some $\xi$ such that $\xi\nleq \varphi$ and $\varphi\nleq \xi$ and $\nomb$ is a {\em more atypical} member of $\chi$ than $\noma$ is of $\varphi$.
Variants of these conditions can be given also in terms of the features (using conominal variables), and also in  terms of the modal operators.

\section{Conclusions and further directions}
\label{sec:CCL}
In this paper, we have introduced a basic epistemic logic of categories, expanded it with `common knowledge'-type and `hybrid logic'-type constructs, %In the present paper, we introduce an epistemic logic of categories as a basic formal framework for the logical formalization of categorization theory.
and used the resulting framework to capture core notions in categorization theory, as developed in management science. The logical formalizations proposed in Section \ref{ssec:formal:cat}  try to capture the purely qualitative content of the original definitions. The essential features of this logical framework make it particularly suitable to emphasize  the different perspectives of individual agents, and how these perspectives interact. The propositional base of these logics is the positive (i.e.~negation-free and implication-free) fragment of classical propositional logic (without distributivity laws). The Kripke-style semantics of this logic is given by structures known as {\em formal contexts} in Formal Concept Analysis \cite{Wille}, which we have {\em enriched} with binary relations to account for the (epistemic) interpretation of the modal operators.  One fundamental difference between this semantics and the classical Kripke semantics for epistemic logics is that the relations directly encode the actual viewpoint of the individual agents, and not their uncertainty or ignorance ($aR_ix$ reads `object $a$ has feature $x$ according to agent $i$).

This paper is still very much a first step, but it already shows how logic can contribute to the vast interdisciplinary area of categorization theory, especially with regard to the analysis of various types of social interaction (e.g.~epistemic, dynamic, strategic). Interestingly, the prospective contributions involve {\em both} technical aspects (some of which we discuss below) {\em and} conceptual aspects (since, as discussed in Section \ref{sec:foundations formal}, there is no single foundational theory or view which exhaustively accounts for all the relevant aspects of categorization).
\paragraph{From RS-frames to arbitrary contexts.}   The  present paper refines previous work \cite{CFPPTW16}, which provides a conceptually independent explanation of the (rather technical) definition of the interpretation clauses of $\mathcal{L}$-formulas on certain enriched formal contexts. These clauses were obtainable as the outcome of mechanical computations (cf.~\cite[Section 2.1.1]{CoPa:non-dist}, \cite[Section A]{CFPPTW16}) the soundness of which was guaranteed by certain facts pertaining to the duality for perfect lattices (cf.~\cite{DGP05,Gehrke}). The treatment in Section \ref{sec:logics} adapts these interpretation clauses to the more general and intuitively more natural category of arbitrary (enriched) formal contexts and their morphisms \cite{Moshier}. 

\paragraph{Fixed points.} One of the most interesting aspects of the present proposal is  that typicality has been  captured with  a `common knowledge' operator. This operator is semantically equivalent to the usual greatest fixed point construction (cf.~Section \ref{sec:Appendix}). This paves the way to  the use of languages expanded with fixed point operators to capture: for instance, as discussed in \cite[Example 4]{CCPZ}, the formula $\nu X.\Box_i(X\wedge p)$ denotes the category obtained as
the limit of a process of ``introspection'' (in which the agent reflects on her perception of
a given category $p$, and on her perception of her perception, and so on). A systematic exploration of this direction is work in progress. %Background work relevant to this direction is  \cite{CCPZ}, which proves  the canonicity of certain classes of sequent-type axioms in logics of categories expanded with fixed points, and discusses some relevant examples.

%pincer: technology and better understanding of the substance. By using this much better technology we provide a much better understanding of categories. The fun a bout the categories is that there is so much happening there epistemic dynamic, social... it is a challenging environment for logicians and there is so much to do with them

\paragraph{Proof calculi.} The present framework makes it possible to blend together syllogistic and epistemic reasoning. To further explore those aspects connected with reasoning and deduction in $\mathbf{L}$ and $\mathbf{L}_C$, specifically designed proof calculi will be needed. These calculi will be useful tools to explore the computational properties of these logics; moreover, the conclusions  of formal inferences  can  provide the basis for the development of testable hypotheses. A proof-theoretic account of the basic logic $\mathbf{L}$ can be readily achieved by augmenting the calculus developed in \cite{GP:lattice} for the propositional base  with suitable rules for the modal operators, so as to fall into the general theory of \cite{TrendsXIII}. However, the proof theory of $\mathbf{L}_C$ needs to be investigated. The omega rules introduced in  \cite{FGKP16} might provide a template.

\paragraph{Dynamic epistemic logic of categories.}
%At the core of the challenges of modern categorization theory is the fact that categories are not set in stone;
An adequate formal account of the dynamic nature of categories is a core challenge facing modern categorization theory. Categories are cognitive tools that agents use as long as they are useful, which is why some categories have existed for millennia and others quickly fade away. Categories shape and are shaped by social interaction. This bidirectional causality is essential to what categories are and do, and this is why the most important and challenging further direction concerns how categories impact on social interaction and how social interaction changes agents' categorizations. One natural step in this direction is to expand the present framework with dynamic modalities, and extend the construction of dynamic updates to models based on enriched formal contexts, as done e.g.~in  \cite{MPS14,KP13}.

\bibliographystyle{eptcs}
\bibliography{aiml16}

\appendix
\section{Soundness and completeness}
\label{sec:Appendix}
In Section \ref{ssec:compatible}, we define $I$-compatible relations and give their properties. In Section \ref{ssec:semantics common knowledge}, we prove that composition of $I$ compatible relations is associative and that the interpretation of $C$ is well-defined. In Section \ref{ssec:soundness}, we prove the soundness of the axioms given in Section \ref{sec:logics}.
In Section \ref{ssec:completeness}, we prove the week completeness of the logics $\mathbb{L}$ and $\mathbb{L}_C$ defined in  Section \ref{sec:logics}.

\subsection{$I$-compatible relations}
\label{ssec:compatible}
In what follows, we fix two sets $A$ and $X$, and use $a, b$ (resp.~$x, y$) for elements of $A$ (resp.~$X$), and $B, C, A_j$ (resp.~$Y, W, X_j$) for subsets of $A$ (resp.~of $X$) throughout this section. For any relation $S\subseteq A\times X$, let 
\[S^\uparrow[B]:=\{x\mid \forall a(a\in B\Rightarrow aSx) \}  \quad\quad S^\downarrow[Y]:=\{a\mid \forall x(x\in Y\Rightarrow aSx) \}.\]
Well known properties of this construction (cf.~\cite[Sections 7.22-7.29]{DaveyPriestley2002}) are stated in the following lemma.
 \begin{lemma}\label{lemma: basic}
\begin{enumerate}
\item $B\subseteq C$ implies $S^\uparrow[C]\subseteq S^\uparrow[B]$, and $Y\subseteq W$ implies $S^\downarrow[W]\subseteq S^\downarrow[Y]$.
 \item $B\subseteq S^\downarrow[S^\uparrow[B]]$ and $Y\subseteq S^\uparrow[S^\downarrow[Y]]$.
 \item $S^\uparrow[B] = S^\uparrow[S^\downarrow[S^\uparrow[B]]]$ and $S^\downarrow[Y] = S^\downarrow[S^\uparrow[S^\downarrow[Y]]]$.
 \item $S^{\downarrow}[\bigcup\mathcal{Y}] = \bigcap_{Y\in \mathcal{Y}}S^{\downarrow}[Y]$ and $S^{\uparrow}[\bigcup\mathcal{B}] = \bigcap_{B\in \mathcal{B}}S^{\uparrow}[B]$.
\end{enumerate}
 \end{lemma}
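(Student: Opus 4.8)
## Proof Plan for Lemma 1

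The statement collects four standard properties of the Galois-type maps $S^\uparrow[\cdot]$ and $S^\downarrow[\cdot]$ induced by a relation $S \subseteq A \times X$. Each of the four claims comes in a symmetric pair (one for $S^\uparrow$ and one for $S^\downarrow$), so in each case I would prove only one half in detail and note that the other follows by the evident symmetry between $A$ and $X$ (with the roles of $S^\uparrow$ and $S^\downarrow$ interchanged). The overall strategy is to unfold the definitions $S^\uparrow[B] = \{x \mid \forall a (a \in B \Rightarrow a S x)\}$ and $S^\downarrow[Y] = \{a \mid \forall x (x \in Y \Rightarrow a S x)\}$ and reason directly with the quantifiers; no deep machinery is required.

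For part (1), the plan is to assume $B \subseteq C$ and take an arbitrary $x \in S^\uparrow[C]$. By definition $a S x$ holds for every $a \in C$; since $B \subseteq C$, it holds a fortiori for every $a \in B$, so $x \in S^\uparrow[B]$. This is pure antitonicity of a universally quantified condition under enlarging the quantification domain. Part (4) is equally direct: I would unfold $S^\downarrow[\bigcup \mathcal{Y}]$ as $\{a \mid \forall x(x \in \bigcup \mathcal{Y} \Rightarrow a S x)\}$ and observe that $\forall x (x \in \bigcup \mathcal{Y} \Rightarrow a S x)$ is logically equivalent to $\forall Y \in \mathcal{Y}\, \forall x (x \in Y \Rightarrow a S x)$, which is exactly membership in $\bigcap_{Y \in \mathcal{Y}} S^\downarrow[Y]$. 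This is just the distribution of a universal quantifier over a union in its restricting clause.

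Part (2) is the key ``unit'' property of the Galois connection. To show $B \subseteq S^\downarrow[S^\uparrow[B]]$, I would fix $a \in B$ and verify that $a \in S^\downarrow[S^\uparrow[B]]$, i.e.\ that $a S x$ for every $x \in S^\uparrow[B]$. But $x \in S^\uparrow[B]$ means precisely that $a' S x$ for all $a' \in B$, and since $a \in B$ this gives $a S x$ directly. Part (3) then follows formally from (1) and (2) without any further reference to the relation $S$. The plan is: applying (2) to the set $S^\uparrow[B]$ (in its dual form $Y \subseteq S^\uparrow[S^\downarrow[Y]]$) yields $S^\uparrow[B] \subseteq S^\uparrow[S^\downarrow[S^\uparrow[B]]]$; conversely, applying the monotone map $S^\uparrow[\cdot]$ to the inclusion $B \subseteq S^\downarrow[S^\uparrow[B]]$ from (2)—using the antitonicity from (1), which reverses the inclusion twice when composed, hence preserves it—gives $S^\uparrow[S^\downarrow[S^\uparrow[B]]] \subseteq S^\uparrow[B]$. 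The two inclusions together give the claimed equality.

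I do not anticipate a genuine obstacle here, as all four parts are routine quantifier manipulations; the only point requiring slight care is part (3), where one must correctly track that $S^\uparrow$ is order-reversing, so that the composite $S^\downarrow \circ S^\uparrow$ appearing in (2) is order-preserving and the derivation of the reverse inclusion from (2) and (1) goes through in the right direction.
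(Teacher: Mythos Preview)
Your argument is correct. The paper does not actually prove this lemma: it simply records these as ``well known properties'' with a pointer to Davey and Priestley, so there is nothing to compare against beyond noting that your direct quantifier-unfolding is exactly the standard textbook verification. One small expository wrinkle: in part~(3) you call $S^\uparrow[\cdot]$ ``the monotone map'' and then speak of antitonicity ``revers[ing] the inclusion twice when composed''; in fact you only apply $S^\uparrow$ once to $B \subseteq S^\downarrow[S^\uparrow[B]]$, and a single reversal already yields $S^\uparrow[S^\downarrow[S^\uparrow[B]]] \subseteq S^\uparrow[B]$, which is what you correctly conclude. Tightening that sentence would remove the only point of possible confusion.
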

For any formal context $\mathbb{P}=(A,X,I)$, we sometimes use $B^{\uparrow}$ for $I^{\uparrow}[B]$, and $Y^{\downarrow}$ for $I^{\downarrow}[Y]$, and  say that $B$ (resp.\ $Y$) is {\em Galois-stable} if $B = B^{\uparrow\downarrow}$ (resp.\ $Y = Y^{\downarrow\uparrow}$). When $B=\{a\}$ (resp.\ $Y=\{x\}$) we write $a^{\uparrow\downarrow}$ for $\{a\}^{\uparrow\downarrow}$ (resp.~$x^{\downarrow\uparrow}$ for $\{x\}^{\downarrow\uparrow}$). Galois-stable sets are the projections  of some maximal rectangle (formal concept) of $\mathbb{P}$.
The following lemma collects more well known facts (cf.~\cite[Sections 7.22-7.29]{DaveyPriestley2002}):
\begin{lemma}
\label{lemma: basic stable}
\begin{enumerate}
\item $B^{\uparrow}$ and $Y^{\downarrow}$ are Galois-stable.
\item $B=\bigcup_{a\in B}a^{\uparrow\downarrow}$ and $Y=\bigcup_{y\in Y}y^{\downarrow\uparrow}$ for any Galois-stable $B$ and $Y$.
\item Galois-stable sets are closed under arbitrary intersections.
\end{enumerate}
\end{lemma}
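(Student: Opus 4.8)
The plan is to derive all three items directly from Lemma~\ref{lemma: basic}, specialised to the incidence relation $S = I$, so that $B^{\uparrow} = I^{\uparrow}[B]$ and $Y^{\downarrow} = I^{\downarrow}[Y]$ throughout. Item 1 is then essentially a restatement of item 3 of Lemma~\ref{lemma: basic}: since $(B^{\uparrow})^{\downarrow\uparrow} = I^{\uparrow}[I^{\downarrow}[I^{\uparrow}[B]]]$, which by that item equals $I^{\uparrow}[B] = B^{\uparrow}$, the set $B^{\uparrow}$ satisfies $B^{\uparrow} = (B^{\uparrow})^{\downarrow\uparrow}$ and is thus Galois-stable. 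The statement for $Y^{\downarrow}$ follows dually from the second equality of item 3 of Lemma~\ref{lemma: basic}.

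For item 2 I would prove the two inclusions separately. The inclusion $B \subseteq \bigcup_{a\in B} a^{\uparrow\downarrow}$ holds for arbitrary $B$: applying item 2 of Lemma~\ref{lemma: basic} to the singleton $\{a\}$ gives $a \in \{a\}^{\uparrow\downarrow} = a^{\uparrow\downarrow}$ for every $a\in B$. For the reverse inclusion I would use antitonicity (item 1 of Lemma~\ref{lemma: basic}) twice: from $\{a\}\subseteq B$ we obtain $B^{\uparrow}\subseteq a^{\uparrow}$, and applying $I^{\downarrow}[\cdot]$ yields $a^{\uparrow\downarrow}\subseteq B^{\uparrow\downarrow}$; since $B$ is Galois-stable, $B^{\uparrow\downarrow} = B$, whence $a^{\uparrow\downarrow}\subseteq B$ for each $a\in B$ and so $\bigcup_{a\in B} a^{\uparrow\downarrow}\subseteq B$. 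The claim for Galois-stable $Y$ is entirely symmetric.

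For item 3, let $\{B_j\mid j\in J\}$ be a family of Galois-stable subsets of $A$. The key move is to exhibit the intersection as the image of a single downward polarity. Using Galois-stability of each $B_j$ and then the first identity of item 4 of Lemma~\ref{lemma: basic}, I would compute $\bigcap_{j\in J} B_j = \bigcap_{j\in J} I^{\downarrow}[B_j^{\uparrow}] = I^{\downarrow}[\bigcup_{j\in J} B_j^{\uparrow}]$, which is of the form $Y^{\downarrow}$ for $Y = \bigcup_{j\in J} B_j^{\uparrow}$. By item 1, already proved, every such set is Galois-stable, so $\bigcap_{j\in J} B_j$ is Galois-stable; the dual argument handles stable subsets of $X$, and the edge case $J = \emptyset$ is covered automatically, since the empty union gives $I^{\downarrow}[\emptyset] = A$.

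Since every step is a direct instantiation of Lemma~\ref{lemma: basic}, I do not anticipate a genuine obstacle. The only point requiring care is item 3: naively expanding $(\bigcap_{j} B_j)^{\uparrow\downarrow}$ does not visibly simplify, and the argument only runs smoothly once one recognises that a Galois-stable set is precisely a set of the form $Y^{\downarrow}$, which reduces item 3 to item 1 via the join-to-meet conversion furnished by item 4 of Lemma~\ref{lemma: basic}.
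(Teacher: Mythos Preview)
Your proposal is correct. The paper only spells out item~2 (leaving items~1 and~3 as well-known facts with a reference to \cite{DaveyPriestley2002}), and your argument for item~2 is essentially identical to the paper's; your additional proofs of items~1 and~3 via Lemma~\ref{lemma: basic} are sound and fill in what the paper omits.
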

\begin{proof} For item 2, since $a^{\uparrow\downarrow}\supseteq\{a\}$, we have that $B\subseteq\bigcup_{a\in B}a^{\uparrow\downarrow}$. For the other direction, if $\{a\}\subseteq B$ then $a^{\uparrow\downarrow}\subseteq B^{\uparrow\downarrow}$. Since $B$ is Galois-stable, we have that $B=B^{\uparrow\downarrow}$. Hence $a^{\uparrow\downarrow}\subseteq B$ for any $a\in B$, which implies that $\bigcup_{a\in B}a^{\uparrow\downarrow}\subseteq B$. The proof for $Y$ is analogous.
\end{proof}
\begin{definition}
\label{def:I-compatible rel}
	For any  $\mathbb{P}=(A,X,I)$, any $R\subseteq A\times X$  is $I$-{\em compatible} if $R^{\downarrow}[x]$ % $R^{\downarrow}[x^{\downarrow\uparrow}]$
and $R^{\uparrow}[a]$  %$R^{\uparrow}[a^{\uparrow\downarrow}]$
are Galois-stable for all $x$ and $a$.
\end{definition}
By Lemma \ref{lemma: basic} (3), $I$ is an $I$-compatible relation.
\begin{lemma}\label{lemma:comp1}
	If $R\subseteq A\times X$ is  $I$-compatible,  then $R^\downarrow[Y]=R^\downarrow[Y^{\downarrow\uparrow}]$ and $R^\uparrow[B]=R^\uparrow[B^{\uparrow\downarrow}]$.
\end{lemma}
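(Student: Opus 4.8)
The plan is to prove only the first identity $R^\downarrow[Y] = R^\downarrow[Y^{\downarrow\uparrow}]$ in detail, since the second follows by an entirely symmetric argument, swapping the roles of $A$ and $X$, of $^\uparrow$ and $^\downarrow$, and invoking the Galois-stability of $R^\uparrow[a]$ in place of that of $R^\downarrow[x]$. Throughout I would exploit that, exactly like $I$, the relation $R$ induces an antitone Galois connection between $\mathcal{P}(A)$ and $\mathcal{P}(X)$ via the maps $R^\uparrow$ and $R^\downarrow$; in particular $B \subseteq R^\downarrow[Z]$ iff $Z \subseteq R^\uparrow[B]$, and all four items of Lemma \ref{lemma: basic} hold verbatim with $S = R$.

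The inclusion $R^\downarrow[Y^{\downarrow\uparrow}] \subseteq R^\downarrow[Y]$ is immediate: since $Y \subseteq Y^{\downarrow\uparrow}$ by Lemma \ref{lemma: basic}(2), the antitonicity of $R^\downarrow$ recorded in Lemma \ref{lemma: basic}(1) (applied with $S = R$) yields the claimed inclusion at once. The substantive direction is $R^\downarrow[Y] \subseteq R^\downarrow[Y^{\downarrow\uparrow}]$. By the Galois connection for $R$, this inclusion is equivalent to $Y^{\downarrow\uparrow} \subseteq R^\uparrow[R^\downarrow[Y]]$, so it suffices to establish the latter.

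The key auxiliary fact I would establish first is that $R^\uparrow[B]$ is Galois-stable for \emph{every} $B \subseteq A$, not merely for singletons. This follows by writing $B = \bigcup_{a\in B}\{a\}$ and applying Lemma \ref{lemma: basic}(4) with $S = R$ to obtain $R^\uparrow[B] = \bigcap_{a\in B} R^\uparrow[a]$; each section $R^\uparrow[a]$ is Galois-stable by the $I$-compatibility of $R$, and Galois-stable sets are closed under arbitrary intersections by Lemma \ref{lemma: basic stable}(3), so $R^\uparrow[B]$ is Galois-stable. Applying this with $B = R^\downarrow[Y]$, the set $R^\uparrow[R^\downarrow[Y]]$ is Galois-stable; it contains $Y$ by Lemma \ref{lemma: basic}(2) with $S = R$; and since $Y^{\downarrow\uparrow}$ is the least Galois-stable set containing $Y$, we conclude $Y^{\downarrow\uparrow} \subseteq R^\uparrow[R^\downarrow[Y]]$, which is precisely what was required.

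I expect the main obstacle to be this bootstrapping step. The definition of $I$-compatibility postulates Galois-stability only of the singleton sections $R^\downarrow[x]$ and $R^\uparrow[a]$, whereas the argument needs stability of $R^\uparrow[B]$ for arbitrary $B$. Recognizing that this upgrade is supplied exactly by the decomposition into an intersection of singleton sections (Lemma \ref{lemma: basic}(4)) together with closure under intersection (Lemma \ref{lemma: basic stable}(3)) is the crux; once it is in hand, the rest is routine manipulation of the two Galois connections. I would also take care to record that $Y^{\downarrow\uparrow}$ is the smallest Galois-stable set containing $Y$, a standard closure-operator fact following from Lemma \ref{lemma: basic stable}(1) and the monotonicity of $Y \mapsto Y^{\downarrow\uparrow}$, since it is this minimality that licenses the passage from $Y \subseteq R^\uparrow[R^\downarrow[Y]]$ to $Y^{\downarrow\uparrow} \subseteq R^\uparrow[R^\downarrow[Y]]$.
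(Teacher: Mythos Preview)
Your proof is correct, but the paper takes a more direct pointwise route that sidesteps exactly the bootstrapping step you flag as the main obstacle. For the nontrivial inclusion, the paper fixes $a \in R^\downarrow[Y]$, rewrites this as $Y \subseteq R^\uparrow[a]$, applies the closure operator $(\cdot)^{\downarrow\uparrow}$ to both sides, and then uses directly that $R^\uparrow[a]$ is Galois-stable---which is literally one half of the definition of $I$-compatibility---to conclude $Y^{\downarrow\uparrow} \subseteq (R^\uparrow[a])^{\downarrow\uparrow} = R^\uparrow[a]$, i.e.\ $a \in R^\downarrow[Y^{\downarrow\uparrow}]$. Your approach---first upgrading Galois-stability from the singleton sections $R^\uparrow[a]$ to arbitrary $R^\uparrow[B]$ via Lemma~\ref{lemma: basic}(4) and Lemma~\ref{lemma: basic stable}(3), and then invoking the minimality of $Y^{\downarrow\uparrow}$ among Galois-stable supersets of $Y$---is perfectly valid and even yields a slightly stronger intermediate fact (essentially the dual of Lemma~\ref{lemma:comp2}); but it is not needed here, since working one $a$ at a time already lets the definition of $I$-compatibility do all the work.
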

\begin{proof} By Lemma \ref{lemma: basic} (2), we have $Y\subseteq Y^{\downarrow\uparrow}$, which implies  $R^\downarrow[Y^{\downarrow\uparrow}]\subseteq R^\downarrow[Y]$ by Lemma \ref{lemma: basic} (1). Conversely, if $a\in R^\downarrow[Y]$, i.e.\ $Y\subseteq R^{\uparrow}[a]$, then $Y^{\downarrow\uparrow}\subseteq (R^{\uparrow}[a])^{\downarrow\uparrow} = R^{\uparrow}[a]$, the last identity holding since $R$ is $I$-compatible. Hence, $a\in R^\downarrow[Y^{\downarrow\uparrow}]$, as required. The proof of the second identity is similar.
%	Notice that  since $Y\subseteq I^\uparrow[I^\downarrow[Y]]$ and \ $B\subseteq I^\downarrow[I^\uparrow[B]]$) it follows that $R^\downarrow[I^\uparrow[I^\downarrow[Y]]]\subseteq R^\downarrow[Y]$ and  $R^\uparrow[I^\downarrow[I^\uparrow[B]]]\subseteq R^\uparrow[B]$. Therefore we only need to show that $ R^\downarrow[Y]\subseteq R^\downarrow[I^\uparrow[I^\downarrow[Y]]]$ and $R^\uparrow[B]\subseteq R^\uparrow[I^\downarrow[I^\uparrow[B]]]$.
%	
%	Let $a\in R^\downarrow[Y]$, i.e.\ $R[a]\supseteq Y$. This implies that $I^\uparrow[I^\downarrow[R[a]]]\supseteq I^\uparrow[^I\downarrow[Y]]$. Since $R[a]$ is Galois-stable we have that $R[a]\supseteq  I^\uparrow[I^\downarrow[Y]]$, i.e.\ $a\in R^\downarrow[I^\uparrow[I^\downarrow[Y]]]$.
%	
%	Let $x\in R^\uparrow[B]$, i.e.\ $R^{-1}[x]\supseteq B$. This implies that $I^\downarrow[I^\uparrow[R^{-1}[x]]]\supseteq I^\downarrow[I^\uparrow[B]]$. Since $R^{-1}[x]$ is Galois-stable we have that $R^{-1}[x]\supseteq I^\downarrow[I^\uparrow[B]]$, i.e.\ $x\in R^\uparrow[I^\downarrow[I^\uparrow[B]]]$.
\end{proof}

\begin{lemma}\label{lemma:comp2}
	If $R$ is  $I$-compatible and $Y$ is Galois-stable, then $R^{\downarrow}[Y]$ is Galois-stable.
\end{lemma}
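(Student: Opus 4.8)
The plan is to realize $R^{\downarrow}[Y]$ as an intersection of Galois-stable sets and then invoke closure of the Galois-stable sets under intersection. First I would decompose the index set elementwise, writing $Y=\bigcup_{y\in Y}\{y\}$, and apply Lemma \ref{lemma: basic}(4) to the relation $R$ to obtain $R^{\downarrow}[Y]=R^{\downarrow}\!\left[\bigcup_{y\in Y}\{y\}\right]=\bigcap_{y\in Y}R^{\downarrow}[\{y\}]$. This is the crux: it turns the ``global'' set $R^{\downarrow}[Y]$ into an intersection of the ``pointwise'' sets $R^{\downarrow}[\{y\}]=R^{\downarrow}[y]$, which are exactly the objects controlled by the hypothesis on $R$.

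Next I would feed in the two remaining ingredients. By Definition \ref{def:I-compatible rel}, $I$-compatibility of $R$ means precisely that each $R^{\downarrow}[y]$ is Galois-stable (and symmetrically each $R^{\uparrow}[a]$); so every member $R^{\downarrow}[\{y\}]$ of the family above is Galois-stable. Finally, by Lemma \ref{lemma: basic stable}(3), Galois-stable sets are closed under arbitrary intersections, whence $R^{\downarrow}[Y]=\bigcap_{y\in Y}R^{\downarrow}[\{y\}]$ is itself Galois-stable, as required. This chains Lemma \ref{lemma: basic}(4) (turning $\downarrow$ of a union into an intersection of $\downarrow$'s), Definition \ref{def:I-compatible rel} (pointwise stability), and Lemma \ref{lemma: basic stable}(3) (stability of intersections) into a one-line computation.

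I do not expect a genuine obstacle here, since the argument is a direct composition of facts already established; the only points needing a moment's care are degenerate ones. When $Y=\emptyset$ the indexing family is empty and $R^{\downarrow}[Y]=A$, so I would rely on the convention that the empty intersection is the whole carrier $A$, which is Galois-stable because $A=\ud{A}$ (indeed $A\subseteq \ud{A}\subseteq A$). It is also worth flagging that the hypothesis ``$Y$ is Galois-stable'' is in fact not used by this decomposition: the elementwise splitting $Y=\bigcup_{y\in Y}\{y\}$ works for any $Y$. If one instead preferred an argument that visibly exploits the hypothesis, one could decompose $Y$ via Lemma \ref{lemma: basic stable}(2) as $Y=\bigcup_{y\in Y}\du{y}$ and then use Lemma \ref{lemma:comp1} in the form $R^{\downarrow}[\du{y}]=R^{\downarrow}[y]$ to reduce to the same Galois-stable pieces; but the shorter route above is cleaner and I would present that one.
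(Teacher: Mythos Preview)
Your proof is correct and follows essentially the same line as the paper's: decompose $Y=\bigcup_{y\in Y}\{y\}$, apply Lemma~\ref{lemma: basic}(4) to write $R^{\downarrow}[Y]=\bigcap_{y\in Y}R^{\downarrow}[y]$, and conclude via $I$-compatibility and Lemma~\ref{lemma: basic stable}(3). Your remark that the Galois-stability hypothesis on $Y$ is not actually used by this argument is a nice additional observation that the paper does not make explicit.
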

\begin{proof}
% Lemma \ref{lemma:comp1} applied to $\{y\}$ yields $R^\downarrow[y^{\downarrow\uparrow}]=R^{\downarrow}[y]$. Hence, $R^\downarrow[y^{\downarrow\uparrow}]$ is Galois-stable, by the $I$-compatibility of $R$.
  Since $Y=\bigcup_{y\in Y}\{y\}$, by Lemma \ref{lemma: basic} (4),
	\begin{align}
	R^{\downarrow}[Y]=R^{\downarrow}[\bigcup_{y\in Y}\{y\}]=\bigcap_{y\in Y}R^{\downarrow}[\{y\}]=\bigcap_{y\in Y}R^{\downarrow}[y].
\end{align}
By the $I$-compatibility of $R$, the last term is an intersection of Galois-stable sets, which is Galois-stable (cf.~Lemma \ref{lemma: basic stable} (3)).
\end{proof}
The lemma above ensures that  the interpretation of $\mathcal{L}$-formulas   on enriched formal contexts defines a compositional semantics on formal concepts if the relations $R_i$ are  $I$-compatible.
Indeed, for every enriched formal context $\mathbb{F} = (\mathbb{P}, \{R_i\mid i\in \Ag\})$, every valuation $V$ on $\mathbb{F}$ extends to an interpretation map of $\mathcal{L}$-formulas defined as follows:
\begin{center}
\begin{tabular}{r c l c r c l}
$V(p)$ & $ = $ & $(\val{p}, \descr{p})$
& \quad\quad\quad\quad
& $V(\phi\wedge\psi)$ & $ = $ & $(\val{\phi}\cap \val{\psi}, (\val{\phi}\cap \val{\psi})^{\uparrow})$
\\
$V(\top)$ & $ = $ & $(A, A^{\uparrow})$
&
& $V(\phi\vee\psi)$ & $ = $ & $((\descr{\phi}\cap \descr{\psi})^{\downarrow}, \descr{\phi}\cap \descr{\psi})$ \\
 $V(\bot)$ & $ = $ & $(X^{\downarrow}, X)$
&
& $V(\Box_i\phi)$ & $ = $ & $(R_i^{\downarrow}[\descr{\phi}], (R_i^{\downarrow}[\descr{\phi}])^{\uparrow})$\\
\end{tabular}
\end{center}
By Lemma \ref{lemma:comp2}, if $V(\phi)$ is a formal concept, then so is $V(\Box_i\phi)$.
\begin{definition}
\label{def:compositional efc}
An enriched formal context $\mathbb{F} = (\mathbb{P}, \{R_i\mid i\in \Ag\})$ is {\em compositional} if $R_i$ is $I$-compatible (cf.~Definition \ref{def:I-compatible rel}) for every $i\in \Ag$. A model $\mathbb{M} = (\mathbb{F}, V)$ is {\em compositional} if so is $\mathbb{F}$.
\end{definition}
\subsection{The interpretation of $C$ is well defined}
\label{ssec:semantics common knowledge}
 For any formal context $\mathbb{P}=(A,X,I)$ the $I$-{\em product} of the relations $R_s, R_t\subseteq A\times X$ is the relation  $R_{st}\subseteq A\times X$ defined as follows: \[a\in R_{st}^{\downarrow}[x]\mbox{ iff }  a\in R^\downarrow_s\left[I^\uparrow\left[R^\downarrow_t[x^{\downarrow\uparrow}]\right]\right].\]

\begin{lemma}\label{lemma:comp3}
	If $R_s$ and $R_t$ are $I$-compatible, then $R_{st}$ is $I$-compatible.
\end{lemma}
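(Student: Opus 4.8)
The plan is to verify the two defining conditions of $I$-compatibility (Definition \ref{def:I-compatible rel}) for $R_{st}$ separately: that $R_{st}^{\downarrow}[x]$ is Galois-stable for every $x\in X$, and that $R_{st}^{\uparrow}[a]$ is Galois-stable for every $a\in A$. The first is almost immediate from the definition; the second requires rewriting $R_{st}^{\uparrow}[a]$ into a usable closed form, and this is where the real work lies.

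For the first condition I would read $R_{st}^{\downarrow}[x]=R_s^{\downarrow}[I^{\uparrow}[R_t^{\downarrow}[x^{\downarrow\uparrow}]]]$ off the definition and peel it from the inside out. The set $x^{\downarrow\uparrow}$ is Galois-stable (Lemma \ref{lemma: basic stable}(1)); applying $R_t^{\downarrow}[\cdot]$ to a Galois-stable set yields a Galois-stable set by Lemma \ref{lemma:comp2} (since $R_t$ is $I$-compatible); applying $I^{\uparrow}[\cdot]$ to any set yields a Galois-stable set (again Lemma \ref{lemma: basic stable}(1)); and applying $R_s^{\downarrow}[\cdot]$ to the result yields a Galois-stable set by Lemma \ref{lemma:comp2} (since $R_s$ is $I$-compatible). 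Hence $R_{st}^{\downarrow}[x]$ is Galois-stable.

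The second condition is the crux, since the definition exposes $R_{st}$ only through $R_{st}^{\downarrow}$. My strategy is to derive a dual closed form for $R_{st}^{\uparrow}[a]$ by a chain of adjunction rewritings, starting from the equivalence $x\in R_{st}^{\uparrow}[a]$ iff $a\in R_{st}^{\downarrow}[x]$. Writing $P:=R_s^{\uparrow}[a]$ (Galois-stable by $I$-compatibility of $R_s$) and $B:=R_t^{\downarrow}[x^{\downarrow\uparrow}]$ (Galois-stable by Lemma \ref{lemma:comp2}), the adjunction $\{a\}\subseteq R_s^{\downarrow}[Z]\Leftrightarrow Z\subseteq R_s^{\uparrow}[a]$ (immediate from the definitions) turns $a\in R_s^{\downarrow}[I^{\uparrow}[B]]$ into $I^{\uparrow}[B]\subseteq P$. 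The key algebraic step is that for Galois-stable $B$ and $P$ one has $B^{\uparrow}\subseteq P$ iff $P^{\downarrow}\subseteq B$, each direction following by applying $^{\downarrow}$ resp.\ $^{\uparrow}$ (antitone, Lemma \ref{lemma: basic}(1)) and using $B^{\uparrow\downarrow}=B$ resp.\ $P^{\downarrow\uparrow}=P$. This converts the condition into $I^{\downarrow}[R_s^{\uparrow}[a]]\subseteq R_t^{\downarrow}[x^{\downarrow\uparrow}]$, and a final application of the adjunction for $R_t$ rewrites it as $x^{\downarrow\uparrow}\subseteq R_t^{\uparrow}[C]$, where $C:=I^{\downarrow}[R_s^{\uparrow}[a]]$.

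To close, I would observe that $Q:=R_t^{\uparrow}[C]=\bigcap_{c\in C}R_t^{\uparrow}[c]$ is Galois-stable, being an intersection of the Galois-stable sets $R_t^{\uparrow}[c]$ (Lemma \ref{lemma: basic}(4) and Lemma \ref{lemma: basic stable}(3)), and that for Galois-stable $Q$ the inclusion $x^{\downarrow\uparrow}\subseteq Q$ is equivalent to $x\in Q$ (forward because $x\in x^{\downarrow\uparrow}$, backward by monotonicity of $^{\downarrow\uparrow}$ together with $Q^{\downarrow\uparrow}=Q$). This yields the clean identity $R_{st}^{\uparrow}[a]=R_t^{\uparrow}[I^{\downarrow}[R_s^{\uparrow}[a]]]$, dual to the defining formula for $R_{st}^{\downarrow}[x]$, and manifestly Galois-stable, which completes the verification. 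The \emph{main obstacle} is exactly the bookkeeping of these adjunction swaps together with the stability-dependent equivalence $B^{\uparrow}\subseteq P\Leftrightarrow P^{\downarrow}\subseteq B$; once the dual form is isolated, Galois-stability is read off immediately.
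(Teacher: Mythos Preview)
Your argument is correct. The treatment of $R_{st}^{\downarrow}[x]$ matches the paper's: both peel the composite $R_s^{\downarrow}\circ I^{\uparrow}\circ R_t^{\downarrow}$ from the inside using Lemma~\ref{lemma:comp2} and Lemma~\ref{lemma: basic stable}(1).

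For $R_{st}^{\uparrow}[a]$ you take a genuinely different route. The paper argues pointwise: from $y\in R_{st}^{\uparrow}[a]$ and $x\in y^{\downarrow\uparrow}$ it gets $x^{\downarrow\uparrow}\subseteq y^{\downarrow\uparrow}$, and then the antitonicity of each of $R_s^{\downarrow}$, $I^{\uparrow}$, $R_t^{\downarrow}$ (Lemma~\ref{lemma: basic}(1)) yields $R_{st}^{\downarrow}[y]\subseteq R_{st}^{\downarrow}[x]$, hence $x\in R_{st}^{\uparrow}[a]$; Galois-stability is then claimed via Lemma~\ref{lemma: basic stable}(2). You instead unwind the Galois adjunctions step by step to obtain the closed dual form
\[
R_{st}^{\uparrow}[a]\;=\;R_t^{\uparrow}\bigl[I^{\downarrow}[R_s^{\uparrow}[a]]\bigr],
\]
from which Galois-stability is immediate (as an intersection of the stable sets $R_t^{\uparrow}[c]$). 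The paper's approach is shorter and uses nothing beyond antitonicity; yours costs a few more lines of adjunction bookkeeping but delivers an explicit description of $R_{st}^{\uparrow}[a]$ that is perfectly symmetric to the defining formula for $R_{st}^{\downarrow}[x]$. That dual formula is a genuine dividend: it makes the $I$-compatibility of the product transparent without appealing to the (somewhat delicate) reduction ``closure under $y\mapsto y^{\downarrow\uparrow}$ implies Galois-stable'' that the paper invokes, and it could be reused elsewhere in the appendix.
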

\begin{proof}
$R^{\downarrow}_{st}[x]$ being Galois-stable follows from the definition of $R_{st}$, Lemma \ref{lemma:comp2}, and  the $I$-compatibility of $R_s$ and $R_t$.
	To show that $R_{st}^{\uparrow}[a]$ is Galois-stable, i.e.~$(R_{st}^{\uparrow}[a])^{\downarrow\uparrow}\subseteq R_{st}^{\uparrow}[a]$, by Lemma \ref{lemma: basic stable} (2), it is enough to show that  if $y\in R_{st}^{\uparrow}[a]$ then $y^{\downarrow\uparrow}\subseteq R_{st}^{\uparrow}[a]$. Let $y\in R_{st}^{\uparrow}[a]$, i.e.~$a\in R_{st}^{\downarrow}[y] = R^\downarrow_s\left[I^\uparrow\left[R^\downarrow_t[y^{\downarrow\uparrow}]\right]\right]$. If
$x\in y^{\downarrow\uparrow}$, then $x^{\downarrow\uparrow}\subseteq y^{\downarrow\uparrow}$, which implies, by the antitonicity of  $R_s^\downarrow$, $I^\uparrow$ and $R^\downarrow_t$ (cf.~Lemma \ref{lemma: basic} (1)), that $R^\downarrow_s\left[I^\uparrow\left[R^\downarrow_t[y^{\downarrow\uparrow}]\right]\right]\subseteq R^\downarrow_s\left[I^\uparrow\left[R^\downarrow_t[x^{\downarrow\uparrow}]\right]\right]$. Hence, $a\in R_{st}^{\downarrow}[x]$, i.e.~$x\in R_{st}^{\uparrow}[a]$, as required.
\end{proof}
The definition of $I$-product serves to characterize semantically  the  relation associated with the modal operators  $\Box_s:=\Box_{i_1}\cdots\Box_{i_n}$ for any finite nonempty sequence $s: = i_1\cdots i_n\in S$ of elements of $\Ag$, in terms of the relations associated with each primitive modal operator. For any such $s$, let $R_s$ be defined recursively as follows:
\begin{itemize}
	\item If $s=i$, then $R_s=R_i$;
	\item If $s=it$, then $R_s^{\downarrow}[x] =  R^\downarrow_i\left[I^\uparrow\left[R^\downarrow_t\left[x^{\downarrow\uparrow}\right]\right]\right]$.
\end{itemize}

Lemma \ref{lemma:comp3} immediately implies that
\begin{corollary}
 For every $s\in S$, the relation $R_s$ is $I$-compatible.
\end{corollary}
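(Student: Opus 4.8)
The plan is to proceed by induction on the length of the sequence $s\in S$, reducing both the base case and the inductive step directly to facts already established.

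First I would dispose of the base case: when $s=i$ has length one, the recursive clause gives $R_s=R_i$, and $R_i$ is $I$-compatible by our standing hypothesis that the enriched formal context is compositional (cf.~Definition \ref{def:compositional efc}), i.e.~every $R_i$ with $i\in\Ag$ is $I$-compatible. For the inductive step, I would take $s\in S$ of length $n\geq 2$ and write it as $s=it$, where $i\in\Ag$ is the leading element and $t\in S$ is the tail, a nonempty sequence of length $n-1$. By the recursive clause, $R_s^{\downarrow}[x]=R_i^{\downarrow}[I^{\uparrow}[R_t^{\downarrow}[x^{\downarrow\uparrow}]]]$ for every $x$, which is exactly the defining equation of the $I$-product $R_{it}$ of $R_i$ and $R_t$. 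The inductive hypothesis gives that $R_t$ is $I$-compatible, and $R_i$ is $I$-compatible by the base hypothesis; hence Lemma \ref{lemma:comp3} applies and yields that $R_{it}=R_s$ is $I$-compatible, completing the induction.

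All the real work is carried by Lemma \ref{lemma:comp3}, so there is no genuine obstacle here — this is the \emph{immediate} consequence the statement announces. The only points deserving a line of care are bookkeeping ones: that the induction is well-founded, which holds because the tail $t$ is strictly shorter than $s$ and remains nonempty (so the recursion never appeals to a relation indexed by the empty sequence), and that one decomposes $s=it$ (matching the recursive definition's shape) rather than $s=ti$, so that the invocation of Lemma \ref{lemma:comp3} is with the arguments in the correct roles, namely $R_i$ as the outer relation and $R_t$ as the inner one. With these checked, the appeal to Lemma \ref{lemma:comp3} is legitimate and the corollary follows.
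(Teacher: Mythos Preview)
Your proposal is correct and matches the paper's approach: the paper simply states that the corollary follows immediately from Lemma~\ref{lemma:comp3}, and your induction on the length of $s$ is precisely the routine unpacking of that remark. The bookkeeping points you flag (decomposing $s=it$ to match the recursive definition, and the base case relying on compositionality) are exactly right.
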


\begin{lemma}\label{lemma:comp4}
If $Y$ is	Galois-stable  and $R_s,R_t$ are $I$-compatible, then $R^\downarrow_{st}[Y]=R^\downarrow_s[I^\uparrow[R^\downarrow_t[Y]]]$.
\end{lemma}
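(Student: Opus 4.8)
The plan is to reduce the claim to the single-element case that is built into the recursive definition of the $I$-product, and then to reassemble the pieces using the fact that the polarity maps turn unions into intersections (Lemma \ref{lemma: basic}(4)). First I would decompose the left-hand side over the elements of $Y$: writing $Y=\bigcup_{y\in Y}\{y\}$ and applying Lemma \ref{lemma: basic}(4) to $R^\downarrow_{st}$ gives $R^\downarrow_{st}[Y]=\bigcap_{y\in Y}R^\downarrow_{st}[y]$, and unfolding the definition of the $I$-product on singletons turns this into $R^\downarrow_{st}[Y]=\bigcap_{y\in Y}R^\downarrow_s[I^\uparrow[R^\downarrow_t[y^{\downarrow\uparrow}]]]$. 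Note that this first reduction does not yet use that $Y$ is Galois-stable.

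Next I would collapse the intersection into a single application of $R^\downarrow_s$. Set $B_y:=R^\downarrow_t[y^{\downarrow\uparrow}]$; since $y^{\downarrow\uparrow}$ is Galois-stable (Lemma \ref{lemma: basic stable}(1)) and $R_t$ is $I$-compatible, Lemma \ref{lemma:comp2} makes each $B_y$ Galois-stable. Reading Lemma \ref{lemma: basic}(4) from right to left, now with $S=R_s$ and the family $\{I^\uparrow[B_y]\mid y\in Y\}$, I obtain $\bigcap_{y\in Y}R^\downarrow_s[I^\uparrow[B_y]]=R^\downarrow_s[\bigcup_{y\in Y}I^\uparrow[B_y]]$. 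So the whole problem reduces to comparing the feature-set $\bigcup_{y}I^\uparrow[B_y]$ with $I^\uparrow[R^\downarrow_t[Y]]$.

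The main obstacle is that $I^\uparrow$ does not commute with intersections, so $\bigcup_{y}I^\uparrow[B_y]$ and $I^\uparrow[\bigcap_y B_y]$ need not coincide on the nose. I would sidestep this using Lemma \ref{lemma:comp1}: since $R_s$ is $I$-compatible, $R^\downarrow_s[W]=R^\downarrow_s[W^{\downarrow\uparrow}]$ for every $W$, so it is enough to show the two feature-sets have the same Galois closure. For this I compute the lower closure of the union: applying Lemma \ref{lemma: basic}(4) to $I^\downarrow$ and using that each $B_y$ is Galois-stable, $(\bigcup_{y}I^\uparrow[B_y])^{\downarrow}=\bigcap_y (B_y^{\uparrow})^{\downarrow}=\bigcap_y B_y^{\uparrow\downarrow}=\bigcap_y B_y$. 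Finally, here I invoke the Galois-stability of $Y$ via Lemma \ref{lemma: basic stable}(2), namely $Y=\bigcup_{y\in Y}y^{\downarrow\uparrow}$, so that Lemma \ref{lemma: basic}(4) gives $\bigcap_y B_y=\bigcap_y R^\downarrow_t[y^{\downarrow\uparrow}]=R^\downarrow_t[Y]$. Hence $(\bigcup_y I^\uparrow[B_y])^{\downarrow\uparrow}=(R^\downarrow_t[Y])^{\uparrow}=I^\uparrow[R^\downarrow_t[Y]]$, and applying $R^\downarrow_s$ together with Lemma \ref{lemma:comp1} collapses $R^\downarrow_s[\bigcup_y I^\uparrow[B_y]]$ to $R^\downarrow_s[I^\uparrow[R^\downarrow_t[Y]]]$, which is exactly the right-hand side. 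This closes the chain of equalities, so the two inclusions come for free. The only place where Galois-stability of $Y$ is genuinely used is the identification $\bigcup_{y\in Y}y^{\downarrow\uparrow}=Y$; every other step holds for arbitrary $Y$.
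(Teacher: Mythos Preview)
Your proof is correct and is essentially the paper's own argument run in the opposite direction: the paper starts from $R^\downarrow_s[I^\uparrow[R^\downarrow_t[Y]]]$ and works toward $R^\downarrow_{st}[Y]$, but the chain of equalities uses exactly the same ingredients in the same places --- Lemma~\ref{lemma: basic}(4) to pass between unions and intersections, Galois-stability of each $R^\downarrow_t[y^{\downarrow\uparrow}]$, Lemma~\ref{lemma:comp1} to strip the closure $(\cdot)^{\downarrow\uparrow}$ under $R^\downarrow_s$, and Lemma~\ref{lemma: basic stable}(2) for the single genuine use of the Galois-stability of $Y$.
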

\begin{proof}$\quad$
	\begin{center}
\begin{tabular}{r c l l}
	    $R_s^\downarrow[I^\uparrow[R^\downarrow_t[Y]]]$ &
     = & $R_s^\downarrow[I^\uparrow[R^\downarrow_t[\bigcup_{x\in Y}x^{\downarrow\uparrow}]]]$ & Lemma \ref{lemma: basic stable} (2)\\
	& = &  $R_s^\downarrow[I^\uparrow[\bigcap_{x\in Y}R^\downarrow_t[x^{\downarrow\uparrow}]]]$ & Lemma \ref{lemma: basic} (4)  \\
    & = &  $R_s^\downarrow[I^\uparrow[\bigcap_{x\in Y}I^\downarrow[I^\uparrow[R^\downarrow_t[x^{\downarrow\uparrow}]]]]]$ & $R^\downarrow_t[x^{\downarrow\uparrow}]$ Galois-stable  \\

	& = &  $R_s^\downarrow[I^\uparrow[I^\downarrow[\bigcup_{x\in Y}I^\uparrow[R^\downarrow_t[x^{\downarrow\uparrow}]]]]]$ & Lemma \ref{lemma: basic} (4)\\
	& = &  $R_s^\downarrow[\bigcup_{x\in Y}I^\uparrow[R^\downarrow_t[x^{\downarrow\uparrow}]]] $ & Lemma \ref{lemma:comp1}\\
	& = & $\bigcap_{x\in Y} R_s^\downarrow[I^\uparrow[R^\downarrow_t[x^{\downarrow\uparrow}]]]$ & Lemma \ref{lemma: basic} (4)\\
	& = & $\bigcap_{x\in Y} R_{st}^\downarrow[x]$ & Definition of $R_{st}$ \\
	& = & $ R_{st}^\downarrow[\bigcup_{x\in Y}x]$ & Lemma \ref{lemma: basic} (4) \\
& = & $ R_{st}^\downarrow[Y]$ & $Y = \bigcup_{x\in Y}x$ \\
	\end{tabular}
\end{center}
\end{proof}

\begin{lemma}
\label{lemma:associativity of I product}
	If $R_s,R_t,R_w$ are $I$-compatible,  $R_{s(tw)}=R_{(st)w}$.
\end{lemma}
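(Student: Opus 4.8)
The plan is to prove the two relations equal by comparing their \emph{columns}. Since any relation $R \subseteq A \times X$ is completely determined by the family $\{R^\downarrow[x] \mid x \in X\}$ (because $R^\downarrow[x] = \{a \mid aRx\}$), it suffices to establish $R_{s(tw)}^\downarrow[x] = R_{(st)w}^\downarrow[x]$ for every $x \in X$. I would then expand each side using the recursive definition of the $I$-product together with Lemma \ref{lemma:comp4}, and check that both collapse to the same nested expression $R_s^\downarrow[I^\uparrow[R_t^\downarrow[I^\uparrow[R_w^\downarrow[x^{\downarrow\uparrow}]]]]]$.

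For the left-hand side, the definition of the $I$-product of $R_s$ and $R_{tw}$ gives $R_{s(tw)}^\downarrow[x] = R_s^\downarrow[I^\uparrow[R_{tw}^\downarrow[x^{\downarrow\uparrow}]]]$. Since $x^{\downarrow\uparrow}$ is Galois-stable (it lies in the image of $I^\uparrow$, cf.\ Lemma \ref{lemma: basic stable}(1)) and $R_t, R_w$ are $I$-compatible, Lemma \ref{lemma:comp4} rewrites the inner term as $R_{tw}^\downarrow[x^{\downarrow\uparrow}] = R_t^\downarrow[I^\uparrow[R_w^\downarrow[x^{\downarrow\uparrow}]]]$, yielding the target nested expression.

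For the right-hand side, the key preliminary observation is that $R_{st}$ is itself $I$-compatible, by Lemma \ref{lemma:comp3}, so that the $I$-product of $R_{st}$ and $R_w$ is defined and gives $R_{(st)w}^\downarrow[x] = R_{st}^\downarrow[I^\uparrow[R_w^\downarrow[x^{\downarrow\uparrow}]]]$. Here the set $I^\uparrow[R_w^\downarrow[x^{\downarrow\uparrow}]]$ is Galois-stable, again because it lies in the image of $I^\uparrow$ (Lemma \ref{lemma: basic stable}(1)). I can therefore apply Lemma \ref{lemma:comp4} to the product $R_{st}$ of the $I$-compatible relations $R_s, R_t$, with $Y = I^\uparrow[R_w^\downarrow[x^{\downarrow\uparrow}]]$, obtaining $R_{st}^\downarrow[Y] = R_s^\downarrow[I^\uparrow[R_t^\downarrow[Y]]] = R_s^\downarrow[I^\uparrow[R_t^\downarrow[I^\uparrow[R_w^\downarrow[x^{\downarrow\uparrow}]]]]]$, which matches the left-hand side.

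There is no deep conceptual obstacle once Lemma \ref{lemma:comp4} is available; the work is essentially bookkeeping. The one point demanding care is ensuring at each step that Lemma \ref{lemma:comp4} is applied to a genuinely Galois-stable argument and to a genuine $I$-product of $I$-compatible relations — in particular, invoking Lemma \ref{lemma:comp3} to know that the compound relation $R_{st}$ (resp.\ $R_{tw}$) is $I$-compatible, so that its defining clause and Lemma \ref{lemma:comp4} legitimately apply to it. Since both sides reduce to the identical alternation $R_s^\downarrow, I^\uparrow, R_t^\downarrow, I^\uparrow, R_w^\downarrow$ applied to $x^{\downarrow\uparrow}$, associativity of the $I$-product follows.
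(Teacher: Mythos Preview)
Your proposal is correct and follows essentially the same approach as the paper: both arguments show $R_{s(tw)}^{\downarrow}[x] = R_{(st)w}^{\downarrow}[x]$ for arbitrary $x$ by expanding via the definition of the $I$-product and applying Lemma~\ref{lemma:comp4} twice to reach the common nested form $R_s^{\downarrow}[I^{\uparrow}[R_t^{\downarrow}[I^{\uparrow}[R_w^{\downarrow}[x^{\downarrow\uparrow}]]]]]$. Your write-up is slightly more explicit about verifying the Galois-stability hypotheses at each invocation of Lemma~\ref{lemma:comp4}; note, however, that the $I$-product is defined for arbitrary relations, so Lemma~\ref{lemma:comp3} is not needed merely for $R_{(st)w}$ to make sense (though it does no harm).
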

\begin{proof} for any $x$,
	\begin{center}
\begin{tabular}{r c l l}
	   $R^{\downarrow}_{s(tw)}[x]$ 
   & = & $R^\downarrow_s[I^\uparrow[R^\downarrow_{tw}[x^{\downarrow\uparrow}]]]$ & definition of $I$-product\\
   & = & $R^\downarrow_s[I^\uparrow[R^\downarrow_{t}[I^\uparrow[R^\downarrow_{w}[x^{\downarrow\uparrow}]]]]]$ & Lemma \ref{lemma:comp4}\\
& = & $R^\downarrow_{st}[I^\uparrow[R^\downarrow_{w}[x^{\downarrow\uparrow}]]]$. & Lemma \ref{lemma:comp4}\\
 &= &   $R^{\downarrow}_{(st)w}[x]$ & definition of $I$-product \\
 \end{tabular}
 \end{center}
%	We have $$R^{-1}_{s(tw)}[x]=R^\downarrow_s[I^\uparrow[R^\downarrow_t[I^\uparrow[R^\downarrow_w[x\uparrow]]]]].$$	On the other hand $R^{-1}_{(st)w}=R^\downarrow_{st}[I^\uparrow[R^\downarrow_w[x\uparrow]]]$. Using Lemma \ref{lemma:comp4} we obtain  $$R^\downarrow_{st}[I^\uparrow[R^\downarrow_w[x\uparrow]]]=R^\downarrow_s[I^\uparrow[R^\downarrow_t[I^\uparrow[R^\downarrow_w[x\uparrow]]]]].$$ This concludes the proof.
\end{proof}

Let $s=i_1\cdots i_n\in S$, and let $\Box_s:=\Box_{i_1}\cdots\Box_{i_n}$.
\begin{lemma}
For any model $\mathbb{M}=(\mathbb{F},V)$,
	\begin{center}
		\begin{tabular}{llll}
			$\mathbb{M}, a \Vdash \Box_s\phi$ & iff & for all $x\in X$, if $\mathbb{M}, x \succ \phi$, then $a R_s x$& \\
			$\mathbb{M}, x \succ \Box_s\phi$ & iff & for all $a\in A$, if $\mathbb{M}, a \Vdash \Box_s\phi$, then $a I x$.\\
			%
			%$\mathbb{M}, a \Vdash \Diamondblack\phi$ & iff & for all $x\in X$, if $\mathbb{M}, x \succ \Diamondblack\phi$, then $a \perp x$   &\\
			%$\mathbb{M}, x \succ \Diamondblack\phi$ & iff &  for all $a\in A$, if $\mathbb{M}, a \Vdash \phi$, then $a R x$. \\
		\end{tabular}
	\end{center}
\end{lemma}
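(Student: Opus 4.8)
The plan is to reduce each of the two biconditionals to a purely set-theoretic identity between Galois-stable sets, and then to settle the nontrivial identity by induction on the length of $s$. For the first biconditional, note that $\mathbb{M}, a \Vdash \Box_s\phi$ means $a\in\val{\Box_s\phi}$, whereas the right-hand side ``for all $x$, if $\mathbb{M},x\succ\phi$ then $aR_sx$'' unfolds to $\descr{\phi}\subseteq R_s^{\uparrow}[a]$, i.e.~$a\in R_s^{\downarrow}[\descr{\phi}]$. Since $a$ is arbitrary, the first biconditional is equivalent to the identity
\[\val{\Box_s\phi} = R_s^{\downarrow}[\descr{\phi}].\]
The second biconditional is immediate from the definition of the description of a formal concept: for any formula $\psi$ one has $\mathbb{M},x\succ\psi$ iff $x\in\descr{\psi}=(\val{\psi})^{\uparrow}$, and by the very definition of $(\cdot)^{\uparrow}$ this says exactly that $aIx$ for every $a\in\val{\psi}$, i.e.~for every $a$ with $\mathbb{M},a\Vdash\psi$. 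So it will suffice to establish the displayed identity for every $s\in S$.

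I would prove this identity by induction on the length $n$ of $s$. For the base case $n=1$, i.e.~$s=i$, we have $\Box_s=\Box_i$ and $R_s=R_i$, so the required $\val{\Box_i\phi}=R_i^{\downarrow}[\descr{\phi}]$ is precisely the interpretation clause recorded in the definition of $V(\Box_i\phi)$. For the inductive step I would write $s=it$ with $t$ of length $n-1$, so that $\Box_s\phi=\Box_i(\Box_t\phi)$. Applying the single-box clause once gives $\val{\Box_s\phi}=R_i^{\downarrow}[\descr{\Box_t\phi}]$, while the inductive hypothesis yields $\val{\Box_t\phi}=R_t^{\downarrow}[\descr{\phi}]$, and therefore $\descr{\Box_t\phi}=(\val{\Box_t\phi})^{\uparrow}=I^{\uparrow}[R_t^{\downarrow}[\descr{\phi}]]$. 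Substituting, I obtain $\val{\Box_s\phi}=R_i^{\downarrow}[I^{\uparrow}[R_t^{\downarrow}[\descr{\phi}]]]$.

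The crux is to recognize this nested expression as $R_s^{\downarrow}[\descr{\phi}]$. Here I would invoke Lemma \ref{lemma:comp4}: since $\descr{\phi}$ is Galois-stable (being the intent of a formal concept) and $R_i, R_t$ are $I$-compatible (by the Corollary following Lemma \ref{lemma:comp3}), the lemma delivers $R_{it}^{\downarrow}[\descr{\phi}]=R_i^{\downarrow}[I^{\uparrow}[R_t^{\downarrow}[\descr{\phi}]]]$. Because $R_s=R_{it}$ by the recursive definition of $R_s$, this is exactly the desired identity, closing the induction. The main obstacle I anticipate is bookkeeping about Galois-stability, ensuring that the hypotheses of Lemma \ref{lemma:comp4} are genuinely met at each step, and correctly threading the inductive hypothesis through the $(\cdot)^{\uparrow}$ operator that produces $\descr{\Box_t\phi}$ from $\val{\Box_t\phi}$; the associativity of the $I$-product (Lemma \ref{lemma:associativity of I product}) reassures us that the left-associated decomposition $s=it$ used above agrees with any other bracketing of the sequence $s$.
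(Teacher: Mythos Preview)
Your proposal is correct and follows essentially the same route as the paper: reduce the first biconditional to the identity $\val{\Box_s\phi}=R_s^{\downarrow}[\descr{\phi}]$, prove it by induction on the length of $s$ using the single-box clause, the inductive hypothesis, and Lemma~\ref{lemma:comp4} for the key step $R_i^{\downarrow}[I^{\uparrow}[R_t^{\downarrow}[\descr{\phi}]]]=R_{it}^{\downarrow}[\descr{\phi}]$, and observe that the second biconditional is trivial. The only difference is expository: you spell out the set-theoretic translation and the Galois-stability bookkeeping more explicitly, and mention associativity (which is not actually needed here, since the recursive definition of $R_s$ already fixes the left-associated decomposition).
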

\begin{proof}
By induction on the length of $s$. The base case is immediate. Let $s=it$. Then
$\val{\Box_i\Box_t\phi} = R^\downarrow_i[\descr{\Box_t\phi}] = R^\downarrow_i[I^\uparrow[\val{\Box_t\phi}]] = R^\downarrow_i[I^\uparrow[R^\downarrow_t[\descr{\phi}]]] = R^\downarrow_s[\descr{\phi}].$
The last equality holds by Lemma \ref{lemma:comp4}.
The second equivalence is trivially true.
\end{proof}

\begin{lemma}
\label{lemma:bigcap I compatible rels }
	For any family $\mathcal{R}$ of $I$-compatible relations,
\begin{enumerate}
\item $\bigcap\mathcal{R}$  is an $I$-compatible relation.
\item $(\bigcap\mathcal{R})^{\downarrow}[Y] = \bigcap_{T\in \mathcal{R}}T^{\downarrow}[Y]$ for any   $Y\subseteq X$.
\end{enumerate}
\end{lemma}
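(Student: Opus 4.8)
The plan is to first establish, at the level of arbitrary subsets, that both $(\cdot)^{\downarrow}$ and $(\cdot)^{\uparrow}$ commute with the intersection defining $\bigcap\mathcal{R}$. Item 2 is exactly one of these identities, and item 1 will follow by specializing both of them to singletons and invoking the closure of Galois-stable sets under intersection.

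First I would prove, for every $Y\subseteq X$ and every $B\subseteq A$, the two identities
\[(\bigcap\mathcal{R})^{\downarrow}[Y] = \bigcap_{T\in\mathcal{R}}T^{\downarrow}[Y] \qquad\text{and}\qquad (\bigcap\mathcal{R})^{\uparrow}[B] = \bigcap_{T\in\mathcal{R}}T^{\uparrow}[B].\]
Both are pure quantifier manipulations. Unfolding the definitions, $a\in(\bigcap\mathcal{R})^{\downarrow}[Y]$ says that for all $x\in Y$ we have $a(\bigcap\mathcal{R})x$, i.e.\ that for all $x\in Y$ and all $T\in\mathcal{R}$ one has $aTx$; whereas $a\in\bigcap_{T\in\mathcal{R}}T^{\downarrow}[Y]$ says that for all $T\in\mathcal{R}$ and all $x\in Y$ one has $aTx$. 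These two statements are equivalent by swapping the order of the two universal quantifiers, which gives the first identity; the second is obtained dually (quantifying over $a\in B$ in place of $x\in Y$). This already settles item 2.

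For item 1, I would instantiate the two identities above at the singletons $Y=\{x\}$ and $B=\{a\}$, obtaining $(\bigcap\mathcal{R})^{\downarrow}[x]=\bigcap_{T\in\mathcal{R}}T^{\downarrow}[x]$ and $(\bigcap\mathcal{R})^{\uparrow}[a]=\bigcap_{T\in\mathcal{R}}T^{\uparrow}[a]$. Since each $T\in\mathcal{R}$ is $I$-compatible, every $T^{\downarrow}[x]$ and every $T^{\uparrow}[a]$ is Galois-stable; and since Galois-stable sets are closed under arbitrary intersections (Lemma \ref{lemma: basic stable} (3)), the two displayed intersections are Galois-stable as well. Hence $(\bigcap\mathcal{R})^{\downarrow}[x]$ and $(\bigcap\mathcal{R})^{\uparrow}[a]$ are Galois-stable for all $x$ and all $a$, which is precisely the $I$-compatibility of $\bigcap\mathcal{R}$ according to Definition \ref{def:I-compatible rel}.

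I do not expect a genuine obstacle: the mathematical content is just the commutation of $(\cdot)^{\downarrow}$ and $(\cdot)^{\uparrow}$ with intersections, together with the closure of Galois-stable sets under intersection. The only points requiring mild care are making the quantifier swap explicit and handling the $(\cdot)^{\uparrow}$ side symmetrically alongside the $(\cdot)^{\downarrow}$ side, since $I$-compatibility is a two-sided condition and item 1 needs both halves.
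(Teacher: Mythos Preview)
Your proposal is correct and follows essentially the same approach as the paper: both arguments reduce item 1 to the singleton identities $(\bigcap\mathcal{R})^{\downarrow}[x]=\bigcap_{T}T^{\downarrow}[x]$ and $(\bigcap\mathcal{R})^{\uparrow}[a]=\bigcap_{T}T^{\uparrow}[a]$ together with Lemma~\ref{lemma: basic stable}(3). For item 2, your direct quantifier-swap argument is slightly more elementary than the paper's, which instead decomposes $Y$ as $\bigcup_{y\in Y}\{y\}$ and applies Lemma~\ref{lemma: basic}(4) twice; both routes are equally valid.
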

\begin{proof}
 Let  $R=\bigcap\mathcal{R}$. Then $R^{\downarrow}[x]=\bigcap_{T\in \mathcal{R}}T^{\downarrow}[x]$ and $R^{\uparrow}[a]=\bigcap_{T\in \mathcal{R}}T^{\uparrow}[a]$. Then the statement follows from Lemma \ref{lemma: basic stable} (3).
As to item (2),  %$$\bigcap_{T\in \mathcal{R}}T^{\downarrow}[\bigcup_{y\in Y}y]=\bigcap_{T\in \mathcal{R}}\bigcap_{y\in Y}T^{\downarrow}[y]= \bigcap_{y\in Y}\bigcap_{T\in \mathcal{R}}T^{\downarrow}[y]=\bigcap_{y\in Y}\bigcap\mathcal{R}^{\downarrow}[y]=\bigcap\mathcal{R}^{\downarrow}[\bigcup_{y\in Y}y]$$
	\begin{center}
	\begin{tabular}{rcll}
		   $\bigcap_{T\in \mathcal{R}}T^{\downarrow}[Y]$ 
		& = &$\bigcap_{T\in \mathcal{R}}T^{\downarrow}[\bigcup_{y\in Y}y]$ & $Y = \bigcup_{y\in Y}y$ \\
		& = &$\bigcap_{T\in \mathcal{R}}\bigcap_{y\in Y}T^{\downarrow}[y]$ & Lemma \ref{lemma: basic} (4)\\
		& = &$\bigcap_{y\in Y}\bigcap_{T\in \mathcal{R}}T^{\downarrow}[y]$ & associativity, commutativity of $\bigcap$\\
		& = &$\bigcap_{y\in Y}(\bigcap\mathcal{R})^{\downarrow}[y]$ & definition of $(\cdot)^\downarrow$\\
		& = &$(\bigcap\mathcal{R})^{\downarrow}[\bigcup_{y\in Y}y]$ & Lemma \ref{lemma: basic} (4)\\
		& = &$(\bigcap\mathcal{R})^{\downarrow}[Y]. $ & $Y = \bigcup_{y\in Y}y$ %
		%
		%$\mathbb{M}, a \Vdash \Diamondblack\phi$ & iff & for all $x\in X$, if $\mathbb{M}, x \succ \Diamondblack\phi$, then $a \perp x$   &\\
		%$\mathbb{M}, x \succ \Diamondblack\phi$ & iff &  for all $a\in A$, if $\mathbb{M}, a \Vdash \phi$, then $a R x$. \\
	\end{tabular}
\end{center}

\end{proof}
The lemmas above ensure that, in enriched formal contexts in which the relations $R_i$ are  $I$-compatible, the relation $R_C: = \bigcap_{s\in S}R_s$ is  $I$-compatible, and hence the interpretation of $\Lmu$-formulas   on the model based on these enriched formal contexts defines a compositional semantics on formal concepts.
Indeed, for every such enriched formal context $\mathbb{F} = (\mathbb{P}, \{R_i\mid i\in \Ag\})$, every valuation $V$ on $\mathbb{F}$ extends to an interpretation map of $C$-formulas  as follows:
\begin{center}
\begin{tabular}{r c l}
$V(C(\phi))$ & $ = $ & $(R_C^{\downarrow}[\descr{\phi}], (R_C^{\downarrow}[\descr{\phi}])^{\uparrow})$\\
\end{tabular}
\end{center}
so that if $V(\phi)$ is a formal concept, then so is $V(\Box_i\phi)$. Moreover, the following identity is semantically supported:
\[C(\phi) = \bigwedge_{s\in S}\Box_s\phi,\]
where $s: = i_1\cdots i_n$ is any finite nonempty string of elements of $\Ag$, and $\Box_s: = \Box_{i_1}\cdots \Box_{i_n}$.
\subsection{Soundness}
\label{ssec:soundness}
\begin{proposition}
\label{prop:soundness monotonicity multipl}
For any compositional model $\mathbb{M}$ and any $i\in \Ag$,
\begin{enumerate}
\item if $\mathbb{M}\models \phi\vdash \psi$, then $\mathbb{M}\models \Box_i\phi\vdash \Box_i\psi$;
\item  $\mathbb{M}\models \top\vdash \Box_i\top$;
\item $\mathbb{M}\models \Box_i\phi\wedge \Box_i\psi\vdash \Box_i(\phi\wedge\psi)$.
\end{enumerate}	
\end{proposition}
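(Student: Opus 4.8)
The plan is to reduce each of the three sequents to an inclusion of extents, since $\mathbb{M}\models\phi\vdash\psi$ holds exactly when $\val{\phi}\subseteq\val{\psi}$, and then to compute using the clause $\val{\Box_i\phi}=R_i^\downarrow[\descr{\phi}]$ together with the $I$-compatibility of $R_i$. The one background fact I would invoke repeatedly is the antitone extent--intent correspondence for formal concepts: since $\descr{\phi}=\val{\phi}^\uparrow$ and $\val{\phi}=\descr{\phi}^\downarrow$, Lemma \ref{lemma: basic}(1) gives $\val{\phi}\subseteq\val{\psi}$ iff $\descr{\psi}\subseteq\descr{\phi}$. Item~1 is then immediate: assuming $\val{\phi}\subseteq\val{\psi}$ I get $\descr{\psi}\subseteq\descr{\phi}$, and a second application of the antitonicity in Lemma \ref{lemma: basic}(1) (now to $(\cdot)^\downarrow$ for the relation $R_i$) yields $R_i^\downarrow[\descr{\phi}]\subseteq R_i^\downarrow[\descr{\psi}]$, i.e.\ $\val{\Box_i\phi}\subseteq\val{\Box_i\psi}$.

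For item~2, recall that $V(\top)=(A,A^\uparrow)$, so $\val{\Box_i\top}=R_i^\downarrow[A^\uparrow]$ and it suffices to show $A\subseteq R_i^\downarrow[A^\uparrow]$, equivalently $A^\uparrow\subseteq R_i^\uparrow[a]$ for every $a\in A$. The key observation is that $A^\uparrow$, being the intent of the top concept, is contained in \emph{every} intent: any Galois-stable $Y\subseteq X$ satisfies $Y=Y^{\downarrow\uparrow}=(Y^\downarrow)^\uparrow\supseteq A^\uparrow$, because $Y^\downarrow\subseteq A$ and $(\cdot)^\uparrow$ is antitone (Lemma \ref{lemma: basic}(1)). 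Since $R_i$ is $I$-compatible, each set $R_i^\uparrow[a]$ is Galois-stable, hence contains $A^\uparrow$, which is precisely $a\in R_i^\downarrow[A^\uparrow]$.

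Item~3 is where I would do the real work, and in fact I expect to prove the equality $\val{\Box_i\phi}\cap\val{\Box_i\psi}=\val{\Box_i(\phi\wedge\psi)}$. The crucial algebraic identity is that the intent of a meet is the Galois closure of the union of the intents. Using $\val{\phi}=\descr{\phi}^\downarrow$, $\val{\psi}=\descr{\psi}^\downarrow$ and Lemma \ref{lemma: basic}(4), one computes $\val{\phi\wedge\psi}=\val{\phi}\cap\val{\psi}=(\descr{\phi}\cup\descr{\psi})^\downarrow$, and therefore $\descr{\phi\wedge\psi}=(\descr{\phi}\cup\descr{\psi})^{\downarrow\uparrow}$. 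At this point $I$-compatibility enters decisively through Lemma \ref{lemma:comp1}, which allows me to strip the outer closure: $R_i^\downarrow[\descr{\phi\wedge\psi}]=R_i^\downarrow[(\descr{\phi}\cup\descr{\psi})^{\downarrow\uparrow}]=R_i^\downarrow[\descr{\phi}\cup\descr{\psi}]$. A final application of Lemma \ref{lemma: basic}(4) converts the union inside $R_i^\downarrow$ into an intersection outside, $R_i^\downarrow[\descr{\phi}\cup\descr{\psi}]=R_i^\downarrow[\descr{\phi}]\cap R_i^\downarrow[\descr{\psi}]=\val{\Box_i\phi}\cap\val{\Box_i\psi}$, which gives the equality and in particular the required sequent.

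The hard part is precisely this last item. A naive computation using only $\descr{\phi\wedge\psi}=(\val{\phi}\cap\val{\psi})^\uparrow\supseteq\descr{\phi}\cup\descr{\psi}$ and the antitonicity of $R_i^\downarrow$ delivers the inclusion in the \emph{wrong} direction, $\val{\Box_i(\phi\wedge\psi)}\subseteq\val{\Box_i\phi}\cap\val{\Box_i\psi}$. What salvages the argument is that $R_i^\downarrow$ is insensitive to Galois closure (Lemma \ref{lemma:comp1}), a property that holds exactly because $R_i$ is $I$-compatible; without this hypothesis the multiplicativity axiom would in general fail. So I would make sure to flag that items~2 and~3 are the two places where $I$-compatibility, rather than bare monotonicity, is genuinely used.
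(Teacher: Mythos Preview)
Your proposal is correct and follows essentially the same route as the paper's proof: item~1 via antitonicity of $(\cdot)^\uparrow$ and $R_i^\downarrow$ (Lemma~\ref{lemma: basic}(1)), item~2 via $I$-compatibility forcing $A^\uparrow\subseteq R_i^\uparrow[a]$, and item~3 via the chain $R_i^\downarrow[\descr{\phi}]\cap R_i^\downarrow[\descr{\psi}] = R_i^\downarrow[\descr{\phi}\cup\descr{\psi}] = R_i^\downarrow[(\descr{\phi}\cup\descr{\psi})^{\downarrow\uparrow}] = R_i^\downarrow[\descr{\phi\wedge\psi}]$ using Lemma~\ref{lemma: basic}(4) and Lemma~\ref{lemma:comp1}. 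The only cosmetic difference is that the paper runs the item~3 computation left-to-right (starting from the conjunction of boxes) whereas you start from $\descr{\phi\wedge\psi}$; the key steps and lemmas invoked are identical.
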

\begin{proof}
By Lemma \ref{lemma: basic} (1), if $\val{\phi}\subseteq \val{\psi}$ then $$\val{\Box_i\phi} = R_i^\downarrow[I^\uparrow[\val{\phi}]]\subseteq R_i^\downarrow[I^\uparrow[\val{\psi}]] = \val{\Box_i\psi},$$ which proves item (1). 
As to item (2), it is enough to show that $\val{\Box_i\top}=A$. By definition, $\val{\Box_i\top}=R_i^\downarrow[\descr{\top}]=R_i^\downarrow[A^\uparrow],$ hence it is enough to show that $R_i^\downarrow[A^\uparrow]=A$. The assumption of $I$-compatibility implies that $R_i^\uparrow[a]$ is Galois-stable for every $a\in A$, and hence $A^\uparrow\subseteq R_i^\uparrow[a]$. Thus by adjunction $a\in R_i^\downarrow[A^\uparrow]$ for every $a\in A$, which implies that $R_i^\downarrow[A^\uparrow]=A$, as required. 
As to item (3),
	\begin{center}
		\begin{tabular}{rlll}
			$\val{\Box(\phi)\land \Box(\psi)}$ 
            & = &$R^\downarrow[\descr{\phi}]\cap R^\downarrow[\descr{\psi}] $& definition of $\val{\cdot}$ \\
			& = &$R^\downarrow[\descr{\phi}\cup\descr{\psi}] $& Lemma \ref{lemma: basic} (4)\\
			& = &$R^\downarrow[I^\uparrow[I^\downarrow[\descr{\phi}\cup\descr{\psi}]]]$& Lemma \ref{lemma:comp1}\\
			& = &$R^\downarrow[I^\uparrow[I^\downarrow[\descr{\phi}]\cap I^\downarrow[\descr{\psi}]]]$& Lemma \ref{lemma: basic} (4)\\
			& = &$R^\downarrow[I^\uparrow[\val{\phi}\cap \val{\psi}]]$& $V(\phi), V(\phi)$ formal concepts\\
			& = &$ R^\downarrow[I^\uparrow[\val{\phi\land \psi}]]$& definition of $\val{\cdot}$ \\
			& = & $\val{\Box(\phi\land \psi)}$.& definition of $\val{\cdot}$ \\
			%
			%
			%$\mathbb{M}, a \Vdash \Diamondblack\phi$ & iff & for all $x\in X$, if $\mathbb{M}, x \succ \Diamondblack\phi$, then $a \perp x$   &\\
			%$\mathbb{M}, x \succ \Diamondblack\phi$ & iff &  for all $a\in A$, if $\mathbb{M}, a \Vdash \phi$, then $a R x$. \\
		\end{tabular}
	\end{center}
\end{proof}
\begin{proposition}
\label{prop:soundness}
For any compositional model $\mathbb{M}$,
\begin{enumerate}
\item $\mathbb{M}\models C(\phi)\vdash \bigwedge\{\Box_i\phi\land\Box_i C(\phi)\mid i\in \Ag\}$;
\item if $\mathbb{M}\models \chi\vdash\bigwedge_{i\in \Ag}\Box_i \phi$ and $\mathbb{M}\models \chi\vdash\bigwedge_{i\in \Ag}\Box_i\chi$, then $\mathbb{M}\models \chi\vdash C(\phi)$.
\end{enumerate}	
\end{proposition}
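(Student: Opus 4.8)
The plan is to reduce every sequent to an inclusion of extensions, using that $\mathbb{M}\models\phi\vdash\psi$ iff $\val{\phi}\subseteq\val{\psi}$, that the extension of a finite meet is the intersection of the extensions of its conjuncts, and the identity $\val{C(\phi)}=\bigcap_{s\in S}\val{\Box_s\phi}$ already established before the Soundness section (where $S$ is the set of finite nonempty strings over $\Ag$). I shall also use freely that $\val{\Box_s\phi}=R_s^{\downarrow}[\descr{\phi}]$ and that $\val{\Box_i\Box_s\phi}=\val{\Box_{is}\phi}$ (the semantic counterpart of $\Box_i\Box_s=\Box_{is}$, obtained from Lemma \ref{lemma:comp4} since $\descr{\phi}$ is Galois-stable), together with the monotonicity of $\Box_i$ from Proposition \ref{prop:soundness monotonicity multipl}(1).

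For item (1), since $\Ag$ is finite the right-hand side has extension $\bigcap_{i\in\Ag}\big(\val{\Box_i\phi}\cap\val{\Box_iC(\phi)}\big)$, so it suffices to show $\val{C(\phi)}\subseteq\val{\Box_i\phi}$ and $\val{C(\phi)}\subseteq\val{\Box_iC(\phi)}$ for each $i\in\Ag$. The first inclusion is immediate: $i$ is itself a length-one element of $S$, so $\val{\Box_i\phi}$ is one of the sets in the intersection $\bigcap_{s\in S}\val{\Box_s\phi}=\val{C(\phi)}$. For the second inclusion I would first establish that $\Box_i$ preserves arbitrary meets, i.e.\ $\val{\Box_i\bigwedge_j\psi_j}=\bigcap_j\val{\Box_i\psi_j}$: starting from $\bigcap_j\val{\Box_i\psi_j}=R_i^{\downarrow}[\bigcup_j I^{\uparrow}[\val{\psi_j}]]$ (Lemma \ref{lemma: basic}(4)), one rewrites $R_i^{\downarrow}$ on the Galois closure of its argument via Lemma \ref{lemma:comp1}, and then uses $(\bigcup_j I^{\uparrow}[\val{\psi_j}])^{\downarrow}=\bigcap_j\val{\psi_j}$ (Lemma \ref{lemma: basic}(4) together with Galois-stability of each $\val{\psi_j}$) to identify $(\bigcup_j I^{\uparrow}[\val{\psi_j}])^{\downarrow\uparrow}$ with $I^{\uparrow}[\bigcap_j\val{\psi_j}]$, landing on $\val{\Box_i\bigwedge_j\psi_j}$. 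Applying this with $\psi_s=\Box_s\phi$ gives $\val{\Box_iC(\phi)}=\bigcap_{s\in S}\val{\Box_i\Box_s\phi}=\bigcap_{s\in S}\val{\Box_{is}\phi}$. Since $\{is\mid s\in S\}\subseteq S$, this is an intersection over a subfamily of the one defining $\val{C(\phi)}$, whence $\val{C(\phi)}=\bigcap_{t\in S}\val{\Box_t\phi}\subseteq\bigcap_{s\in S}\val{\Box_{is}\phi}=\val{\Box_iC(\phi)}$, as desired.

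For item (2), assume $\val{\chi}\subseteq\val{\Box_i\phi}$ and $\val{\chi}\subseteq\val{\Box_i\chi}$ for every $i\in\Ag$; the goal $\val{\chi}\subseteq\val{C(\phi)}=\bigcap_{s\in S}\val{\Box_s\phi}$ reduces to proving $\val{\chi}\subseteq\val{\Box_s\phi}$ for every $s\in S$, which I would do by induction on the length of $s$. The base case $s=i$ is the first hypothesis. For the step $s=it$, the induction hypothesis $\val{\chi}\subseteq\val{\Box_t\phi}$ and monotonicity of $\Box_i$ (Proposition \ref{prop:soundness monotonicity multipl}(1)) give $\val{\Box_i\chi}\subseteq\val{\Box_i\Box_t\phi}=\val{\Box_{it}\phi}$; chaining with the second hypothesis $\val{\chi}\subseteq\val{\Box_i\chi}$ yields $\val{\chi}\subseteq\val{\Box_{it}\phi}=\val{\Box_s\phi}$, completing the induction.

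I expect the only real obstacle to be the second inclusion of item (1): unlike the finite multiplicativity of Proposition \ref{prop:soundness monotonicity multipl}(3), the unfolding $C(\phi)\vdash\Box_iC(\phi)$ requires $\Box_i$ to commute with the \emph{infinite} meet $\bigwedge_{s\in S}\Box_s\phi$, and the nontrivial direction of this depends on the Galois ``de Morgan'' identity $(\bigcup_j I^{\uparrow}[\val{\psi_j}])^{\downarrow\uparrow}=I^{\uparrow}[\bigcap_j\val{\psi_j}]$ for Galois-stable $\val{\psi_j}$, together with the fact (Lemma \ref{lemma:comp1}) that $R_i^{\downarrow}$ only sees the Galois closure of its argument. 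Everything else is bookkeeping with Lemma \ref{lemma: basic} and the already-proved identity for $\val{C(\phi)}$.
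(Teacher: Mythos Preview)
Your proposal is correct and follows essentially the same route as the paper: for item (1) you arrive at $\val{\Box_iC(\phi)}=\bigcap_{s\in S}\val{\Box_{is}\phi}\supseteq\val{C(\phi)}$ via the same chain of Lemma \ref{lemma: basic}(4), Lemma \ref{lemma:comp1}, Galois-stability of the $R_s^{\downarrow}[\descr{\phi}]$, and Lemma \ref{lemma:comp4}, only you package the computation as a general ``$\Box_i$ preserves arbitrary meets of formal concepts'' statement while the paper runs the identical equalities directly on the specific family $\{\Box_s\phi\}_{s\in S}$. For item (2) your induction on the length of $s$ is exactly what the paper means by ``using Proposition \ref{prop:soundness monotonicity multipl}(1) and the assumptions, one can show that $\mathbb{M}\models\chi\vdash\Box_s\phi$ for every $s\in S$''.
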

\begin{proof}
By definition and Lemma \ref{lemma:bigcap I compatible rels } (2), $\val{C(\phi)} = R^\downarrow_C[\descr{\phi}]=\bigcap_{s\in S}R^\downarrow_s[\descr{\phi}]\subseteq \bigcap_{i\in \Ag}R^\downarrow_i[\descr{\phi}]$, which proves $\mathbb{M}\models C(\phi)\vdash \bigwedge\{\Box_i\phi\mid i\in \Ag\}$. Let $i\in \Ag$. The following chain of (in)equalities completes the proof of item (1):
	\begin{center}
		\begin{tabular}{rllll}
			 $\val{\Box_iC(\phi)}$ 
            & = & $R^\downarrow_i[I^\uparrow[R^\downarrow_C[\descr{\phi}]]]$& definition of $\val{\cdot}$ \\
			& = & $R^\downarrow_i[I^\uparrow[\bigcap_{s\in S}R^\downarrow_s[\descr{\phi}]]]$& Lemma \ref{lemma:bigcap I compatible rels } (2)\\
            & = & $R^\downarrow_i[I^\uparrow[\bigcap_{s\in S}I^\downarrow[I^\uparrow[R^\downarrow_s[\descr{\phi}]]]]]$& $R^\downarrow_s[\descr{\phi}]$ Galois-stable\\
			& = & $R^\downarrow_i[I^\uparrow[I^\downarrow[\bigcup_{s\in S}I^\uparrow[R^\downarrow_s[\descr{\phi}]]]]]$& Lemma \ref{lemma: basic} (4) \\
			& = & $R^\downarrow_i[\bigcup_{s\in S}I^\uparrow[R^\downarrow_s[\descr{\phi}]]]$& Lemma \ref{lemma:comp1} \\
			& = & $\bigcap_{s\in S}R^\downarrow_i[I^\uparrow[R^\downarrow_s[\descr{\phi}]]]$& Lemma \ref{lemma: basic} (4)\\
            & = & $\bigcap_{s\in S}R^\downarrow_{is}[\descr{\phi}]$& Lemma \ref{lemma:comp4} \\
            & $\supseteq$ & $\bigcap_{s\in S}R^\downarrow_{s}[\descr{\phi}]$& $\{is\mid s\in S\}\subseteq S$\\
            &  =  & $\val{C(\phi)}$.& Lemma \ref{lemma:bigcap I compatible rels } (2)\\
            %& = & $\forall s\in Sa\in R^\downarrow_i[I^\uparrow[R^\downarrow_s[\descr{\phi}]]]$& \\
			%& = & $\forall s\in Sa\in R^\downarrow_{is}[\descr{\phi}]$& \\
			%
			%
			%$\mathbb{M}, a \Vdash \Diamondblack\phi$ & iff & for all $x\in X$, if $\mathbb{M}, x \succ \Diamondblack\phi$, then $a \perp x$   &\\
			%$\mathbb{M}, x \succ \Diamondblack\phi$ & iff &  for all $a\in A$, if $\mathbb{M}, a \Vdash \phi$, then $a R x$. \\
		\end{tabular}
	\end{center}
As to item (2), using Proposition \ref{prop:soundness monotonicity multipl} (1) and the assumptions, one can show that $\mathbb{M}\models \chi\vdash \Box_s\phi$ for every $s\in S$. Hence, $\val{\chi}\subseteq \bigcap_{s\in S}R_{s}^{\downarrow}[\descr{\phi}] = R_C^{\downarrow}[\descr{\phi}] = \val{C(\phi)}$, as required.
%Assume now that $\mathbb{M}, a\Vdash\varphi$ implies that  $\mathbb{M}, a\Vdash\Box_i\varphi$ for every $i\in\Ag$. We will show that  $\mathbb{M}, a\Vdash\varphi$ implies that $\mathbb{M}, a\Vdash C(\varphi)$.
%
%Assume that $\mathbb{M}, a\Vdash\varphi$. We have that $\mathbb{M}, a\Vdash\Box_i\varphi$ for every $i\in\Ag$. This implies by induction and the monotonicity of the $\Box_i$ operators that for any $s=i_1\cdots i_n$ that $a\Vdash\Box_s\varphi$, i.e. $a\in R^\downarrow_s[\descr{\varphi}]$. Therefore $a\in R^\downarrow_C[\descr{\varphi}]$, i.e.$\mathbb{M}, a\Vdash C(\varphi)$.
\end{proof}
\subsection{Completeness}
\label{ssec:completeness}
The completeness of $\mathbf{L}$ can be proven via a standard canonical model construction. For any lattice $\mathbb{L}$ with normal operators $\Box_i$, let $\mathbb{F}_{\mathbb{L}} = (\mathbb{P}_{\mathbb{L}}, \{R_i\mid i\in \Ag\})$ be defined as follows: $\mathbb{P}_{\mathbb{L}} = (A, X, I)$ where $A$ (resp.~$X$) is the set of lattice filters (resp.~ideals) of $\mathbb{L}$, and $aIx$ iff $a\cap x \neq \varnothing$. %\sabine{Alessandra: this way the formal context will be RS. I think we can just take all filters and ideals instead, what do you think?}
For every $i\in \Ag$, let $R_{i}\subseteq A\times X$ be defined by $aR_ix$ iff   if $\Box_i u\in a$  for some $u\in \mathbb{L}$ such that $u\in x$. %\sabine{Alessandra: we still need to verify that this gives an $I$-compatible relation, Apostolos, could you please check?}
In what follows, for any $a\in A$ and $x\in X$,  we let $\Box_ix: =\{\Box_i u\in\mathbb{L}\mid u\in x\}$ and $\Box_i^{-1}a:=\{u\in\mathbb{L}\mid \Box_i u\in a\}$. Hence by definition, $R_i^{\downarrow}[x] = \{a\mid a\cap \Box_i x\neq \varnothing\}$ for any $x\in X$, and $R_i^{\uparrow}[a] = \{x\mid x\cap \Box_i^{-1}a\neq \varnothing \}$ for any $a\in A$. Notice also that $\Box_i\top=\top$ implies that $\Box_{i}^{-1}a=\neq\varnothing$ for every $a\in A$.
\begin{lemma}
\label{lemma:prelim fl}
For $\mathbb{F}_{\mathbb{L}}$ as above, and any $a\in A$, $x\in X$ and $i\in \Ag$,
\begin{enumerate}
\item $I^\uparrow[R_i^\downarrow[x]]=\{y\in X\mid \Box_ix\subseteq y\}$;
\item $I^\downarrow[R_i[a]]=\{b\in A\mid \Box_i^{-1}a\subseteq b\}$;
\item $I^{\downarrow}[I^\uparrow[R_i^\downarrow[x]]]= \{b\in A\mid \Box_ix \cap b\neq\varnothing \} = R_i^\downarrow[x]$;
\item $I^{\uparrow}[I^\downarrow[R_i^{\uparrow}[a]]]=\{y\in X\mid \Box_i^{-1}a\cap y\neq \varnothing\} = R_i^{\uparrow}[a]$.
\end{enumerate}
\end{lemma}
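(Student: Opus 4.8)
The plan is to establish all four identities by double inclusion, unfolding the definitions of $I^\uparrow[\cdot]$ and $I^\downarrow[\cdot]$ together with the canonical incidence ($aIx$ iff $a\cap x\neq\varnothing$) and the two formulas $R_i^\downarrow[x]=\{a\mid a\cap\Box_i x\neq\varnothing\}$ and $R_i^\uparrow[a]=\{x\mid x\cap\Box_i^{-1}a\neq\varnothing\}$ recorded just before the statement. The arguments for items 1 and 2 are mutually dual, as are those for items 3 and 4, so I would carry out 1 and 3 in full and obtain 2 and 4 by interchanging the roles of filters and ideals and of $(\cdot)^\uparrow$ and $(\cdot)^\downarrow$.

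For item 1, rewrite $I^\uparrow[R_i^\downarrow[x]]=\{y\mid \forall a\,(a\cap\Box_i x\neq\varnothing\Rightarrow a\cap y\neq\varnothing)\}$. The inclusion $\supseteq$ is immediate: if $\Box_i x\subseteq y$ and $a$ meets $\Box_i x$, any witness already lies in $a\cap y$. For $\subseteq$ I would instantiate the universal quantifier at the principal filter generated by $\Box_i u$, for an arbitrary $\Box_i u\in\Box_i x$; this filter meets $\Box_i x$, hence meets $y$, yielding an element of $y$ above $\Box_i u$, and since $y$ is an ideal (downward closed) this forces $\Box_i u\in y$. Item 2 is the same argument with principal ideals and the upward-closure of filters.

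Items 3 and 4 then feed items 1 and 2 into a second application of $I^\downarrow$ resp.\ $I^\uparrow$; in each, the second displayed equality is merely the definition of $R_i^\downarrow[x]$ resp.\ $R_i^\uparrow[a]$, so only the first equality needs work. Using item 1, $I^\downarrow[I^\uparrow[R_i^\downarrow[x]]]=\{b\mid \forall y\,(\Box_i x\subseteq y\Rightarrow b\cap y\neq\varnothing)\}$, and $\supseteq$ follows by exhibiting a common witness $\Box_i u\in\Box_i x\cap b$. The hard direction, and the main obstacle of the whole lemma, is $\subseteq$ in item 3: instantiating at the smallest ideal containing $\Box_i x$ only produces $w\in b$ with $w\leq\bigvee_{j}\Box_i u_j$ for finitely many $u_j\in x$, whereas I need $b$ to meet $\Box_i x$ itself. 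Here I would exploit that $\Box_i$ is a \emph{normal} (hence monotone) operator together with the join-closure of the ideal $x$: setting $u:=\bigvee_j u_j\in x$ gives $\bigvee_j\Box_i u_j\leq\Box_i u$, so $w\leq\Box_i u$, and upward-closure of the filter $b$ then yields $\Box_i u\in b\cap\Box_i x$ (the empty-join case $u=\bot\in x$ being harmless).

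Item 4 is in fact easier than its formal dual, because $\Box_i^{-1}a$ is \emph{already} a filter: it is upward closed by monotonicity of $\Box_i$, meet-closed by multiplicativity ($\Box_i u\wedge\Box_i v=\Box_i(u\wedge v)$), and nonempty since $\Box_i\top=\top\in a$. Hence, using item 2 to write $I^\uparrow[I^\downarrow[R_i^\uparrow[a]]]=\{y\mid \forall b\,(\Box_i^{-1}a\subseteq b\Rightarrow b\cap y\neq\varnothing)\}$, the $\subseteq$ direction just instantiates the quantifier at $b=\Box_i^{-1}a$ and reads off $\Box_i^{-1}a\cap y\neq\varnothing$ directly, with no join-collapse required. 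This asymmetry---the preimage of a filter under the meet-preserving map $\Box_i$ is a filter, while the image of an ideal under $\Box_i$ need not be an ideal---is exactly what makes item 3 the delicate one.
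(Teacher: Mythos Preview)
Your proposal is correct and follows essentially the same route as the paper's proof: items 1 and 2 by unfolding definitions (the paper calls this ``immediate''), and items 3 and 4 by passing through the ideal $\lceil\Box_i x\rceil$ generated by $\Box_i x$ and the filter $\lfloor\Box_i^{-1}a\rfloor$ generated by $\Box_i^{-1}a$, then collapsing these back to $\Box_i x$ and $\Box_i^{-1}a$ using monotonicity of $\Box_i$ (for item 3) and meet-preservation plus $\Box_i\top=\top$ (for item 4). Your observation that $\Box_i^{-1}a$ is already a filter is exactly the content of the paper's appeal to meet-preservation, and your finite-join argument in item 3 is precisely what the paper leaves implicit when it says ``using the monotonicity of $\Box_i$''.
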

\begin{proof}
Items (1) and (2) immediately follow from the definitions of $\Box_ix$ and $\Box_i^{-1}a$. As to items (3) and (4), from the previous items it immediately follows that $I^{\downarrow}[I^\uparrow[R_i^\downarrow[x]]]= \{b\in A\mid \lceil\Box_ix\rceil \cap b\neq\varnothing \}$ and $I^{\uparrow}[I^\downarrow[R_i^{\uparrow}[a]]]=\{y\in X\mid \lfloor\Box_i^{-1}a\rfloor\cap y\neq \varnothing\}$, where $\lceil\Box_ix\rceil$ and $\lfloor\Box_i^{-1}a\rfloor$ respectively denote the ideal generated $\Box_ix$ and the filter generated by $\Box_i^{-1}a$. Then, using the monotonicity of $\Box_{i}$, one can show that   $\{b\in A\mid \lceil\Box_ix\rceil \cap b\neq\varnothing \}=\{b\in A\mid \Box_ix \cap b\neq\varnothing \}=R_i^\downarrow[x]$, and using the meet preservation of  $\Box_{i}$, one can show that $\{y\in X\mid \lfloor\Box_i^{-1}a\rfloor\cap y\neq \varnothing\}=\{y\in X\mid \Box_i^{-1}a\cap y\neq \varnothing\}=R_i^{\uparrow}[a]$, as required. Notice that the last equality holds for every $a\in A$ under the assumption that $\Box_i^{-1}a\neq \varnothing$, which, as remarked above, is guaranteed by $\Box_i$ being normal. 
%The right  to left inclusion of item (1) is immediate. Conversely, let $\Box_iu\notin y$ for some $y\in X$ and $u\in x$. Let $a$ be the filter generated by $\Box_iu$. Then $a\in R_i^\downarrow[x]$, but $a\cap y=\varnothing$. Hence $y\notin I^\uparrow[R_i^\downarrow[x]]$. The proof of item (2) is similar.
\end{proof}
Items (3) and (4) of the lemma above immediately imply that:
\begin{lemma}\label{lemmaforcorrectness}
	 $\mathbb{F}_\mathbb{L}$ is a compositional enriched formal context (cf.~Definition \ref{def:compositional efc}).
\end{lemma}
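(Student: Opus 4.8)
The plan is to observe that this lemma is an immediate corollary of Lemma \ref{lemma:prelim fl}, obtained by simply unfolding the relevant definitions. By Definition \ref{def:compositional efc}, it suffices to show that each relation $R_i$ (for $i\in\Ag$) is $I$-compatible. By Definition \ref{def:I-compatible rel}, this in turn amounts to verifying the two conditions that $R_i^{\downarrow}[x]$ is Galois-stable for every $x\in X$ and that $R_i^{\uparrow}[a]$ is Galois-stable for every $a\in A$.

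First I would recall what Galois-stability means concretely in terms of the composed polarity maps: a subset $B\subseteq A$ is Galois-stable precisely when $B = I^{\downarrow}[I^{\uparrow}[B]]$, and a subset $Y\subseteq X$ is Galois-stable precisely when $Y = I^{\uparrow}[I^{\downarrow}[Y]]$. The key point is that these two equalities, instantiated at $B = R_i^{\downarrow}[x]$ and $Y = R_i^{\uparrow}[a]$, are literally the statements proved in items (3) and (4) of Lemma \ref{lemma:prelim fl}.

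Concretely, Lemma \ref{lemma:prelim fl} (3) gives $I^{\downarrow}[I^{\uparrow}[R_i^{\downarrow}[x]]] = R_i^{\downarrow}[x]$, which is exactly the Galois-stability of $R_i^{\downarrow}[x]$; and Lemma \ref{lemma:prelim fl} (4) gives $I^{\uparrow}[I^{\downarrow}[R_i^{\uparrow}[a]]] = R_i^{\uparrow}[a]$, which is exactly the Galois-stability of $R_i^{\uparrow}[a]$. Since these hold for all $x\in X$, all $a\in A$, and all $i\in\Ag$, every $R_i$ satisfies Definition \ref{def:I-compatible rel} and is therefore $I$-compatible; hence $\mathbb{F}_\mathbb{L}$ is a compositional enriched formal context.

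I do not anticipate any genuine obstacle here, since all the substantive work has already been carried out in Lemma \ref{lemma:prelim fl} (which is where the normality of the operators $\Box_i$ is used—in particular $\Box_i\top = \top$, guaranteeing $\Box_i^{-1}a\neq\varnothing$, together with the monotonicity and meet-preservation of $\Box_i$). The only thing worth making explicit is the identification of the Galois-stability conditions with the displayed equalities of items (3) and (4); once that translation is spelled out, the proof is a one-line appeal to those items.
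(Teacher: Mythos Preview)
Your proposal is correct and follows exactly the paper's approach: the paper's proof consists of the single remark that items (3) and (4) of Lemma~\ref{lemma:prelim fl} immediately imply the result, which is precisely what you spell out by identifying Galois-stability of $R_i^{\downarrow}[x]$ and $R_i^{\uparrow}[a]$ with those two equalities.
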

Recall that $S$ is the set of nonempty finite sequences of elements of $\Ag.$
\begin{lemma}\label{mainlemmaforeazycompleteness}
	If $x$ is the ideal generated by some $u\in\mathbb{L}$, then, for every $s\in S$, $R_s^{\downarrow}[x] = \{a\mid \Box_su\in a\}$.
\end{lemma}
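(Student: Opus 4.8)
The plan is to prove the claim by induction on the length of the sequence $s\in S$. The statement to establish is that whenever $x=\lceil u\rceil$ is the (principal) ideal generated by some $u\in\mathbb{L}$, then $R_s^{\downarrow}[x]=\{a\mid \Box_s u\in a\}$ for every $s\in S$, where $\Box_s=\Box_{i_1}\cdots\Box_{i_n}$ for $s=i_1\cdots i_n$. Throughout I would exploit the recursive definition of $R_s$ given just above the statement, namely $R_s^{\downarrow}[x]=R_i^{\downarrow}[I^{\uparrow}[R_t^{\downarrow}[x^{\downarrow\uparrow}]]]$ when $s=it$, together with the explicit description of the canonical relations and the identities collected in Lemma \ref{lemma:prelim fl}.

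For the base case $s=i$, I would simply unfold the definition of $R_i^{\downarrow}[x]$ in the canonical context. By the definition of $R_i$ we have $R_i^{\downarrow}[x]=\{a\mid a\cap\Box_i x\neq\varnothing\}$, and since $x=\lceil u\rceil$ is generated by $u$, the set $\Box_i x=\{\Box_i v\mid v\in x\}$ has $\Box_i u$ as a witness. Using the meet-preservation (normality) of $\Box_i$ exactly as in the proof of Lemma \ref{lemma:prelim fl} (items (3) and (4)), the condition $a\cap\Box_i x\neq\varnothing$ reduces to $\Box_i u\in a$, which gives $R_i^{\downarrow}[x]=\{a\mid \Box_i u\in a\}$ as required.

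For the inductive step, suppose $s=it$ with the claim holding for $t$. Since $x=\lceil u\rceil$ is already an ideal, $x^{\downarrow\uparrow}=x$ (a principal ideal is Galois-stable in the canonical context), so the recursion simplifies to $R_s^{\downarrow}[x]=R_i^{\downarrow}[I^{\uparrow}[R_t^{\downarrow}[x]]]$. By the induction hypothesis $R_t^{\downarrow}[x]=\{b\mid \Box_t u\in b\}=I^{\downarrow}[R_t[a']]$-type set; more usefully, by Lemma \ref{lemma:prelim fl} (1) the set $I^{\uparrow}[R_t^{\downarrow}[x]]$ is the ideal $\{y\in X\mid \Box_t u\in y\}$, which is precisely the principal ideal $\lceil\Box_t u\rceil$ generated by the single element $\Box_t u\in\mathbb{L}$. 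Thus $I^{\uparrow}[R_t^{\downarrow}[x]]=\lceil\Box_t u\rceil$ is again an ideal generated by one lattice element, and applying the base-case computation to $\Box_t u$ yields $R_i^{\downarrow}[\lceil\Box_t u\rceil]=\{a\mid \Box_i(\Box_t u)\in a\}=\{a\mid \Box_s u\in a\}$, completing the induction.

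I expect the main obstacle to be the careful verification that $I^{\uparrow}[R_t^{\downarrow}[x]]$ collapses to the \emph{principal} ideal generated by the single element $\Box_t u$, rather than merely being some Galois-stable ideal; this is what lets the induction hypothesis be re-applied in the form ``$x$ generated by one element.'' This hinges on Lemma \ref{lemma:prelim fl} (1), which identifies $I^{\uparrow}[R_t^{\downarrow}[x]]$ with $\{y\mid \Box_t x\subseteq y\}$, combined with the induction hypothesis that $R_t^{\downarrow}[x]$ is governed by the single element $\Box_t u$; one must check that $\{y\mid \Box_t u\in y\}$ and $\{y\mid \Box_t x\subseteq y\}$ coincide when $x=\lceil u\rceil$, which again uses normality (meet-preservation) of the composite operator $\Box_t$ so that membership of the generator $\Box_t u$ suffices. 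Once this reduction is secured, the remaining steps are routine unfoldings of the canonical definitions.
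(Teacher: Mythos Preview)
Your inductive strategy is the same as the paper's, and the argument is essentially correct, but two of your justifications need repair.

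The claim ``$x^{\downarrow\uparrow}=x$ (a principal ideal is Galois-stable in the canonical context)'' is a type confusion: here $x$ is an \emph{element} of $X$, while $x^{\downarrow\uparrow}=\{y\in X\mid u\in y\}$ is a \emph{subset} of $X$ strictly larger than $\{x\}$ in general. What you actually need is $R_t^\downarrow[x^{\downarrow\uparrow}]=R_t^\downarrow[\{x\}]$, and this is exactly Lemma~\ref{lemma:comp1} applied to the $I$-compatible relation $R_t$ (this is how the paper handles it). The same confusion recurs when you identify $\{y\in X\mid\Box_t u\in y\}$ with the single ideal $\lceil\Box_t u\rceil$: the former is $\lceil\Box_t u\rceil^{\downarrow\uparrow}$, and Lemma~\ref{lemma:comp1} again is what lets you pass from $R_i^\downarrow$ of one to $R_i^\downarrow$ of the other.

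Second, Lemma~\ref{lemma:prelim fl}(1) is stated only for single agents $i$, so invoking it for a composite $t$ is not legitimate. But you do not need it: once the induction hypothesis gives $R_t^\downarrow[x]=\{b\mid\Box_t u\in b\}$, note that the principal filter generated by $\Box_t u$ is the least element of this set, whence a direct computation yields $(\{b\mid\Box_t u\in b\})^\uparrow=\{y\in X\mid\Box_t u\in y\}$. This is the paper's step marked $(\ast)$, and with it (plus the base case, which uses only monotonicity of $\Box_i$, not full meet-preservation) the induction closes exactly as you outline.
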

\begin{proof}
By induction on the length of $s\in S$. If $s=i$ then $aR_i x$ iff $a\in R_i^{\downarrow}[x]$ iff $a\cap\Box_i x\neq\varnothing$. Since $x$ is the ideal generated by $u$, we have that $u$ is the greatest element of $x$; hence, the monotonicity of $\Box_i$ implies that $\Box_iu$ is the greatest element of  $\Box_i x$. Since $a$ is a filter, and hence is upward-closed,  $a\cap\Box_i x\neq\varnothing$ is equivalent to $\Box_iu\in a$, which completes the proof of the base case.
Let us assume that $R^\downarrow_s[x]=\{b\in A\mid \Box_su\in b\}$, and  show that $R^\downarrow_{is}[x]=\{b\in A\mid \Box_{is}u\in b\}$.
By Lemma \ref{lemma:prelim fl} (3) and (4), and Lemma \ref{lemma:comp3}, $R_s$ is $I$-compatible for every $s\in S$. Let $z$ be the ideal generated by $\Box_s u$. Hence:
\begin{center}
\begin{tabular}{ r c ll}
 $R^\downarrow_{is}[x]$ 
& = & $R^\downarrow_{i} [I^\uparrow[R^\downarrow_s[x]]]$ & Lemmas \ref{lemma:comp1} and  \ref{lemma:comp4}\\
& = & $R^\downarrow_{i} [(\{b\in A\mid \Box_su\in b\})^\uparrow]$ & induction hypothesis\\
& = & $R^\downarrow_{i} [\{y\in X\mid \Box_s u\in y\}]$ & $(\ast)$\\
&  = & $R^\downarrow_{i} [z]$ & definition of $z$ \\
&   = & $\{a\mid \Box_i\Box_su\in a\}$ & base case\\
&    = & $\{a\mid \Box_{is}u\in a\}$. & definition of $\Box_{is}$\\
\end{tabular}
\end{center}
The identity marked with $(\ast)$ follows from the fact that the filter generated by $\Box_su$ is the smallest element of  $R^\downarrow_s[x]$.
% Hence, $I^\uparrow[R^\downarrow_s[x]] =(\{b\in A\mid \Box_su\in b\})^\uparrow=\{y\in X\mid \Box_s u\in y\}$, and so $R^\downarrow_i[\{y\in X\mid \Box_su\in y\}] =  R^\downarrow_i[z]$ where $z$ is the ideal generated by $\Box_su$. By the induction hypothesis, this means that $a\in R^\downarrow_i[z]$ if and only if $\Box_{is}u\in a$. These imply that $$a\in R^\downarrow_i[I^\uparrow[R_s^\downarrow[x]]]\iff \Box_{is}u\in a.$$ This concludes the proof.
\end{proof}

The {\em canonical enriched formal context} is defined by instantiating the construction above to the Lindembaum-Tarski algebra of $\mathbf{L}$. In this case, let $V$ be the valuation such that $\val{p}$ (resp.\ \descr{p}) is the set of the  filters (resp.\ ideals) to which $p$ belongs, and let $\mathbb{M} = (\mathbb{F}_{\mathbf{L}}, V)$ be the canonical model. Then the following holds for $\mathbb{M}$:
% These definitions make it possible, using Lemma \ref{mainlemmaforeazycompleteness}, to prove the following:
\begin{lemma}[Truth lemma]
For every $\phi\in \mathcal{L}$,
\begin{enumerate}
\item $\mathbb{M}, a\Vdash \phi$ iff $\phi\in a$;
\item $\mathbb{M}, x\succ \phi$ iff $\phi\in x$.
\end{enumerate}
\end{lemma}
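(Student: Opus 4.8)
The plan is to prove the Truth Lemma by simultaneous induction on the structure of $\phi\in\mathcal{L}$, establishing both clauses (1) and (2) together, since the semantics of the connectives intertwines membership and description. The base case $\phi = p$ holds by the very definition of the canonical valuation $V$, and the cases $\phi=\top,\bot$ follow because $\top$ (resp.\ $\bot$) belongs to every filter (resp.\ every ideal) and to no proper ideal (resp.\ filter), matching the clauses that interpret $\top$ and $\bot$ as the top and bottom formal concepts. For each inductive step I would first establish one of the two clauses directly, then derive the companion clause from it using the interpretation rules together with the fact that $\val{\phi}$ and $\descr{\phi}$ form a formal concept (so $\descr{\phi} = \val{\phi}^{\uparrow}$ and $\val{\phi} = \descr{\phi}^{\downarrow}$).

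For the conjunction case, clause (1) is immediate from the semantic clause $\mathbb{M},a\Vdash\phi\wedge\psi$ iff $\mathbb{M},a\Vdash\phi$ and $\mathbb{M},a\Vdash\psi$, combined with the induction hypothesis and the fact that filters are closed under finite meets (so $\phi\wedge\psi\in a$ iff $\phi\in a$ and $\psi\in a$); clause (2) then follows since $\descr{\phi\wedge\psi}=\val{\phi\wedge\psi}^{\uparrow}$. The disjunction case is dual, using clause (2) first and the closure of ideals under finite joins. The interesting case is $\phi = \Box_i\psi$. Here I would use Lemma \ref{mainlemmaforeazycompleteness}: taking $x$ to be the principal ideal generated by a representative $u$ of $\psi$ in the Lindenbaum--Tarski algebra, the lemma gives $R_i^{\downarrow}[x] = \{a\mid \Box_i u\in a\}$. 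By the induction hypothesis, $\descr{\psi}$ is the set of ideals containing $\psi$, and using $R_i$-compatibility (Lemma \ref{lemma:comp1}) together with Lemma \ref{lemma:comp4} one reduces $R_i^{\downarrow}[\descr{\psi}]$ to $R_i^{\downarrow}[\lceil\psi\rceil]$, where $\lceil\psi\rceil$ is the principal ideal of $\psi$; this then equals $\{a\mid \Box_i\psi\in a\}$, yielding clause (1). Clause (2) again follows by the formal-concept identity $\descr{\Box_i\psi}=\val{\Box_i\psi}^{\uparrow}$.

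The main obstacle I anticipate is the $\Box_i$-case, specifically the passage from $R_i^{\downarrow}[\descr{\psi}]$ to the principal-ideal form needed to invoke Lemma \ref{mainlemmaforeazycompleteness}. The subtlety is that $\descr{\psi}$, as supplied by the induction hypothesis, is the set of \emph{all} ideals containing $\psi$, not a single principal ideal, so I must justify that $R_i^{\downarrow}[\descr{\psi}] = R_i^{\downarrow}[\lceil\psi\rceil]$. This rests on $I$-compatibility of $R_i$ (Lemma \ref{lemma:comp1}, $R_i^{\downarrow}[Y]=R_i^{\downarrow}[Y^{\downarrow\uparrow}]$) and on identifying the Galois closure of the set of ideals containing $\psi$ with the principal ideal; concretely one checks that $\psi$ is the smallest element of $R_i^{\downarrow}[\descr{\psi}]^{\uparrow}$ in the sense used in the $(\ast)$ step of Lemma \ref{mainlemmaforeazycompleteness}. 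Once this reduction is in place, the remaining computations are routine applications of the lemmas already established, and weak completeness of $\mathbf{L}$ follows in the standard way: any non-derivable sequent $\phi\vdash\psi$ yields, via a prime filter--ideal separation in the Lindenbaum--Tarski algebra, a filter $a$ with $\phi\in a$ and $\psi\notin a$, which by the Truth Lemma refutes $\phi\vdash\psi$ in the canonical model.
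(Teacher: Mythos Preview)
Your overall strategy---simultaneous induction on $\phi$, handling $p,\top,\bot,\wedge,\vee$ as indicated, and deriving clause~(2) from clause~(1) via $\descr{\phi}=\val{\phi}^{\uparrow}$---is exactly the paper's approach, and it is correct.

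The one place where you diverge is the $\Box_i$-step, and there you make it harder than it is. The paper does not pass through Lemma~\ref{mainlemmaforeazycompleteness}, Lemma~\ref{lemma:comp1}, or Lemma~\ref{lemma:comp4} at all; it simply computes $R_i^{\downarrow}[\{x\mid \sigma\in x\}]$ straight from the definition of $R_i$. Concretely: if $\Box_i\sigma\in a$ then for every ideal $x$ containing $\sigma$ the witness $u=\sigma$ shows $aR_ix$; conversely, applying the defining condition to the \emph{principal} ideal generated by $\sigma$ yields some $u\leq\sigma$ with $\Box_iu\in a$, and monotonicity plus upward closure of $a$ give $\Box_i\sigma\in a$. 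That is the whole argument. Your plan to ``reduce $R_i^{\downarrow}[\descr{\psi}]$ to $R_i^{\downarrow}[\lceil\psi\rceil]$'' amounts to exactly this observation (the principal ideal is minimal in $\descr{\psi}$), so Lemma~\ref{lemma:comp1} is redundant and Lemma~\ref{lemma:comp4}---which concerns the $I$-product $R_{st}$---is simply irrelevant here. There is also a type slip in your ``main obstacle'' paragraph: $\descr{\psi}$ is a \emph{set} of ideals while $\lceil\psi\rceil$ is a single ideal, so ``identifying the Galois closure of the set of ideals containing $\psi$ with the principal ideal'' conflates subsets of $X$ with elements of $X$. Once you drop these detours, your proof coincides with the paper's.
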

\begin{proof}
By induction on $\phi$. We only show the inductive step for $\phi: = \Box_i\sigma$.
		\begin{center}
	\begin{tabular}{rcll}
	    $\mathbb{M}, a\Vdash \Box_i\sigma$ 
  &  iff & $a\in R^\downarrow_i[\descr{\sigma}]$\\
  &  iff & $a\in R^\downarrow_i[\{x\mid\sigma\in x\}]$ & induction hypothesis\\
&	iff & $a\in \{b\in A\mid \Box_i\sigma\in b\}$ & definition of $R_i$\\
	&iff & $\Box_i\sigma\in a$.\\
	    &&&\\
	     $\mathbb{M}, x\succ \Box_i\sigma$ 
  &  iff & $x\in \descr{\Box_i\sigma}$ \\ %for all $a$,  if $\mathbb{M}, a\Vdash \Box_i\sigma$ then $a I x$\\
&	iff & $x\in \val{\Box_i\sigma}^{\uparrow}$\\ % for all $a$, if $\Box_i\sigma\in a$ then $a I x$\\
&	iff & $x\in \{a\in A\mid \Box_i\sigma\in a\}^{\uparrow}$ & proof above\\
&iff & $\Box_i\sigma\in x$.\\
\end{tabular}
\end{center}
\end{proof}
The weak completeness of $\mathbf{L}$ follows from the lemma above with the usual argument.
\begin{proposition}[Completeness]
If $\varphi\vdash\psi$ is an $\mathcal{L}$-sequent which is not derivable in $\mathbf{L}$, then $\mathbb{M}\not\models \varphi\vdash\psi$.
\end{proposition}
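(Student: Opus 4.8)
The plan is to argue by contraposition in exactly the form the statement already suggests: I want to produce a single witness object $a$ in the canonical model $\mathbb{M} = (\mathbb{F}_{\mathbf{L}}, V)$ at which $\varphi$ holds but $\psi$ fails. Recall that $\mathbb{M}\models \varphi\vdash\psi$ means that for every object $a\in A$, if $\mathbb{M}, a\Vdash\varphi$ then $\mathbb{M}, a\Vdash\psi$. So to establish $\mathbb{M}\not\models\varphi\vdash\psi$ it suffices to exhibit one filter $a\in A$ with $\mathbb{M}, a\Vdash\varphi$ and $\mathbb{M}, a\not\Vdash\psi$. By the Truth Lemma, $\mathbb{M}, a\Vdash\chi$ iff $\chi\in a$ (reading this, as throughout the canonical construction, modulo the Lindenbaum--Tarski identification of a formula with its equivalence class), so the entire problem reduces to finding a lattice filter $a$ of the Lindenbaum--Tarski algebra $\mathbb{L}$ of $\mathbf{L}$ that contains $\varphi$ but not $\psi$.

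Next I would extract the separating datum from non-derivability. In $\mathbb{L}$ the order is given by $[\chi]\leq[\chi']$ iff $\chi\vdash\chi'$ is derivable in $\mathbf{L}$, with $\wedge,\vee,\bot,\top$ interpreted as meet, join, bottom and top. Since $\varphi\vdash\psi$ is by hypothesis \emph{not} derivable, we have $[\varphi]\not\leq[\psi]$ in $\mathbb{L}$. I would then take $a$ to be the principal lattice filter generated by $[\varphi]$, namely $a = \{[\chi]\mid [\varphi]\leq[\chi]\} = \{[\chi]\mid \varphi\vdash\chi \text{ derivable}\}$. This set is upward closed and closed under finite meets (if $[\varphi]\leq[\chi]$ and $[\varphi]\leq[\chi']$ then $[\varphi]\leq[\chi]\wedge[\chi']$), hence a genuine element of $A$. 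By construction $[\varphi]\in a$, while $[\psi]\in a$ would amount to $[\varphi]\leq[\psi]$, i.e.\ to derivability of $\varphi\vdash\psi$, which is excluded. Thus $\varphi\in a$ and $\psi\notin a$.

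Finally I would close the loop: applying the Truth Lemma to this $a$ yields $\mathbb{M}, a\Vdash\varphi$ and $\mathbb{M}, a\not\Vdash\psi$, so the semantic clause for sequents fails at $a$ and $\mathbb{M}\not\models\varphi\vdash\psi$, as required. I do not expect a genuine obstacle in this last proposition, since the substantive work has already been carried out upstream: that $\mathbb{F}_{\mathbf{L}}$ is a \emph{compositional} enriched formal context (Lemma \ref{lemmaforcorrectness}), so that the interpretation of $\mathcal{L}$-formulas is well defined on it, and the Truth Lemma itself. The only point demanding a little care is the one I have flagged above, namely that a \emph{principal} filter suffices here, with no appeal to prime or maximal filters: because the canonical frame is built from \emph{all} lattice filters and ideals with $aIx$ iff $a\cap x\neq\varnothing$, separating the single pair $[\varphi]\not\leq[\psi]$ requires only some filter containing $[\varphi]$ and omitting $[\psi]$, and $\uparrow[\varphi]$ does exactly this without invoking any distributivity-dependent prime-filter argument.
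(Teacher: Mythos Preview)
Your argument is correct and is precisely the ``usual argument'' the paper alludes to: the paper does not spell out a proof at all, stating only that weak completeness follows from the Truth Lemma by the usual argument, and your use of the principal filter $\uparrow[\varphi]$ in the Lindenbaum--Tarski algebra to separate $[\varphi]$ from $[\psi]$ is exactly that argument made explicit. Your observation that no prime-filter machinery is needed, because $A$ consists of all lattice filters, is apt and worth keeping.
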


\vspace{20pt}

 The  weak completeness for $\mathbf{L}_C$ is proved along the lines of \cite[Theorem 3.3.1]{fagin2003reasoning}.  Namely, for any $\mathcal{L}_C$-sequent $\varphi\vdash\psi$ that is not derivable in $\mathbf{L}_C$, we will construct a finite model $\mathbb{M}_{\varphi,\psi}$  such that $\mathbb{M}_{\varphi,\psi}\not\models\varphi\vdash\psi$. %i.e.~$\mathbb{M}_{\varphi,\psi},a\Vdash\varphi$ but $\mathbb{M}_{\varphi,\psi},a\not\Vdash\psi$ for some $a\in A$.
Let $\Phi_0$ be the set the elements of which are $\top$, $\bot$ and all the subformulas of $\varphi$ and $\psi$. 
Let 
\[\Phi_1:=\Phi_0\cup\bigcup_{i\in\Ag}\{\Box_i\sigma\mid\sigma\in\Phi_0\} \quad\quad \text{and} \quad\quad \Phi:=\{\bigwedge\Psi\mid\Psi\subseteq\Phi_1\}.\] 
By construction,  $\Phi$ is  finite.
Consider the canonical model $\mathbb{M}$ defined above, and  the following equivalence relations on $A$ and $X$: 
\[a\equiv_\Phi b \mbox{ iff } a\cap\Phi=b\cap\Phi \quad\quad \text{and}\quad\quad  x\equiv_\Phi y \mbox{ iff }x\cap\Phi=y\cap\Phi.\]
Since $\Phi$ is  finite, these equivalence relations induce finitely many equivalence classes on $A$ and $X$. In particular, considering $\vdash$ as a preorder on $\Phi$, each element $\overline{a}$ of $A/{\equiv_\Phi}$ is uniquely identified by some $\Phi$-{\em filter}, i.e.~a $\vdash$-upward closed subset  of $\Phi$ which is also closed under existing conjunctions.   Analogously, each  $\overline{x}\in X/{\equiv_\Phi}$ is uniquely identified by some $\Phi$-{\em ideal}, i.e.~a $\vdash$-downward closed subset  of $\Phi$ which is also closed under existing disjunctions.   In addition, since $\Phi$ is  closed under conjunctions, the $\Phi$-filter corresponding to each $\overline{a}$ is {\em principal}, i.e.~for each $\overline{a}\in A/{\equiv_\Phi}$ some $ \tau_{\overline{a}}\in \Phi$ exists such that $\overline{a}$ can be identified with the set of the formulas $\sigma\in \Phi$ such that  $\tau_{\overline{a}}\vdash \sigma$ is an $\mathbf{L}_C$-derivable sequent. %some $\vdash$-downward closed subset of $\Phi$ which is  closed under the disjunctions belonging to $\Phi$.
%However, since $\Phi$ is not closed under disjunctions, such subsets do not correspond to single formulas of $\Phi$.
In what follows, we abuse notation and let $\overline{a}$ and $\overline{x}$ respectively denote the principal $\Phi$-filter and the $\Phi$-ideal with which $\overline{a}$ and $\overline{x}$ can be identified, as discussed above. With this convention, we can write $\Box^\ast_i\overline{x}:=\{\Box_i\sigma\mid \sigma\in \overline{x}\}\cap\Phi$ and $(\Box_i^{-1})^\ast \overline{a}:=\{\tau\in\Phi\mid \Box_i \tau\in \overline{a}\}$. As a consequence of $\bot,\top\in\Phi_0$ and $\Box_i\top=\top$ we have that $\Box^\ast_i\overline{x}$ and $(\Box_i^{-1})^\ast \overline{a}$ are always non-empty. Let us define: $$\mathbb{M}_{\varphi,\psi}=(A/{\equiv_\Phi}, X/{\equiv_\Phi}, I_{\varphi,\psi}, R^{\varphi,\psi}_i, V_{\varphi,\psi}),$$ where
\begin{center}
\begin{tabular}{r c l c l}
$\overline{a}I_{\varphi,\psi}\overline{x}$ &   iff & $\overline{a}\cap \overline{x}\neq\varnothing$ &   iff & $\tau_{\overline{a}}\in  \overline{x}$ \\
$\overline{a}R^{\varphi,\psi}_i\overline{x}$ &   iff & $\Box^\ast_i \overline{x}\cap \overline{a}\neq\varnothing$ \\ &   iff & \multicolumn{3}{l}{$\tau_{\overline{a}}\vdash \Box_i \tau$ is $\mathbf{L}_C$-derivable for some $\tau\in \overline{x}$,}\\
 \end{tabular}
 \end{center}
 and $V_{\varphi,\psi}$ is any valuation such that  $\val{p}=\{\overline{a}\mid p\in\overline{a}\}$ and  $\descr{p}=\{\overline{x}\mid p\in\overline{x}\}$ for all $p\in\Prop\cap \Phi$. In what follows, we often abbreviate $I_{\varphi,\psi}$ as $I$. It readily follows from the definition that $\val{p}^{\uparrow\downarrow}=\val{p}$ and $\descr{p}^{\downarrow\uparrow}=\descr{p}$ for any $p\in\Prop\cap \Phi$; moreover,
%
%The definitions of $I_{\varphi,\psi},R^{\varphi,\psi}_i$ are analogous to those of the canonical model for $\mathbf{L}$.
$(R^{\varphi,\psi}_i)^{\downarrow}[\overline{x}] = \{\overline{a}\mid \overline{a}\cap \Box^\ast_i\overline{x}\neq \varnothing\}$, and $(R^{\varphi,\psi}_i)^{\uparrow}[\overline{a}] = \{\overline{x}\mid \overline{x}\cap (\Box_i^{-1})^\ast \overline{a}\neq \varnothing \}$. From this, similarly to Lemma \ref{lemma:prelim fl}, it immediately follows that:

\begin{lemma}
	\label{lemma:prelim fl2}
	For  any $\overline{a}$, $\overline{x}$ and $i\in \Ag$,
	\begin{enumerate}
		\item $I_{\varphi,\psi}^\uparrow[(R^{\varphi,\psi}_i)^\downarrow[\overline{x}]]=\{\overline{y}\mid \Box^\ast_i\overline{x}\subseteq \overline{y}\}$;
		\item $I_{\varphi,\psi}^\downarrow[(R^{\varphi,\psi}_i)^\uparrow[\overline{a}]]=\{\overline{b}\in A\mid (\Box_i^{-1})^\ast \overline{a}\subseteq\overline{b}\}$;
		\item $I_{\varphi,\psi}^{\downarrow}[I_{\varphi,\psi}^\uparrow[(R^{\varphi,\psi}_i)^\downarrow[\overline{x}]]]= \{\overline{b}\mid \Box^\ast_i\overline{x}\cap\overline{b}\neq\varnothing \} = (R^{\varphi,\psi}_i)^\downarrow[\overline{x}]$;
		\item $I_{\varphi,\psi}^{\uparrow}[I_{\varphi,\psi}^\downarrow[(R^{\varphi,\psi}_i)^{\uparrow}[\overline{a}]]]=\{\overline{y}\mid (\Box_i^{-1})^\ast \overline{a}\cap \overline{y}\neq \varnothing\} = R_i^{\uparrow}[\overline{a}]$.
	\end{enumerate}
\end{lemma}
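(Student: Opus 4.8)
The plan is to mirror the proof of Lemma~\ref{lemma:prelim fl}, reading $\Box^\ast_i\overline{x}$ and $(\Box_i^{-1})^\ast\overline{a}$ in place of $\Box_i x$ and $\Box_i^{-1}a$, and the incidence $\overline{a}I_{\varphi,\psi}\overline{x}\Leftrightarrow\tau_{\overline{a}}\in\overline{x}$ in place of $a\cap x\neq\varnothing$. Items (1) and (2) are obtained by unwinding the polarity maps $I_{\varphi,\psi}^\uparrow$ and $I_{\varphi,\psi}^\downarrow$ against the explicit descriptions of $(R^{\varphi,\psi}_i)^\downarrow[\overline{x}]$ and $(R^{\varphi,\psi}_i)^\uparrow[\overline{a}]$ recorded just before the statement. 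For item (1), the inclusion $\supseteq$ is immediate: if $\Box^\ast_i\overline{x}\subseteq\overline{y}$ and $\overline{a}\cap\Box^\ast_i\overline{x}\neq\varnothing$, then $\overline{a}\cap\overline{y}\neq\varnothing$, so every $\overline{a}\in(R^{\varphi,\psi}_i)^\downarrow[\overline{x}]$ satisfies $\overline{a}I_{\varphi,\psi}\overline{y}$. For $\subseteq$ one feeds in the principal $\Phi$-filter generated by a fixed $\Box_i\sigma\in\Box^\ast_i\overline{x}$; this is a legitimate object precisely because the intersection with $\Phi$ in the definition of $\Box^\ast_i\overline{x}$ guarantees $\Box_i\sigma\in\Phi$, and testing $\overline{y}$ against its generator forces $\Box_i\sigma\in\overline{y}$. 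Item (2) is the order-dual, using the $\Phi$-ideal generated by a fixed $\tau\in(\Box_i^{-1})^\ast\overline{a}$.

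For items (3) and (4), I would first observe that they are exactly the assertions that $(R^{\varphi,\psi}_i)^\downarrow[\overline{x}]$ and $(R^{\varphi,\psi}_i)^\uparrow[\overline{a}]$ are Galois-stable, i.e.\ that $R^{\varphi,\psi}_i$ is $I_{\varphi,\psi}$-compatible; the middle expressions are just the descriptions of these sets recalled before the statement. Substituting item (1) into the left-hand side of (3) turns the claim into $I_{\varphi,\psi}^\downarrow[\{\overline{y}\mid\Box^\ast_i\overline{x}\subseteq\overline{y}\}]=\{\overline{b}\mid\Box^\ast_i\overline{x}\cap\overline{b}\neq\varnothing\}$. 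The inclusion $\supseteq$ needs nothing beyond the definitions (any $\overline{y}$ containing $\Box^\ast_i\overline{x}$ meets every $\overline{b}$ that already meets $\Box^\ast_i\overline{x}$), and by Lemma~\ref{lemma: basic}(2) one inclusion is free in any case. The substantive inclusion is $\subseteq$: here I would pass to the smallest witness, the $\Phi$-ideal generated by $\Box^\ast_i\overline{x}$, and collapse it back onto $\Box^\ast_i\overline{x}$ using the monotonicity of $\Box_i$ together with the closure of the $\Phi$-ideal $\overline{x}$ under existing disjunctions, exactly as monotonicity is used in Lemma~\ref{lemma:prelim fl}(3). Item (4) is the order-dual, and is in fact smoother: $(\Box_i^{-1})^\ast\overline{a}$ is genuinely closed under meets within $\Phi$, because $\Phi$ is closed under conjunctions and because $\Box_i\tau_1\wedge\Box_i\tau_2\vdash\Box_i(\tau_1\wedge\tau_2)$ together with $\overline{a}$ being a filter yields $\Box_i(\tau_1\wedge\tau_2)\in\overline{a}$; hence the generated $\Phi$-filter is just the upward closure and collapses immediately.

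The main obstacle I anticipate is precisely the collapse in item (3). In the infinite canonical model of Lemma~\ref{lemma:prelim fl} the set $\Box_i x$ carries no finiteness cutoff, so the disjunction argument lands back inside $\Box_i x$ automatically; here $\Phi$ is closed under conjunctions but \emph{not} under disjunctions, so one must check with care that a finite join $\Box_i\sigma_1\vee\cdots\vee\Box_i\sigma_n$ of elements of $\Box^\ast_i\overline{x}$, which lies below $\Box_i(\sigma_1\vee\cdots\vee\sigma_n)$ with $\sigma_1\vee\cdots\vee\sigma_n\in\overline{x}$, is still dominated by a single element of $\Box^\ast_i\overline{x}$, so that any $\tau_{\overline{b}}$ in the generated ideal actually lies below some $\Box_i\sigma\in\Box^\ast_i\overline{x}$ and hence belongs to $\overline{b}$ by upward closure. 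This is the one place where the argument genuinely departs from Lemma~\ref{lemma:prelim fl}, and it is where I would spend the most care; everything else is bookkeeping with the polarity maps and with the identifications of $\overline{a}$ and $\overline{x}$ with their principal $\Phi$-filters and $\Phi$-ideals.
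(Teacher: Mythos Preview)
Your approach is exactly the paper's: the paper offers no proof beyond the remark ``similarly to Lemma~\ref{lemma:prelim fl}, it immediately follows that\ldots'', and you are spelling out precisely that analogy. Your treatment of items (1), (2) and (4) is correct, and you are right that (4) is the cleaner of the two closure statements because $\Phi$ is closed under conjunctions and $\Box_i$ preserves meets.

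You are also right to flag item (3) as the delicate point---this is a genuine asymmetry with Lemma~\ref{lemma:prelim fl} that the paper simply passes over. However, your proposed resolution contains a slip: you write that the join $\sigma_1\vee\cdots\vee\sigma_n$ lies in $\overline{x}$, but $\overline{x}$ is a subset of $\Phi$, and $\Phi$ is not closed under disjunctions, so this join need not belong to $\Phi$ at all (and hence not to $\overline{x}$); even when it does, there is no reason $\Box_i(\sigma_1\vee\cdots\vee\sigma_n)$ should land in $\Phi$, which is what membership in $\Box^\ast_i\overline{x}$ requires. So the collapse you sketch does not go through as stated. A cleaner way to organise the argument for the inclusion $\subseteq$ in (3) is to remember that each $\overline{y}$ is not an arbitrary $\Phi$-ideal but the trace $y\cap\Phi$ of a genuine lattice ideal $y$ of the Lindenbaum--Tarski algebra; the intersection $\bigcap\{\overline{y}\mid\Box^\ast_i\overline{x}\subseteq\overline{y}\}$ is therefore $\Phi\cap\lceil\Box^\ast_i\overline{x}\rceil$, and the question reduces to showing that any $\tau_{\overline{b}}\in\Phi$ lying below a finite join $\bigvee_j\Box_i\sigma_j$ with each $\Box_i\sigma_j\in\Box^\ast_i\overline{x}$ already lies below a single such $\Box_i\sigma$. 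This is the substantive step the paper suppresses, and it does require an argument specific to the shape of $\Phi$ (in particular to how $\Phi_1$ is built from $\Phi_0$); your instinct to isolate it is sound, but the justification you give is not yet the right one.
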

%\begin{proof}
%The right  to left inclusion of item (1) is immediate. Conversely, let $\Box_iu\notin y$ for some $y\in X$ and $u\in x$. Let $a$ be the filter generated by $\Box_iu$. Then $a\in R_i^\downarrow[x]$, but $a\cap y=\varnothing$. Hence $y\notin I^\uparrow[R_i^\downarrow[x]]$. The proof of item (2) is similar.
%\end{proof}
Items (3) and (4) of the lemma above immediately imply that:

\begin{lemma}
\label{lemma:M phi psi is compositional}
 $R^{\varphi,\psi}_i$ is $I_{\varphi,\psi}$-compatible for any $i\in \Ag$.
\end{lemma}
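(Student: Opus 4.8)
The plan is to unwind Definition \ref{def:I-compatible rel} and observe that the required Galois-stability is exactly what items (3) and (4) of Lemma \ref{lemma:prelim fl2} already supply. First I would recall that, by definition, $R^{\varphi,\psi}_i$ is $I_{\varphi,\psi}$-compatible precisely when $(R^{\varphi,\psi}_i)^{\downarrow}[\overline{x}]$ and $(R^{\varphi,\psi}_i)^{\uparrow}[\overline{a}]$ are Galois-stable for every $\overline{x}\in X/{\equiv_\Phi}$ and every $\overline{a}\in A/{\equiv_\Phi}$; that is, applying the relevant double polarity returns the same set.

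Next I would read item (3) of Lemma \ref{lemma:prelim fl2} as the chain $((R^{\varphi,\psi}_i)^{\downarrow}[\overline{x}])^{\uparrow\downarrow}=I_{\varphi,\psi}^{\downarrow}[I_{\varphi,\psi}^{\uparrow}[(R^{\varphi,\psi}_i)^{\downarrow}[\overline{x}]]]=(R^{\varphi,\psi}_i)^{\downarrow}[\overline{x}]$, which is by definition the Galois-stability of $(R^{\varphi,\psi}_i)^{\downarrow}[\overline{x}]$. Symmetrically, item (4) reads as $((R^{\varphi,\psi}_i)^{\uparrow}[\overline{a}])^{\downarrow\uparrow}=(R^{\varphi,\psi}_i)^{\uparrow}[\overline{a}]$, giving the Galois-stability of $(R^{\varphi,\psi}_i)^{\uparrow}[\overline{a}]$. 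Conjoining these two observations over all $\overline{x}$ and $\overline{a}$ discharges the definition, so the statement follows in one line.

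This is the finite-quotient analogue of the passage from Lemma \ref{lemma:prelim fl} (items (3) and (4)) to Lemma \ref{lemmaforcorrectness} for the canonical model: the genuine work sits in Lemma \ref{lemma:prelim fl2}, whose proof uses the monotonicity and meet-preservation of $\Box_i$ together with the non-emptiness of $\Box^\ast_i\overline{x}$ and $(\Box_i^{-1})^\ast\overline{a}$, exactly mirroring Lemma \ref{lemma:prelim fl}. Since Lemma \ref{lemma:prelim fl2} is already in hand, there is no remaining obstacle here; I would simply invoke its items (3) and (4) and appeal to Definition \ref{def:I-compatible rel}.
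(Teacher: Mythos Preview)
Your proposal is correct and matches the paper's own argument exactly: the paper simply notes that items (3) and (4) of Lemma~\ref{lemma:prelim fl2} immediately give the $I_{\varphi,\psi}$-compatibility of $R^{\varphi,\psi}_i$. Your unpacking of why those items amount to Galois-stability, and the parallel you draw with Lemma~\ref{lemma:prelim fl} and Lemma~\ref{lemmaforcorrectness}, is accurate and more explicit than the paper's one-line justification.
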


The following is key to the proof of the Truth Lemma.

\begin{lemma}\label{lemma:keyforfixedpointcomp}
	If $C(\sigma)\in\Phi$, then the following is an $\mathbf{L}_C$-derivable sequent for any $i\in\Ag$: $$\bigvee_{\overline{a}\in\val{C(\sigma)}}\tau_{\overline{a}} \ \vdash\ \Box_i(\bigvee_{\overline{a}\in \val{C(\sigma)}}\tau_{\overline{a}}).$$
\end{lemma}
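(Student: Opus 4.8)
The plan is to set this lemma up as the supplier of the second premise of the $C$-introduction rule, applied later to $\chi:=\bigvee_{\overline{a}\in\val{C(\sigma)}}\tau_{\overline{a}}$. Throughout I work inside the finite model $\mathbb{M}_{\varphi,\psi}$, abbreviating $I_{\varphi,\psi}$ and $R^{\varphi,\psi}_i$ by $I$ and $R_i$. By the $\vee$-introduction rule it suffices to derive $\tau_{\overline{a}}\vdash\Box_i\chi$ for each fixed $\overline{a}\in\val{C(\sigma)}$. The first move is to manufacture a single object-class that $\overline{a}$ ``$\Box_i$-dominates'': put $U:=\{\tau\in\Phi\mid \tau_{\overline{a}}\vdash\Box_i\tau\}$, which is nonempty since $\top\in U$ (as $\top\vdash\Box_i\top$ and $\top\in\Phi$), and let $\overline{b}$ be the object-class generated by $\tau_{\overline{b}}:=\bigwedge U\in\Phi$ (using that $\Phi$ is closed under conjunctions). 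Combining the conjunction rule with the normality axiom $\Box_i p\wedge\Box_i q\vdash\Box_i(p\wedge q)$ over the finite set $U$ then yields $\tau_{\overline{a}}\vdash\bigwedge_{\tau\in U}\Box_i\tau\vdash\Box_i\bigwedge_{\tau\in U}\tau=\Box_i\tau_{\overline{b}}$.

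The crux will be a \emph{successor-closure} property: $\overline{a}\in\val{C(\sigma)}$ implies $\overline{b}\in\val{C(\sigma)}$. Since $\val{C(\sigma)}=\bigcap_{s\in S}R^\downarrow_s[\descr{\sigma}]$, I fix $s\in S$; from $\overline{a}\in\val{C(\sigma)}$ and $is\in S$ I get $\overline{a}\in R^\downarrow_{is}[\descr{\sigma}]=R^\downarrow_i[I^\uparrow[R^\downarrow_s[\descr{\sigma}]]]$ by Lemma~\ref{lemma:comp4} (valid since $\descr{\sigma}$ is Galois-stable and the $R_s$ are $I$-compatible by Lemma~\ref{lemma:comp3}). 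Writing $W:=R^\downarrow_s[\descr{\sigma}]$, which is Galois-stable by Lemma~\ref{lemma:comp2}, this says $I^\uparrow[W]\subseteq R^\uparrow_i[\overline{a}]$, i.e.\ every $\overline{x}\in I^\uparrow[W]$ satisfies $\overline{a}R_i\overline{x}$, equivalently $\overline{x}\cap U\neq\varnothing$. Picking $\tau\in\overline{x}\cap U$ and using $\tau_{\overline{b}}=\bigwedge U\vdash\tau$ together with downward-closure of the $\Phi$-ideal $\overline{x}$ forces $\tau_{\overline{b}}\in\overline{x}$, that is $\overline{b}\,I\,\overline{x}$. As this holds for all $\overline{x}\in I^\uparrow[W]$, I obtain $\overline{b}\in I^\downarrow[I^\uparrow[W]]=W^{\uparrow\downarrow}=W=R^\downarrow_s[\descr{\sigma}]$; since $s\in S$ was arbitrary, $\overline{b}\in\val{C(\sigma)}$.

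To finish, $\overline{b}\in\val{C(\sigma)}$ means $\tau_{\overline{b}}$ is one of the disjuncts of $\chi$, so $\tau_{\overline{b}}\vdash\chi$; monotonicity of $\Box_i$ gives $\Box_i\tau_{\overline{b}}\vdash\Box_i\chi$, and cutting with $\tau_{\overline{a}}\vdash\Box_i\tau_{\overline{b}}$ yields $\tau_{\overline{a}}\vdash\Box_i\chi$. Discharging the $\vee$-rule over all $\overline{a}\in\val{C(\sigma)}$ then delivers $\chi\vdash\Box_i\chi$ (the degenerate case $\val{C(\sigma)}=\varnothing$, where $\chi=\bot$, being immediate from the axiom $\bot\vdash\Box_i\chi$). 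I expect the successor-closure step to be the main obstacle: unlike the classical argument of \cite{fagin2003reasoning}, the negation-free and implication-free setting forbids the usual derivation of ``$\tau_{\overline{a}}\vdash\Box_i$ of the join of its successors'', so the passage from $\overline{a}\in R^\downarrow_{is}[\descr{\sigma}]$ to $\overline{b}\in R^\downarrow_s[\descr{\sigma}]$ must be wrung purely out of the two-sorted structure. It is precisely the interplay of Lemma~\ref{lemma:comp4}, the definition of $R_i$ in $\mathbb{M}_{\varphi,\psi}$, and the downward-closure of $\Phi$-ideals that carries this, so verifying that each of these ingredients applies verbatim to the finite quotient model is where care will be needed.
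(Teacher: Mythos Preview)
Your argument is correct, but it takes a different route from the paper's. The paper does not construct a canonical successor class $\overline{b}$; instead, it works on the dual side. It observes that the $\Phi$-ideal $\overline{y}$ generated by $\chi:=\bigvee_{\overline{a}\in\val{C(\sigma)}}\tau_{\overline{a}}$ is (the smallest) element of $\descr{C(\sigma)}$, and then invokes the already-established soundness of the axiom $C(\sigma)\vdash\Box_iC(\sigma)$ in the compositional model $\mathbb{M}_{\varphi,\psi}$ (Proposition~\ref{prop:soundness} together with Lemma~\ref{lemma:M phi psi is compositional}) to obtain $\val{C(\sigma)}\subseteq R_i^\downarrow[\descr{C(\sigma)}]$. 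This immediately gives $\overline{a}\,R_i\,\overline{y}$, which by the definition of $R_i$ in the finite model is exactly the existence of some $\tau\in\Phi$ with $\tau_{\overline{a}}\vdash\Box_i\tau$ and $\tau\vdash\chi$.

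Your approach instead unrolls that soundness step by hand: you package the set $U=\{\tau\in\Phi\mid\tau_{\overline{a}}\vdash\Box_i\tau\}$ into a single object-class $\overline{b}$ and prove the ``successor-closure'' $\overline{b}\in\val{C(\sigma)}$ directly from the decomposition $R_{is}^\downarrow[\descr{\sigma}]=R_i^\downarrow[I^\uparrow[R_s^\downarrow[\descr{\sigma}]]]$. This is a legitimate and fully self-contained argument, and it has the pedagogical virtue of making explicit the two-sorted analogue of the classical ``$R_i$-successor stays in the fixpoint'' step. The paper's route is shorter because it reuses Proposition~\ref{prop:soundness} as a black box; yours is more elementary in that it does not appeal to soundness at all, at the cost of a longer computation. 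Both reach the same witness (your $\tau_{\overline{b}}$ is precisely an element of $\overline{y}\cap U$), so the two arguments are dual presentations of the same phenomenon.
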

\begin{proof}
	Fix  $i\in\Ag$ and $\overline{a}\in\val{C(\sigma)}$. Since $\Box_i$ is monotone, it is enough to show that  some $\tau\in \Phi$ exists such that 
	\[\tau_{\overline{a}}\ \vdash \ \Box_i\tau \quad \mbox{ and } \quad \tau\ \vdash \ \bigvee_{\overline{a}\in\val{C(\sigma)}}\tau_{\overline{a}}.\]
By definition of $R^{\varphi,\psi}_i$, this is equivalent to showing that $\overline{a}R^{\varphi,\psi}_i \overline{y}$, where $\overline{y}$ is the $\Phi$-ideal generated by $\bigvee_{\overline{a}\in\val{C(\sigma)}}\tau_{\overline{a}}$. Notice that $\descr{C(\sigma)} = \val{C(\sigma)}^\uparrow$ is the collection of all the $\Phi$-ideals $\overline{x}$ such that $\tau_{\overline{b}}\in \overline{x}$ for every $\overline{b}\in\val{C(\sigma)}$. Hence, $\overline{y}\in \descr{C(\sigma)}$ (and is in fact the smallest element in $\descr{C(\sigma)}$). Thus, to  prove that $\overline{a}R^{\varphi,\psi}_i \overline{y}$, it is enough to show that $\val{C(\sigma)}\subseteq (R^{\varphi,\psi}_s)^\downarrow[\descr{C(\sigma)}]$. This immediately follows from the fact that $(R^{\varphi,\psi}_s)^\downarrow[\descr{C(\sigma)}] = \val{\Box_iC(\sigma)}$, that $C(\sigma)\vdash \Box_iC(\sigma)$ is an $\mathbf{L}_C$-derivable sequent, that  $\mathbf{L}_C$ is sound w.r.t.~compositional models (cf.~Proposition \ref{prop:soundness}), and $\mathbb{M}_{\varphi,\psi}$ is a compositional model (cf.~Lemma \ref{lemma:M phi psi is compositional}).
\end{proof}

\begin{lemma}[Truth lemma]
	For every $\tau\in\Phi_0$,
	\begin{enumerate}
		\item $M_{\varphi,\psi}, \overline{a}\Vdash \tau$ iff $\tau\in \overline{a}$;
		\item $M_{\varphi,\psi}, \overline{x}\succ \tau$ iff $\tau\in \overline{x}$.
	\end{enumerate}
\end{lemma}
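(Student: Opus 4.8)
The plan is to prove clauses (1) and (2) simultaneously by induction on the construction of $\tau\in\Phi_0$, exploiting that they are linked through the interpretation map: the clauses generated by $\wedge$, $\Box_i$ and $C$ satisfy $\descr{\tau}=\val{\tau}^{\uparrow}$, whereas the clause generated by $\vee$ satisfies $\val{\tau}=\descr{\tau}^{\downarrow}$. Thus at each step it is enough to identify the extent $\val{\tau}$ (resp.\ the description $\descr{\tau}$) of the ``generating'' connective as $\{\overline{a}\mid\tau\in\overline{a}\}$ (resp.\ $\{\overline{x}\mid\tau\in\overline{x}\}$), and to read off the companion clause by the $\uparrow$-to-membership step already used for $\mathbf{L}$: whenever $\tau\in\Phi$ and $\val{\tau}=\{\overline{a}\mid\tau\in\overline{a}\}$, the principal $\Phi$-filter generated by $\tau$ together with the downward closure of $\Phi$-ideals gives $\val{\tau}^{\uparrow}=\{\overline{x}\mid\tau\in\overline{x}\}$. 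The base cases $p,\top,\bot$ are read off from $V_{\varphi,\psi}$ and from $\bot,\top\in\Phi_0$ together with the closure properties of $\Phi$-filters and $\Phi$-ideals.

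For $\wedge$ and $\vee$, the identities $\val{\sigma_1\wedge\sigma_2}=\{\overline{a}\mid\sigma_1\wedge\sigma_2\in\overline{a}\}$ and $\descr{\sigma_1\vee\sigma_2}=\{\overline{x}\mid\sigma_1\vee\sigma_2\in\overline{x}\}$ follow from the induction hypothesis and the closure of $\Phi$-filters (resp.\ $\Phi$-ideals) under existing conjunctions (resp.\ disjunctions) and under the order, using that $\sigma_1\wedge\sigma_2$ and $\sigma_1\vee\sigma_2$ lie in $\Phi$ as subformulas. The modal case mirrors Lemma \ref{lemma:prelim fl2}: unfolding $\val{\Box_i\sigma}=(R^{\varphi,\psi}_i)^{\downarrow}[\descr{\sigma}]$, substituting $\descr{\sigma}=\{\overline{x}\mid\sigma\in\overline{x}\}$, and using the monotonicity of $\Box_i$ to pass from a witness $\rho\in\overline{x}$ with $\rho\vdash\sigma$ to $\Box_i\rho\vdash\Box_i\sigma$, one obtains $\val{\Box_i\sigma}=\{\overline{a}\mid\Box_i\sigma\in\overline{a}\}$. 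This requires $\Box_i\sigma\in\Phi$, which holds for every $\sigma\in\Phi_0$ precisely because $\Phi_1$ adjoins the formulas $\Box_i\rho$; note the computation is valid for all $\Box_i\sigma\in\Phi$, which the $C$-case will use.

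The case $\tau=C(\sigma)$ is the crux. Put $W:=\{\overline{a}\mid C(\sigma)\in\overline{a}\}$ and recall $\val{C(\sigma)}=R_C^{\downarrow}[\descr{\sigma}]=\bigcap_{s\in S}R_s^{\downarrow}[\descr{\sigma}]$, which is well defined since $\mathbb{M}_{\varphi,\psi}$ is compositional (Lemma \ref{lemma:M phi psi is compositional}), so that $R_C=\bigcap_{s\in S}R_s$ is $I_{\varphi,\psi}$-compatible. To show $\val{C(\sigma)}\subseteq W$, let $\chi:=\bigvee_{\overline{a}\in\val{C(\sigma)}}\tau_{\overline{a}}$: Lemma \ref{lemma:keyforfixedpointcomp} gives $\chi\vdash\Box_i\chi$ for each $i$, and $\val{C(\sigma)}\subseteq\val{\Box_i\sigma}=\{\overline{b}\mid\Box_i\sigma\in\overline{b}\}$ (modal case, $\sigma\in\Phi_0$) gives $\tau_{\overline{a}}\vdash\Box_i\sigma$ for every disjunct, hence $\chi\vdash\bigwedge_{i\in\Ag}\Box_i\sigma$; the $C$-introduction rule then yields $\chi\vdash C(\sigma)$, so $\tau_{\overline{a}}\vdash C(\sigma)$ and $C(\sigma)\in\overline{a}$ for every $\overline{a}\in\val{C(\sigma)}$.

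For $W\subseteq\val{C(\sigma)}$ I would prove $W\subseteq R_s^{\downarrow}[\descr{\sigma}]$ for every $s\in S$ by induction on the length of $s$ and then intersect. When $s=i$, for $\overline{a}\in W$ and $\overline{x}$ with $\sigma\in\overline{x}$, derivability of $C(\sigma)\vdash\Box_i\sigma$ and $\Box_i\sigma\in\Phi$ give $\Box_i\sigma\in\overline{a}$, whence $\overline{a}R^{\varphi,\psi}_i\overline{x}$. When $s=it$, Lemma \ref{lemma:comp4} gives $R_{it}^{\downarrow}[\descr{\sigma}]=R_i^{\downarrow}[I^{\uparrow}[R_t^{\downarrow}[\descr{\sigma}]]]$, and the inductive hypothesis $W\subseteq R_t^{\downarrow}[\descr{\sigma}]$ yields $I^{\uparrow}[R_t^{\downarrow}[\descr{\sigma}]]\subseteq I^{\uparrow}[W]$, so by antitonicity it suffices to prove $W\subseteq R_i^{\downarrow}[I^{\uparrow}[W]]$. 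The decisive point is that every $\overline{x}\in I^{\uparrow}[W]$ contains $C(\sigma)$: such $\overline{x}$ contains $\tau_{\overline{b}}$ for all $\overline{b}\in W$, and the principal $\Phi$-filter generated by $C(\sigma)\in\Phi$ is one such $\overline{b}$, with generator $C(\sigma)$. Then $C(\sigma)\vdash\Box_iC(\sigma)$ and $\Box_iC(\sigma)\in\Phi$ witness $\overline{a}R^{\varphi,\psi}_i\overline{x}$ for every $\overline{a}\in W$. This $\supseteq$ direction is the main obstacle: semantic $C$ is an intersection over the infinite set $S$ that cannot be matched formula-by-formula with $\Box_s\sigma$ (these leave $\Phi$ for long $s$), so the co-inductive propagation of $C(\sigma)$-membership along the relations, anchored by the fact that the generator of $C(\sigma)$'s principal filter lies in $W$, is the step requiring the most care; the companion clause (2) for $C(\sigma)$ then follows as in the first paragraph.
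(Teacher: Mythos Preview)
Your proposal is correct and follows essentially the same approach as the paper. The only cosmetic difference is in the converse direction for $C(\sigma)$: the paper fixes the principal $\Phi$-filter $\overline{b}$ generated by $C(\sigma)$, shows $\overline{b}\in (R^{\varphi,\psi}_s)^\downarrow[\descr{\sigma}]$ for all $s\in S$ by induction on $|s|$, and only then passes to arbitrary $\overline{a}\in W$ via Galois-stability of $(R^{\varphi,\psi}_s)^\downarrow[\descr{\sigma}]$; you instead carry the whole set $W$ through the induction, using $\overline{b}\in W$ to guarantee that every $\overline{x}\in I^{\uparrow}[W]$ contains $C(\sigma)$. The two arguments rely on the same key facts ($\Box_i\sigma,\Box_iC(\sigma)\in\Phi$ and the derivable sequents $C(\sigma)\vdash\Box_i\sigma$, $C(\sigma)\vdash\Box_iC(\sigma)$) and are interchangeable.
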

\begin{proof}
	We only show the inductive step for $\tau: = C(\sigma)$ for some $\sigma\in \Phi_0$. If  $M_{\varphi,\psi}, \overline{a}\Vdash C(\sigma)$, i.e.~$\overline{a}\in \val{C(\sigma)} = \bigcap_{s\in S}(R^{\varphi,\psi}_s)^\downarrow[\descr{\sigma}]$, then $\overline{a}\in(R^{\varphi,\psi}_i)^\downarrow[\descr{\sigma}] = \val{\Box_i\sigma}$ for any $i\in \Ag$. %By definition $M_{\varphi,\psi}, \overline{a}\Vdash\Box_i\sigma$ holds.
By definition,  $\sigma\in \Phi_0$ implies that  $\Box_i\sigma\in\Phi$. Moreover:
 \begin{center}
	\begin{tabular}{rcll}
	      $\overline{a}\in\val{\Box_i\sigma}$ 
  &  iff & $\overline{a}\in (R^{\varphi,\psi}_i)^\downarrow[\descr{\sigma}]$\\
  &  iff & $\overline{a}\in (R^{\varphi,\psi}_i)^\downarrow[\{\overline{x}\mid\sigma\in \overline{x}\}]$ & induction hypothesis\\
&	iff & $\overline{a}\in \{\overline{b}\mid \Box_i\sigma\in \overline{b}\}$ & definition of $R^{\varphi,\psi}_i$\\
&	iff & $\Box_i\sigma\in \overline{a}$.\\
\end{tabular}
\end{center}
 This implies that  $\tau_{\overline{a}}\vdash\bigwedge_{i\in \Ag}\Box_i\sigma$.  By Lemma \ref{lemma:keyforfixedpointcomp} and the fact that $\mathbf{L}_{C}$ is closed under the following rule:
		\begin{displaymath}
 \frac{\chi\vdash \bigwedge_{i\in \Ag}\Box_i\varphi\quad \{\chi\vdash \Box_i\chi\mid i\in \Ag\}}{\chi\vdash C\left(\varphi\right)}
	\end{displaymath}
we conclude that  $\tau_{\overline{a}}\vdash C(\sigma)$, i.e.~$C(\sigma)\in\overline{a}$.

For the converse direction,  let $\overline{b}$ be the principal $\Phi$-filter generated by $C(\sigma)$. %set of formulas  $\tau\in \Phi$ such that  $C(\sigma)\vdash \tau$ is an $\mathbf{L}_C$-derivable sequent.
Let us show, by induction on the length of $s$, that $\overline{b}\in(R^{\varphi,\psi}_s)^\downarrow[\descr{\sigma}]$ for all $s\in S$. Indeed, for the base case, $\Box_i\sigma\in\Phi$ and  $C(\sigma)\vdash\Box_i\sigma$ being an $\mathbf{L}_C$-derivable sequent imply that $\Box_i\sigma\in\overline{b}$, which implies that $\overline{b}\in (R^{\varphi,\psi}_i)^\downarrow[\descr{\sigma}]$. For the inductive step, assume that $\overline{b}\in (R^{\varphi,\psi}_s)^\downarrow[\descr{\sigma}]$. Then every element of $I^\uparrow[(R^{\varphi,\psi}_s)^\downarrow[\descr{\sigma}]]$ contains $C(\sigma)$. Moreover, $\Box_iC(\sigma)\in\overline{b}$, because $\Box_iC(\sigma)\in\Phi$ and  $C(\sigma)\vdash\Box_i C(\sigma)$ is an $\mathbf{L}_C$-derivable sequent. Hence, by  Lemma \ref{lemma:comp4}, \[\overline{b}\in (R^{\varphi,\psi}_i)^\downarrow[I^\uparrow[(R^{\varphi,\psi}_s)^\downarrow[\descr{\sigma}]]] = (R^{\varphi,\psi}_{is})^\downarrow[\descr{\sigma}],\] which concludes the proof that $\overline{b}\in(R^{\varphi,\psi}_s)^\downarrow[\descr{\sigma}]$ for all $s\in S$. To finish the proof, for any $\overline{a}$, if $C(\sigma)\in\overline{a}$, then $\overline{b}\subseteq\overline{a}$, which implies, since $(R^{\varphi,\psi}_s)^\downarrow[\descr{\sigma}]$ is Galois-stable for any $s\in S$, that $\overline{a}\in (R^{\varphi,\psi}_s)^\downarrow[\descr{\sigma}]$ for every $s\in S$. This shows that $M_{\varphi,\psi}, \overline{a}\Vdash C(\sigma)$. As to item (2),
\begin{center}
	\begin{tabular}{rcl}
		$M_{\varphi,\psi}, \overline{x}\succ C(\sigma)$ & iff & $\overline{x}\in\val{C(\sigma)}^\uparrow$\\
		& iff & $\overline{x}\cap\overline{a}\neq\varnothing$ for all $\overline{a}\in\val{C(\sigma)}$\\
		& iff & $C(\sigma)\in\overline{x}$.\\
	\end{tabular}
\end{center}
\end{proof}

%Weak completeness of $\mathbf{L}_C$ follows from the lemma above in the standard way.
The weak completeness of $\mathbf{L}_C$ follows from the lemma above with the usual argument.
\begin{proposition}[Completeness]
If $\varphi\vdash\psi$ is an $\mathcal{L}_C$-sequent which is not derivable in $\mathbf{L}_C$, then $M_{\varphi,\psi}\not\models \varphi\vdash\psi$.
\end{proposition}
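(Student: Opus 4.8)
The plan is to run the standard canonical-model refutation argument, now that all the heavy lifting has been done: the finite quotient model $M_{\varphi,\psi}$ has been shown to be a compositional model (Lemma \ref{lemma:M phi psi is compositional}), and the Truth Lemma has been established for every $\tau\in\Phi_0$. Since $\Phi_0$ contains $\top$, $\bot$, and all subformulas of $\varphi$ and $\psi$ --- and in particular $\varphi$ and $\psi$ themselves --- the Truth Lemma applies directly to both endpoints of the sequent we wish to refute.

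First I would exhibit a single point of the model that separates $\varphi$ from $\psi$. Consider the principal $\Phi$-filter generated by $\varphi$; by the discussion preceding the definition of $M_{\varphi,\psi}$, this determines an equivalence class $\overline{a}_\varphi\in A/{\equiv_\Phi}$ with $\tau_{\overline{a}_\varphi}=\varphi$, consisting of all $\sigma\in\Phi$ such that $\varphi\vdash\sigma$ is $\mathbf{L}_C$-derivable. Since $\varphi\vdash\varphi$ is derivable, we have $\varphi\in\overline{a}_\varphi$; since by hypothesis $\varphi\vdash\psi$ is \emph{not} $\mathbf{L}_C$-derivable, we have $\psi\notin\overline{a}_\varphi$.

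Next I would apply the Truth Lemma to transfer these membership facts into satisfaction facts at $\overline{a}_\varphi$: from $\varphi\in\overline{a}_\varphi$ we obtain $M_{\varphi,\psi}, \overline{a}_\varphi\Vdash\varphi$, and from $\psi\notin\overline{a}_\varphi$ we obtain $M_{\varphi,\psi}, \overline{a}_\varphi\not\Vdash\psi$. Finally, unwinding the interpretation of sequents --- namely, $\mathbb{M}\models\phi\vdash\psi$ holds iff every object that is a member of $\phi$ is also a member of $\psi$ --- the existence of the witness $\overline{a}_\varphi$, which is a member of $\varphi$ but not of $\psi$, yields $M_{\varphi,\psi}\not\models\varphi\vdash\psi$, as required.

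I do not expect any real obstacle in this final step: the genuinely delicate work lies upstream, in guaranteeing that $M_{\varphi,\psi}$ is a well-defined compositional model (so that soundness, Proposition \ref{prop:soundness}, can be invoked inside the Truth Lemma) and in the fixed-point argument of Lemma \ref{lemma:keyforfixedpointcomp}, which is what makes the inductive step for $C(\sigma)$ in the Truth Lemma go through. The one point worth double-checking is that $\varphi,\psi\in\Phi_0$, so that the Truth Lemma --- stated only for formulas in $\Phi_0$ --- indeed covers both of them; this holds because every formula counts as a subformula of itself.
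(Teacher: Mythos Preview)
Your proposal is correct and is precisely the ``usual argument'' that the paper invokes after the Truth Lemma without spelling out; the paper gives no proof beyond that phrase, so your exhibition of the principal $\Phi$-filter $\overline{a}_\varphi$ generated by $\varphi$ and the application of the Truth Lemma at that point is exactly what is intended. The only cosmetic point is that $\tau_{\overline{a}_\varphi}$ need only be $\vdash$-equivalent to $\varphi$ rather than literally equal, but this does not affect the argument.
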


\end{document}